\documentclass[dvipsnames,letterpaper,11pt]{article}
\usepackage[margin=1in]{geometry}
\usepackage{booktabs} 
\usepackage[ruled]{algorithm2e} 
\usepackage[hidelinks]{hyperref}
\usepackage{amsmath,amsthm}
\usepackage{cleveref}
\usepackage{stmaryrd}
\usepackage[longnamesfirst,sort]{natbib}
\usepackage[font={footnotesize}]{caption}
\usepackage{subcaption}
\usepackage{float}
\usepackage{dsfont}
\usepackage{cancel}
\usepackage{xcolor}
\usepackage{soul}
\usepackage{thmtools} 
\usepackage{thm-restate}

\SetAlFnt{\small}
\SetAlCapFnt{\small}
\SetAlCapNameFnt{\small}
\SetAlCapHSkip{0pt}
\IncMargin{-\parindent}
\usepackage{enumitem}
\usepackage{bbold}

\newcommand{\NN}{\mathbb{N}}
\newcommand{\RR}{\mathbb{R}}

\newcommand{\calA}{\mathcal{A}}
\newcommand{\calC}{\mathcal{C}}

\newcommand{\calT}{\mathcal{T}}
\newcommand{\calF}{\mathcal{F}}

\newcommand{\calI}{\mathcal{I}}
\newcommand{\calO}{\mathcal{O}}
\newcommand{\calM}{\mathcal{M}}
\newcommand{\calE}{\mathcal{E}}



\newtheorem{theorem}{Theorem}

\newtheorem{claim}{Claim}

\crefname{claim}{Claim}{Claims}
\crefname{obs}{Observation}{Observation}
\crefname{cons}{constraint}{constraints}
\crefname{eq}{equality}{equalities}
\crefname{eqn}{equation}{equations}
\crefname{ineq}{inequality}{inequalities}

\creflabelformat{cons}{(#2#1#3)}
\creflabelformat{eq}{(#2#1#3)}
\creflabelformat{eqn}{(#2#1#3)}
\creflabelformat{ineq}{(#2#1#3)}

\begin{document}

\title{Near-feasible Fair Allocations in Two-sided Markets\footnote{This work was partially funded by the ANID (Chile) through Grant FONDECYT 1241846.}}

\author{Javier Cembrano\thanks{Department of Algorithms and Complexity, Max Planck Institute for Informatics.}
\and Andr\'es Moraga\thanks{School of Engineering, Pontificia Universidad Cat\'olica de Chile.}
\and Victor Verdugo\thanks{Institute for Mathematical and Computational Engineering, Pontificia Universidad Católica de Chile.} \thanks{Department of Industrial and Systems Engineering, Pontificia Universidad Cat\'olica de Chile.}
}
\date{}
\maketitle

\begin{abstract}
We study resource allocation in two-sided markets from a fundamental perspective and introduce a general modeling and algorithmic framework to effectively incorporate the complex and multidimensional aspects of fairness. 
Our main technical contribution is to show the existence of a range of near-feasible resource allocations parameterized in different model primitives to give flexibility when balancing the different policymaking requirements, allowing policy designers to fix these values according to the specific application.
To construct our near-feasible allocations, we start from a fractional resource allocation and perform an iterative rounding procedure to get an integer allocation.
We show a simple yet flexible and strong sufficient condition for the target feasibility deviations to guarantee that the rounding procedure succeeds, exhibiting the underlying trade-offs between market capacities, agents' demand, and fairness.
To showcase our framework's modeling and algorithmic capabilities, we consider three prominent market design problems: school allocation, stable matching with couples, and political apportionment.
In each of them, we obtain strengthened guarantees on the existence of near-feasible allocations capturing the corresponding fairness notions, such as proportionality, envy-freeness, and stability.
\end{abstract}


\section{Introduction}

Resource allocation is a fundamental task that lies at the intersection of economics, computer science, and operations, where the objective is to develop efficient policies to distribute scarce resources among various agents or entities. 
This naturally results in a two-sided market: One side consists of the agents, while the other side encompasses the different resources that need to be allocated, all while adhering to market capacities and demands.

Many real-world applications involve indivisible resources, a large number of entities, and combinatorial constraints, making the design of efficient resource allocations increasingly complex from both a modeling and computational perspective. 
This two-sided paradigm applies to various situations, including the allocation of students to schools, the composition of political representative bodies, the allocation of doctors to hospitals, organ donation systems, job markets, and online marketplaces, among many others; see, e.g.,~\citet{immorlica2023online,haeringer2018market,roth2018marketplaces,balinski2010fair}.

While efficiency is a key goal in market design, ensuring fairness in resource allocation is arguably one of the most critical factors in assessing the quality of a policy. It is essential to evaluate how effectively the policy treats different groups within a population, ensuring that resources, opportunities, and outcomes are distributed equitably. 
Historical inequalities and biases have often led to disparities among various social, economic, or demographic dimensions, prompting policymakers to enhance policies with proven fairness and efficiency guarantees to rectify these imbalances. 
Typically, these efforts rely on market-specific characteristics to design effective policies, taking advantage of the rich structure of the problem space. However, many existing models and algorithms lack the robustness to address new fairness considerations.

Computational challenges stem from multiple sources.
On one hand, complex combinatorial constraints result in NP-hard problems, requiring a trade-off between efficiency and computational effort. 
Additionally, incorporating fairness requirements across the different dimensions not only adds an extra layer of computational difficulty but also often leads to infeasible scenarios due to deep existing impossibilities in reconciling efficiency with fairness.
To navigate these challenges, implementing {\it near-feasible} solutions, i.e., solutions that may slightly violate some constraints of the problem, is an effective way to overcome these policymaking difficulties.  
From an optimization standpoint, the goal is to provide provable guarantees regarding how close these solutions are to the set of feasible policies.

\subsection{Our Contribution and Techniques} 

In this work, we study resource allocation in two-sided markets from a fundamental perspective and introduce a general modeling and algorithmic framework to effectively incorporate the complex and multidimensional aspects of fairness. 
Within this framework, we provide a rounding theorem to compute near-feasible allocations with strong and flexible approximation guarantees. 
In the following, we summarize our contributions and discuss the implications of our results.

\paragraph{\bf A Multidimensional Resource Allocation Framework.} 
Our first contribution is the introduction of a general framework to model resource allocation problems in two-sided markets.
In this model, there is a set of agents and indivisible resources with finite capacities to be allocated across the agents.
Multiple dimensions characterize the set of agents; each agent may belong to a group within each dimension.
This model feature allows for a multidimensional representation of the agents, e.g., age, gender, ethnicity, and socio-economic or demographic aspects.

The market is provided with target utilities, depending on the dimensions and the groups, that capture the different fairness requirements.
Then, the feasible points of an integer program define the set of feasible resource allocations, though this set may be empty in some cases. 
In our near-feasible resource allocations, the utilities for each group are approximated within a given factor while requiring an extra amount of resources; however, this additional amount is regulated in two ways: by bounding the maximum extra capacity per resource and controlling the total additional market capacity.
In \Cref{sec:prelims}, we provide the formal definition of our multidimensional model, its instances, and our notion of near-feasible resource allocations.

\paragraph{\bf A Rounding Theorem with Flexible Guarantees.} In our multidimensional resource allocation model, the set of feasible resource allocations is represented by the feasible points of an integer program.
From a fundamental point of view, the instances of our multidimensional resource allocation problem can be lifted to get a corresponding hypergraph encoding the agents, resource bundles, dimensions, and groups.
Naturally, strict requirements on the allocation quality might yield infeasible regions for the corresponding integer program.
Our main technical contribution establishes the existence of near-feasible resource allocations, with adjustable approximation guarantees. By tuning model parameters, policymakers can balance and enforce deviation bounds tailored to specific applications.

To construct our near-feasible allocations, we start from a fractional solution of the natural linear relaxation and apply an iterative rounding procedure to derive a feasible solution for the integer program.
We show a simple yet flexible and strong sufficient condition for the target feasibility deviations to guarantee that the iterative rounding procedure computes a near-feasible resource allocation (\Cref{thm:main}).
This condition highlights the underlying trade-offs among market capacities, agents' demand, and utility distribution across dimensions.
In \Cref{sec:main}, we present the formal statement of our main theorem, as well as the rounding algorithm and its analysis. 

\paragraph{\bf Computing Near-feasible Fair Allocations.}
To showcase our framework's modeling and algorithmic capabilities, we consider three prominent market design problems: school allocation, stable matching with couples, and political apportionment.
In each of them, ruled by the sufficient condition of our main rounding theorem, we obtain a range of strengthened guarantees on the existence of near-feasible allocations under general fairness requirements notions, including the proportionality and envy-freeness objectives. 

\paragraph{School allocation.} In \Cref{subsec:group-fairness}, we propose a general multidimensional allocation model which, in particular, captures the school allocation setting introduced by \citet{procacciaetal} as single-dimensional instances.
We model group fairness by a convex optimization-driven approach, and using our rounding \Cref{thm:main} starting from a feasible fractional solution, we get a general guarantee on the existence of near-feasible multidimensional resource allocations (\Cref{cor:group-fairness}).
For the case of proportional fairness, as a corollary, we can directly accommodate the policy-maker priorities by trading off the utility approximation and maximum resource capacity augmentation to get slight constant deviations, which differentiates our result from previous single-dimensional work on group utilities and total allocation excess~\citep{procacciaetal,SMNS24approximation}.

Then, we introduce a new multidimensional notion of group envy-freeness where the ratio between the total utility of a group for its allocation and its total utility for any other group's allocation should not be smaller than the ratio between the sizes of these groups. 
For this natural notion, we provide near-feasible resource allocations with approximation guarantees that do not depend on the number of agents, thus breaking the impossibility found by \citet{procacciaetal} for their more stringent single-dimensional envy-freeness notion (\Cref{prop:envy-free-homog}).

\paragraph{Stable matching with couples.} 
While stable matchings are guaranteed to exist in the basic single-demand setting~\citep{gale1962college}, this is not the case for pairs and, more generally, multi-demand settings. 
To illustrate how our framework can also accommodate stability requirements, we show in \Cref{subsec:couples} how by using our rounding \Cref{thm:main}, we can directly recover a recent guarantee by \citet{nearfeasiblematchingcouples} for the existence of near-feasible stable allocations.
We further extend this setting to handle group fairness requirements and allocation stability to guarantee the existence of near-feasible, stable, and fair allocations (\Cref{prop:couples}).

Our result allows us to incorporate both stability and fairness while keeping the deviations from resource capacities bounded by small constants.
For example, in the single-dimensional case, one can guarantee deviations from group fairness of at most five by increasing the resource capacities by at most four, while keeping the deviation from the total aggregate capacity at two.
Our new near-feasibility results in \Cref{subsec:couples} concern the interaction between stability and fairness, thereby contributing to the efforts on designing two-sided markets in a multidimensional environment under more involved stability concepts. 
We believe our algorithmic framework will help provide near-feasible allocations in further stable matching settings under complex fairness requirements.

\paragraph{Political apportionment.}
In the multidimensional apportionment problem, introduced by \citet{balinskidemange1989a,balinskidemange1989b} for the case of two dimensions and extended by \citet{multidimensionalpoliticalappointment} to an arbitrary number of dimensions, the goal is to allocate the seats of a representative body proportionally across several dimensions. 
Classic apportionment methods, e.g., divisor methods or Hamilton's method, aim to assign seats across districts proportionally to their population or across parties proportionally to their electoral support~\citep{balinski2010fair,pukelsheim2017}.

However, in addition to political parties and geographical divisions, natural dimensions include demographics of the elected members such as gender or ethnicity; see, e.g., \citet{cembranocorreadiazverdugo2021,mathieu2022,bonet2024explainable}.
In \Cref{subsec:apportionment}, using our rounding theorem, we improve over the result by \citet{multidimensionalpoliticalappointment} to get enhanced near-feasibility guarantees for multidimensional apportionment (\Cref{prop:multi-apportionment}).
Remarkably, we can further bound the total deviation from the house size, while the rounding algorithm by \citeauthor{multidimensionalpoliticalappointment} only controls the deviations on each dimension.

\subsection{Further Related Work}

From an algorithmic perspective, our approach to finding near-feasible allocations closely relates to the classic discrepancy minimization problem.
In its basic form, there is a fractional vector $x$ with entries in $[0,1]$ such that $Ax=b$ for some binary matrix $A$ and an integer vector $b$; the goal is to find a binary rounding $y$ minimizing the maximum additive deviation $\|Ax-Ay\|_{\infty}$.

Since the seminal iterative-rounding work by \citet{beck1981integer}, the problem has been extensively studied, including different norms to measure deviations, probabilistic guarantees, general combinatorial constraints, and online variants~\citep{bukh2016improvement,rothvoss2017constructive,bansal2010constructive,bansal2017algorithmic,bansal2019algorithm,lovett2015constructive,bansal2020online}.
Iterative rounding has been a successful tool for designing discrepancy and near-feasibility algorithms, and we refer to the book by \citet{lau2011iterative} for a primer on classic applications.

Very recently, there has been a series of works on near-feasible stable allocations; related to our work is the one by \citet{MGMTSC-groupstabilitycomplexconstr} on group-stability and \citet{OPResearch-stablewithprop} on proportionality, where they provide the existence of near-feasible allocations via rounding methods; in \Cref{sec:discussion}, we briefly discuss the differences between our general fairness approach and the aforementioned works.
In a related line, there are recent works about optimal capacity design for school matching~\citep{capacityvariation}, refuge settlement \citep{delacretaz2016refugee,andersson2020assigning,ahani2021placement}, and healthcare rationing \citep{pathak2021fair,aziz2021efficient}.
In general, the study of fairness in allocation problems has a rich history, and the concepts of proportionality and envy-freeness have been key design objectives; we refer to \citet{moulin2004fair} for an extensive treatment of the fair division theory.

\section{Multidimensional Capacitated Resource Allocation}\label{sec:prelims}

We let $\RR_{+}$ (resp.\ $\RR_{++}$) denote the non-negative (resp.\ strictly positive) reals, $\NN$ denote the strictly positive integers, and $\NN_0= \NN \cup \{0\}$.
We write $[n]$ as a shortcut for $\{1,2,\ldots,n\}$ and $[n]_0$ as a shortcut for $\{0,1,\ldots,n\}$. 
In the {\it multidimensional capacitated resource allocation} problem, or MCRA for short, an instance is structured in the following way:
\begin{enumerate}[label=(\Roman*), leftmargin=15pt]
\item {\bf Agents, resources, and groups.} We have a set $A$ of \textit{entities} or \textit{agents}, a set $R$ of {\it resources}, and a set $A'\subseteq A$ of \textit{binding} agents, i.e., agents that {\it must} be allocated a resource. 
The agents are organized in groups according to $d$ {\it dimensions}, namely, for each dimension $\ell\in [d]$ we have $k_{\ell}$ different groups $G_{\ell,1},\ldots,G_{\ell,k_{\ell}}\subseteq A$ of agents such that for every $i,j\in [k_\ell]$ with $i\neq j$, we have $G_{\ell,i} \cap G_{\ell,j} = \emptyset$, i.e., an agent can be part of at most one group on each dimension.\label{instance:agents}
\item {\bf Demands and capacities.} Each agent $a\in A$ has a {\it demand} $\omega_a\in \NN$ for the number of resources that should receive, and an $\omega_a$-{\it bundle} is any function $q\colon R\to \NN_0$ such that $\sum_{r\in R} q(r)=\omega_a$, where $q(r)$ represents the number of units of resource $r$ that are allocated to agent $a$.
We denote by $\calT_a$ the set of $\omega_a$-bundles, and we write $\omega^*$ for the maximum demand of an agent, i.e., $\max_{a\in A}\omega_a$. 
We finally have a resource \textit{capacity function} $c\colon R\to \NN$, specifying the number of available units of a resource.\label{instance:capacities}
\end{enumerate}
An instance of MCRA is determined by a tuple $\calI=(A,A',R,G,\omega,c)$ ruled by \ref{instance:agents}-\ref{instance:capacities}.
Given an instance of MCRA, we let $\calE = \{(a,q): a\in A, q\in \calT_a\}$ denote the feasible agent-bundle pairs.
A {\it resource allocation} is a mapping $x\colon \calE\to \{0,1\}$ that satisfies the following conditions:
\begin{align}
        \textstyle\sum_{q\in \calT_a} x(a,q) & = 1 \qquad\quad \text{ for every } a\in A',\label[cons]{cons:agents-eq}\\
        \textstyle\sum_{q\in \calT_a} x(a,q) & \leq 1 \qquad\quad \ \text{for every } a\in A \setminus A',\label[cons]{cons:agents-ub}\\
	\textstyle\sum_{a\in A} \sum_{q\in \calT_a} q(r)\cdot x(a,q) & \leq c(r) \quad\quad \text{for every } r\in R.\label[cons]{cons:resources-cap}
\end{align}
Intuitively, we have $x(a,q)=1$ if bundle $q$ is assigned to agent $a$, and $x(a,q)=0$ otherwise.
The set of constraints \eqref{cons:agents-eq} ensures that each agent in $A'$ is allocated exactly one bundle, whereas \eqref{cons:agents-ub} ensures that every other agent is allocated at most one bundle.
The set of constraints \eqref{cons:resources-cap} ensures that at most $c(r)$ units of resource $r$ are allocated. 

We often consider a relaxed notion of resource allocations as a starting point for our algorithms, where we allow bundles to be fractionally allocated.
We say that $x$ is a {\it fractional resource allocation} when it satisfies \eqref{cons:agents-eq}-\eqref{cons:resources-cap} and $x(a,q)\in [0,1]$ for each $(a,q)\in \calE$, i.e., integrality is relaxed.
For a fractional resource allocation $x$, we let $\calA(x)=\{a\in A: |\{q\in \calT_a: x(a,q)>0\}| \geq 2\}$
denote the agents with more than one bundle (fractionally) allocated.

\subsection{Utilities and Near-feasible Allocations}\label{subsec:def-approximation} 
Throughout this work, we consider agent-dependent \textit{utility functions} suitable for our applications. For a collection of utility functions $u_a\colon \calT_a \to \RR_+$ for each $a\in A$, and a (integral or fractional) mapping $x\colon \calE \to[0,1]$, we let
$U_{\ell,i}(x) = \sum_{a\in G_{\ell,i}}\sum_{q\in \calT_a} u_a(q) \cdot x(a,q)$
denote the total utility of group $G_{\ell,i}$, for each $\ell\in [d]$ and $i\in [k_\ell]$.
We further let $U^{*}_{\ell,i} =\max\{u_a(q): a\in G_{\ell,i} \text{ and } q\in \calT_a\}$ denote the maximum utility of an agent in $G_{\ell,i}$.
For notation simplicity, we denote the collection of utility functions $(u_a)_{a\in A}$ by $u$.

\paragraph{\bf Near-feasible allocations.} Let $\alpha\in \NN^d_0$ and $\delta,\Delta\in \NN_0$.
Given an instance $\calI=(A,A',R,G,\omega,c)$ of MCRA, a fractional resource allocation $x$, and a collection of utility functions $u$, a mapping $y\colon \calE\to \{0,1\}$ is an $(\alpha,\delta,\Delta)$-{\it approximation of $x$ with respect to $u$} if it satisfies \eqref{cons:agents-eq}-\eqref{cons:agents-ub} and
\begin{align}
    | U_{\ell,i}(y) - U_{\ell,i}(x)| & <  \alpha_\ell \cdot U^{*}_{\ell,i} \quad \text{for every } \ell\in [d],\ i\in [k_\ell],\label[ineq]{ineq:groups-app}\\
    \textstyle|\sum_{a\in A} \sum_{q\in \calT_a} q(r)\cdot (y(a,q) - x(a,q))| & < \delta \quad\quad\quad\;\; \text{for every } r\in R,\label[ineq]{ineq:resources-app}\\
    \textstyle|\sum_{a\in A} \sum_{q\in \calT_a} \omega_a \cdot (y(a,q)-x(a,q)) | & < \omega^* \cdot \Delta.\label[ineq]{ineq:resources-total-app}
\end{align}
In such near-feasible allocation, the utility of each group $G_{\ell,i}$ deviates strictly less than $\alpha_\ell \cdot U^{*}_{\ell,i}$ from the utilities in $x$, the deviation on the assigned agents to each resource are strictly less than $\delta$, and the deviation on the total number of allocated resources is strictly less than $\omega^*\cdot \Delta$.
As we will illustrate in \Cref{sec:applications}, in some applications the specific structure of the problem guarantees that some of the left-hand sides of these inequalities are integer values, so the maximum possible deviations become $\alpha_\ell \cdot U^{*}_{\ell,i}-1$, $\delta-1$, or $\omega^*\cdot \Delta-1$, respectively.
Finally, we say that a mapping $y\colon \calE\to \{0,1\}$ is a \textit{rounding} of a fractional resource allocation $x$ if $y(a,q)=0$ whenever $x(a,q)=0$ and $y(a,q)=1$ whenever $x(a,q)=1$. 

\section{A Rounding Theorem}\label{sec:main}

In this section, we present our main technical result.
On an intuitive level, the idea is to start from a fractional resource allocation satisfying a certain fairness notion, which is known to be kept upon rounding. 
Our result then states the existence of a resource allocation obtained by rounding this fractional allocation that guarantees small deviations from the resource capacities and from arbitrarily defined utilities.
While having small deviations from capacities constitutes a natural goal in this setting, the definition of the group utilities will allow our model to capture fairness in different applications.
Remarkably, our result gives a high degree of flexibility for choosing the maximum deviations, allowing policy designers to fix these values according to the specific application.

\begin{theorem}\label{thm:main}
    Let $\calI=(A,A',R,G,\omega,c)$ be an instance of MCRA, let $x$ be a fractional resource allocation for this instance, and let $u_a\colon \calT_a \to \RR_+$ be a utility function for each agent $a\in A$. 
    Fix $\psi=1$ if $\calA(x)\neq\emptyset$ or $d\leq 1$, and $\psi\in \{0,1\}$ arbitrarily otherwise.
    Let $\alpha\in \NN^d_0$ and $\delta\in \NN_0$ be such that 
    \begin{equation}
        \frac{\mathds{1}_{\psi=1}}{2} + \sum_{\ell \in [d]}\frac{1}{\alpha_\ell+1} + \frac{\omega^*}{\delta+1} \leq 1\label[ineq]{ineq:deviations}
    \end{equation}
    and let $\Delta \in \NN_0$ be such that one of the following holds:
    \begin{enumerate}[label=\normalfont(\roman*)]
        \item $\Delta\geq 2$ and $\psi=1$; or
        \item \cref{ineq:deviations} holds strictly and $\Delta\geq {1}/\big({1-\big(\frac{\mathds{1}_{\psi=1}}{2} + \sum_{\ell \in [d]} \frac{1}{\alpha_\ell+1} + \frac{\omega^*}{\delta+1} \big)}\big) -1$.
    \end{enumerate}
    Then, there exists a rounding $y$ of $x$ that is an $(\alpha,\delta,\Delta)$-approximation of $x$ with respect to $u$.
    Furthermore, $y$ can be found in time polynomial in $|A|$, $k_1,\ldots,k_d$, and $|R|^{\omega^*}$.
\end{theorem}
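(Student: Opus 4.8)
The plan is to prove \Cref{thm:main} by iterative rounding in the style of \citet{beck1981integer} and the iterative-relaxation framework (see~\citealp{lau2011iterative}), adapted to the bipartite agent--bundle incidence structure underlying MCRA as in near-feasible stable matching arguments \citep{nearfeasiblematchingcouples}. We build a sequence of fractional resource allocations $x^{(0)}=x,x^{(1)},\dots$, each a rounding-compatible refinement of the previous one (agreeing with $x$ wherever $x$ is $0$ or $1$, keeping \eqref{cons:agents-eq}--\eqref{cons:agents-ub} in force), and maintain an \emph{active set} of constraints initialized with all group-utility constraints, all per-resource constraints, and the total-capacity constraint. The key invariant is that every active constraint is \emph{frozen} at its value in $x$, so a constraint starts drifting only once deactivated. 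We stop when $x^{(t)}$ is integral and output $y=x^{(t)}$; since \eqref{cons:agents-eq}--\eqref{cons:agents-ub} always hold and each deactivation will cost a bounded drift, $y$ will be the desired $(\alpha,\delta,\Delta)$-approximation.

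At a generic step we restrict the active constraint matrix to the coordinates $(a,q)$ with $x^{(t)}(a,q)\in(0,1)$. If this restricted matrix has a nontrivial kernel, we move $x^{(t)}$ along a kernel direction until some such coordinate becomes integral; this strictly decreases the number $f$ of fractional coordinates and leaves all active-constraint and agent-constraint values unchanged. Otherwise the restricted matrix has full column rank, so $f$ is at most the number of active constraints with a nonzero restricted row plus the number of agents whose agent-constraint is tight and has a nonzero restricted row; the latter are exactly the agents in $\calA(x^{(t)})$ (an agent with a single fractional bundle cannot lie in $A'$ and has a slack agent-constraint). We then deactivate a constraint according to the following drop rules, each forcing a final drift below its target: a utility constraint $(\ell,i)$ when at most $\alpha_\ell$ fractional coordinates are incident to it (drift $<\alpha_\ell U^{*}_{\ell,i}$, using $u_a(q)\le U^{*}_{\ell,i}$); a per-resource constraint $r$ when $\sum q(r)$ over incident fractional coordinates is at most $\delta$ (drift $<\delta$); and the total-capacity constraint when at most $\Delta$ fractional coordinates remain (drift $<\omega^*\Delta$, using $\omega_a\le\omega^*$). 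Since a deactivated constraint can change only through coordinates fractional at deactivation time, each moving strictly less than $1$ toward $\{0,1\}$, these bounds are precisely \eqref{ineq:groups-app}--\eqref{ineq:resources-total-app}.

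The crux is a \emph{fractional token} lemma showing that, under \eqref{ineq:deviations}, one of the two cases always applies. Give each of the $f$ fractional coordinates one token: coordinate $(a,q)$ sends $\tfrac12$ to $a$'s agent-constraint if $a\in\calA(x^{(t)})$, $\tfrac1{\alpha_\ell+1}$ to the utility constraint of $a$'s group on each dimension $\ell$ on which $a$ belongs to a group, $\tfrac{q(r)}{\delta+1}$ to each per-resource constraint $r$, and $\tfrac1{\Delta+1}$ to the total constraint. If neither a kernel direction nor a droppable constraint exists, then every counted tight agent-constraint has at least two incident fractional coordinates, every active utility constraint at least $\alpha_\ell+1$, every active per-resource constraint incident weight at least $\delta+1$, and the active total constraint has $f\ge\Delta+1$ coordinates; so each counted element collects at least $1$ token, forcing $f=\#\{\text{counted elements}\}$ and equality throughout. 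But each coordinate $(a,q)$ sends out at most $\tfrac12\mathds{1}_{a\in\calA(x^{(t)})}+\sum_{\ell\in[d]}\tfrac1{\alpha_\ell+1}+\tfrac{\omega_a}{\delta+1}+\tfrac1{\Delta+1}\le 1$ by \eqref{ineq:deviations} (using $\omega_a\le\omega^*$ and, in case (ii), the extra slack bounding the $\tfrac1{\Delta+1}$ term); any strict inequality here, or any over-collection, yields $f>\#\{\text{counted elements}\}$, a contradiction. This is where the $\Delta$-dichotomy enters: in case (ii), the strict form of \eqref{ineq:deviations} together with $\Delta\ge 1/(1-\mathrm{LHS})-1$ supplies the needed slack directly; in case (i) the total-capacity constraint is handled separately, using that it is the sum of the per-resource constraints (hence redundant while all of those are active) and that, once the procedure reaches a residual fractional configuration that is a bounded union of alternating paths and cycles, rounding it perturbs the total allocation by less than $2\omega^*\le\omega^*\Delta$.

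I expect the fractional token lemma, and especially its boundary cases, to be the main obstacle: one must calibrate the drop thresholds and token weights so that \eqref{ineq:deviations} is exactly the inequality ruling out a stuck configuration, dispatch the equality case (where the restricted system is square) by exploiting residual slack from agents missing on some dimension, from $\omega_a<\omega^*$, or from the redundancy of the total constraint, and verify that all accumulated drifts remain \emph{strictly} below the integer thresholds, which is what the $+1$ denominators encode. The remaining pieces are routine: checking the $\psi$ dichotomy (when $\calA(x)=\emptyset$ and $d\ge2$, each fractional agent owns a single fractional bundle, so no agent-constraint is counted and the $\tfrac12$ tokens — hence the $\mathds{1}_{\psi=1}$ term — drop out; when $\calA(x)=\emptyset$, $x$ may already be integral), and bounding the running time, since there are $O(|\calE|)$ iterations with $|\calE|\le|A|\cdot|R|^{\omega^*}$, each solving a linear system of that size.
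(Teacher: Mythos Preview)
Your overall plan---iterative rounding with constraint-dropping, driven by a fractional-token counting argument---is the paper's approach, and your handling of case~(ii) and of the $\psi$ dichotomy is essentially correct. The gap is in case~(i) ($\Delta\ge 2$, $\psi=1$).

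Your two-part argument there does not close. First, the total-capacity row (restricted to the fractional coordinates) equals the sum of the per-resource rows only when \emph{every} resource in $R(\calF^t)$ is still active; once any per-resource constraint has been dropped---which can and typically does happen well before the fractional support shrinks to size $\Delta$---redundancy fails and the total constraint must be counted. Second, the ``bounded union of alternating paths and cycles'' picture is specific to the bipartite case $\omega^*=1$ and has no analogue in the hypergraph setting with bundles of size $\ge 2$; the claimed $<2\omega^*$ endgame drift is unsupported. With your drop rule for the total constraint (deactivate when $f\le\Delta$) and \cref{ineq:deviations} holding only with equality, each fractional coordinate's outgoing tokens exceed $1$ by $\tfrac{1}{\Delta+1}$, and nothing you have written absorbs this.

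The missing idea is to change the deactivation \emph{threshold} for the total constraint: rather than waiting until $f\le\Delta$, deactivate it as soon as the number of agents in $A(\calF^t)\setminus\tilde A^t$---agents carrying a fractional coordinate but whose agent-sum is \emph{not} frozen at $1$---drops to at most $\Delta$. The drift bound \eqref{ineq:resources-total-app} still holds with this laxer threshold, because agents in $\tilde A^t$ have their allocation-sum pinned by the agent equality and therefore contribute zero to the total deviation regardless of how many fractional bundles they carry; only the $\le\Delta$ agents outside $\tilde A^{t^*}$ can drift, each by strictly less than $1$ in their sum, giving $<\omega^*\Delta$ total. And now, whenever the total constraint is still active there are at least $\Delta+1\ge 3$ agents each owning a fractional coordinate that does \emph{not} pay a $\tfrac12$ token to any counted agent-constraint; this slack of at least $\tfrac32$ more than covers the $+1$ for the total constraint, yielding $\calC^t\le|\calF^t|-1$ outright and sidestepping the equality case entirely in this regime. (Relatedly, your identification of the counted agent-constraints with all of $\calA(x^{(t)})$ is slightly off: only those with sum exactly $1$ are tight and counted, and it is precisely the uncounted ones that supply the slack above.)
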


The parameter $\psi$ in the theorem captures whether we impose or not, in a certain setting, constraints over the fractional allocation associated with a single agent.
If some agent has two or more associated fractional variables under allocation $x$, i.e., $\calA(x)\neq \emptyset$, this has to be the case to ensure that a fractional allocation is produced; otherwise, we have the freedom to impose these constraints and bound the total deviation $\Delta$ or not impose them and get smaller violations for groups and individual resources.

To prove this theorem, we construct an iterative rounding algorithm that starts from a fractional resource allocation and iteratively rounds its components while ensuring that no constraint is violated by too much.
We first introduce some notation we use for its description and analysis.
For an instance $\calI=(A,A',R,G,\omega,c)$, a subset $\calF\subseteq \calE$, and a mapping $x\colon \calF\to (0,1)$, we write
\begin{align*}
    \calF(a) & = \{(a',q)\in \calF: a'=a\} \quad \qquad \text{for every } a\in A,\\
    \calF(\ell,i) & = \{(a,q)\in \calF: a\in G_{\ell,i}\} \;\;\,\qquad \text{for every } \ell\in [d], i\in [k_\ell],\\
    \calF(r) & = \{(a,q)\in \calF: q(r)\geq 1\} \qquad \;\text{for every } r\in R.
\end{align*}
These concepts allow for a natural hypergraph interpretation, where the vertices are all agents, resources, and groups, and each tuple $(a,q)\in \calF$ is associated with a hyperedge comprising agent $a$, all groups to which $a$ belongs, and all resources in $q$.
With this point of view, the sets $\calF(a)$, $\calF(\ell,i)$, and $\calF(r)$ correspond to the incident hyperedges of a given vertex.
Although this is not a proper hypergraph as a bundle may contain several copies of a resource, we sometimes use this interpretation for informal explanations and discussions.
We often refer to a pair $(a,q)\in \calF$ simply as $e$, in particular, whenever specification about $a$ or $q$ is not needed. 
In slight abuse of notation, we write $u(e)$ instead of $u_a(q)$ when $e=(a,q)$. 
We also write $A(\calF) = \{a\in A: \calF(a)\geq 1\}$, $R(\calF) = \{r\in R: \calF(r)\geq 1\}$, and $G_\ell(\calF) = \{i\in [k_\ell]: \calF(\ell,i)\geq 1\}$ for each $\ell\in [d]$, to refer to the agents, resources, and groups in the $\ell$th dimension with incident hyperedges.

We now introduce the linear program solved by our algorithm in each iteration.
We let $\calI=(A,A',R,G,\omega,c)$ be an instance, $u$ be a collection of utility functions, $\tilde{x}\colon \calF\to (0,1)$ be a mapping on $\calF\subseteq \calE$, $\tilde{A}\subseteq A$ be a subset of binding agents, $\tilde{G}\subseteq \{(\ell,i): \ell\in [d], i\in [k_\ell]\}$ be a subset of binding groups, $\tilde{R}\subseteq R$ be a subset of binding resources, and $\chi\in \{0,1\}$ be a binary value. We consider the following linear program $\mathrm{LP}(\calI,u,\tilde{x},\tilde{A},\tilde{G},\tilde{R},\chi)$ with variables $y\colon \calF\to [0,1]$:
\begin{align}
    \textstyle\sum_{e\in \calF(a)} y(e) & = 1 \qquad\qquad\qquad\qquad\qquad\quad \text{for every } a\in \tilde{A},\label[cons]{cons:bind-agents-no-dev}\\
    \textstyle\sum_{e\in \calF(a)} y(e) & \leq 1  \qquad\qquad\qquad\qquad\qquad\;\;\; \text{for every } a\in A(\calF)\setminus \tilde{A},\label[cons]{cons:agents-no-dev}\\
    \textstyle\sum_{e\in \calF(\ell,i)} u(e)\cdot y(e) & = \textstyle\sum_{e\in \calF(\ell,i)} u(e)\cdot \tilde{x}(e) \qquad\quad\  \text{for every } \ell\in [d],\ i\in \tilde{G}_\ell,\label[cons]{cons:groups-no-dev}\\
    \textstyle\sum_{(a,q)\in \calF(r)} q(r) \cdot y(a,q) & = \textstyle\sum_{(a,q)\in \calF(r)} q(r) \cdot \tilde{x}(a,q)\quad\ \;\text{for every } r\in \tilde{R},\label[cons]{cons:resources-no-dev}\\
    \textstyle\sum_{(a,q)\in \calF} \omega_a \cdot y(a,q) & = \textstyle\sum_{(a,q)\in \calF} \omega_a \cdot \tilde{x}(a,q) \quad\qquad\; \text{if } \chi=1,\label[cons]{cons:resources-total-no-dev}\\
    \textstyle0\leq y(e)& \leq 1 \qquad\qquad\qquad\qquad\qquad\quad \text{for every } e\in \calF.\label[cons]{cons:domain-no-dev}
\end{align}
In each iteration of our algorithm, $\calF$ will represent the remaining tuples $(a,q)$ with an associated fractional value $\tilde{x}(a,q)$ in the allocation. 
Solving this linear program will then ensure to maintain a total allocation of $1$ for all agents in $\tilde{A}$ and at most one for the other agents, the same total utility as in $\tilde{x}$ for every group $G_{\ell,i}$ with $i\in \tilde{G}_\ell$, the same number of allocated agents for every resource in $\tilde{R}$, and the same total number of allocated agents when $\chi=1$.
The definitions of these sets of agents, groups, and resources in each iteration will be made to balance two simultaneous objectives: (i) that agents, groups, and resources with many associated fractional values do not deviate from their current allocation, and (ii) that there are more variables than linearly independent equality constraints so that the algorithm makes progress and eventually terminates. Similarly, the binary value $\chi$ will be set to zero only when there are few fractional values, to ensure that the total deviation from the number of resources allocated initially is kept under control.

Our rounding algorithm, formally described in \Cref{alg:main}, starts from a given fractional allocation $x$ (which we also call $x^0$) and iteratively solves $\mathrm{LP}(\calI,u,\tilde{x},\tilde{A},\tilde{G},\tilde{R},\chi)$ for $\tilde{x}$ being the fractional entries of $x$, each time only considering the necessary constraints to prevent deviations beyond the allowed ones.
At step $t$, the algorithm starts from an allocation $x^t$ and constructs an allocation $x^{t+1}$ by leaving the integral entries of $x^t$ unchanged and taking, for the fractional entries $\tilde{x}^t$, an extreme point of this linear program.

This program considers, for instance, \cref{cons:groups-no-dev} for each group $G_{\ell,i}$ such that there are $\alpha_\ell+1$ or more fractional values $\tilde{x}^{t}(a,q)$ for some $a\in G_{\ell,i}$ and $q\in \calT_a$; as the other groups $G_{\ell,i}$ have at most $\alpha_\ell$ such fractional values, they will deviate less than $\alpha_\ell \cdot U^*_{\ell,i}$ from their previous utility.
Similarly, the program considers \cref{cons:bind-agents-no-dev} for all agents with two or more values of $\tilde{x}^{t}(a,q)$ summing up to $1$, it considers \cref{cons:resources-no-dev} for all resources $r\in R$ with $\delta+1$ or more units associated with fractional values of $\tilde{x}^{t}$, and it considers \cref{cons:resources-total-no-dev} only if there are $\Delta+1$ or more tuples $(a,q)$ with agent $a$ not being considered for \cref{cons:bind-agents-no-dev}.
We will prove that this construction ensures having more variables than linearly independent equality constraints in each iteration.
Thus, during the algorithm execution, we either fix a new variable to an integer value or an inequality constraint becomes tight.
\begin{algorithm}[t]
\caption{Iterative rounding for MCRA}
\label{alg:main}    
\SetAlgoNoLine
\KwIn{instance $\calI=(A,A',R,G,\omega,c)$, utility functions $u$, fractional resource allocation $x$, maximum deviations $\alpha\in \NN^d_0$ and $\delta,\Delta\in \NN_0$}
\KwOut{resource allocation $y$}
$x^0\gets x$\;
$t\gets 0$\;
\While{\bf true}{
    $\calF^t\gets \{(a,q)\in \calE: 0<x^t(a,q)<1 \}$\;
    $\tilde{x}^t(e)\gets x^t(e)$ for every $e\in \calF^t$\;
    $\tilde{A}^t \gets \big\{a\in \calA(x^t): \sum_{e\in {\calF^t}(a)} x^t(e) = 1\big\}$\;
    $\tilde{G}^t_\ell \gets \big\{(\ell,i): i\in [k_\ell] \text{ s.t.\ } |{\calF^t}(\ell,i)| \geq \alpha_\ell+1\big\}$ for every $\ell\in [d]$\;
    $\tilde{R}^t \gets \big\{r\in R: \sum_{(a,q)\in {\calF^t}(r)} q(r) \geq \delta+1\big\}$\;
    $\chi^t \gets 1$ {\bf if } $|A(\calF^t)\setminus \tilde{A}^t| \geq \Delta+1$, {\bf else} $\chi^t \gets 0$\;
    \If{$\tilde{A}^t = \tilde{R}^t = \emptyset$, $\tilde{G}^t_\ell=\emptyset$ for every $\ell\in [d]$, and $\chi^t=0$}{
        {\bf break}
    }
    $x^{t+1}(e)\gets x^t(e)$ for every $e\in \calE \setminus \calF^t$\;
    $y^* \gets$ extreme point of $\mathrm{LP}(\calI,u,\tilde{x}^t,\tilde{A}^t,\tilde{G}^t,\tilde{R}^t,\chi^t)$\;
    $x^{t+1}(e)\gets y^*(e)$ for every $e\in \calF^t$\;
    $t\gets t+1$
}
$T\gets t$\;
$x^{T+1}(e)\gets x^T(e)$ for every $e\in \calE \setminus \calF^T$\;
fix $x^{T+1}(e) \in \{\lfloor x^{T}(e)\rfloor, \lceil x^{T}(e)\rceil\}$ for every $e\in \calF^T$ such that $\sum_{e\in {\calF^T}(a)} x^{T+1}(e) \leq 1$ for each $a\in A(\calF^T)$\;
{\bf return} $x^{T+1}$
\end{algorithm}

When analyzing the algorithm, we usually fix its input and directly refer to the objects constructed during its execution with the names defined therein.
For an iteration $t\in [T-1]_0$, we further let
\[
    \textstyle \calC^t = |\tilde{A}^t| + \sum_{\ell\in [d]}|\tilde{G}^t_\ell| + |\tilde{R}^t| + \chi^t
\]
denote the number of equality constraints of $\mathrm{LP}(\calI,u,\tilde{x}^t,\tilde{A}^t,\tilde{G}^t,\tilde{R}^t,\chi^t)$.
The following lemma states the property that ensures that the algorithm terminates: The linear program solved in each iteration has more variables than linearly independent equality constraints. 
\begin{restatable}{lemma}{lemAlgorithmTerminates}\label{lem:algorithm-terminates}
    Let $\calI=(A,A',R,G,\omega,c)$ be an instance, $x$ a fractional resource allocation for it, $u$ a collection of utility functions, and $\alpha,\delta,\Delta,\psi$ such that the conditions in the statement of \Cref{thm:main} are satisfied. 
    Then, for every iteration $t\in [T-1]_0$ it holds that $\calC^t \leq |\calF^t|$.
    Furthermore, if the inequality is tight, then the following properties hold:
    \begin{enumerate}[label=\normalfont(\roman*)]
        \item If $\psi=1$, then $\tilde{A}^t=A(\calF^t)$;\label{lem:algorithm-terminates-i}
        \item $\tilde{G}^t_\ell=G_\ell(\calF^t)$ and $\bigcup_{i\in G_\ell(\calF^t)} G_{\ell,i} = A(\calF^t)$ for every $\ell\in [d]$;\label{lem:algorithm-terminates-ii}
        \item $\tilde{R}^t=R(\calF^t)$ and $\sum_{r\in R(\calF^t)}q(r)=\omega^*$ 
        for every $(a,q)\in \calF^t$.\label{lem:algorithm-terminates-iii}
    \end{enumerate}
\end{restatable}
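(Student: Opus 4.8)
The plan is to establish the counting bound $\calC^t\le|\calF^t|$ by a fractional token argument on the hypergraph associated with $\calF^t$, and then read off the tightness properties by pinpointing where the argument loses slack. I would give each hyperedge $e\in\calF^t$ a budget of one token, for a total of $|\calF^t|$ tokens, and redistribute them among the $\calC^t$ equality constraints of $\mathrm{LP}(\calI,u,\tilde x^t,\tilde A^t,\tilde G^t,\tilde R^t,\chi^t)$ so that every such constraint receives at least one token. Since the total amount received equals the total amount sent out and every hyperedge sends out at most one token, this yields $\calC^t\le|\calF^t|$.

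Concretely, a hyperedge $e=(a,q)$ would send $\tfrac12$ to the constraint of agent $a$ if $a\in\tilde A^t$; it would send $\tfrac{1}{\alpha_\ell+1}$ to the constraint of the group $G_{\ell,i}$ containing $a$ for each dimension $\ell$ with $i\in\tilde G^t_\ell$; it would send $\tfrac{q(r)}{\delta+1}$ to the constraint of resource $r$ for each $r\in\tilde R^t$ with $q(r)\ge1$; and, when $\chi^t=1$, it would send $\tfrac{1}{\Delta+1}$ to \cref{cons:resources-total-no-dev}, where this last contribution is routed only through hyperedges incident to non-binding agents in case~(i) and through every hyperedge in case~(ii). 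That each constraint receives at least one token uses four preliminary facts: every $a\in\tilde A^t$ has $|\calF^t(a)|\ge2$, because $\sum_{e\in\calF^t(a)}x^t(e)=1$ with each $x^t(e)\in(0,1)$; every binding group $(\ell,i)$ has $|\calF^t(\ell,i)|\ge\alpha_\ell+1$; every binding resource $r$ has $\sum_{(a,q)\in\calF^t(r)}q(r)\ge\delta+1$; and $\chi^t=1$ entails at least $\Delta+1$ hyperedges incident to non-binding agents, since $|A(\calF^t)\setminus\tilde A^t|\ge\Delta+1$ and each such agent has an incident hyperedge. I would also note that $\calA(x)=\emptyset$ propagates to $\calA(x^t)=\emptyset$, hence $\tilde A^t=\emptyset$, so $\mathds{1}_{a\in\tilde A^t}\le\mathds{1}_{\psi=1}$ always holds. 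The amount sent out by $e$ is at most $\tfrac{\mathds{1}_{a\in\tilde A^t}}{2}+\sum_{\ell\in[d]}\tfrac{1}{\alpha_\ell+1}+\tfrac{\omega_a}{\delta+1}$ plus the $\Delta$-term; using $\omega_a\le\omega^*$ and $\mathds{1}_{a\in\tilde A^t}\le\mathds{1}_{\psi=1}$, in case~(i) a hyperedge at a binding agent is bounded by the left-hand side of \cref{ineq:deviations} and a hyperedge at a non-binding agent by $\tfrac12+\tfrac{1}{\Delta+1}\le\tfrac56$ (combining \cref{ineq:deviations} for $\psi=1$ with $\Delta\ge2$), whereas in case~(ii) every hyperedge is bounded by the left-hand side of \cref{ineq:deviations} plus $\tfrac{1}{\Delta+1}$, which is at most $1$ by the lower bound imposed on $\Delta$.

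For the tightness statement, $\calC^t=|\calF^t|$ forces every constraint to receive exactly one token and every hyperedge to send out exactly one token. When $\psi=1$, a hyperedge at a non-binding agent sends out at most $\tfrac56<1$ in case~(i), and strictly less than $1$ in case~(ii) as well since the $\mathds{1}_{\psi=1}/2$ term is absent; hence there is no such hyperedge, i.e.\ $\tilde A^t=A(\calF^t)$, which is the first listed property. From ``sends out exactly one'', the group term of every hyperedge --- the sum of $\tfrac{1}{\alpha_\ell+1}$ over those dimensions in which $a$ lies in a binding group --- must attain its maximum $\sum_{\ell\in[d]}\tfrac{1}{\alpha_\ell+1}$, so each $a\in A(\calF^t)$ lies in a group of every dimension and each such group is binding; since groups within a dimension are disjoint, this upgrades to $\tilde G^t_\ell=G_\ell(\calF^t)$ and $\bigcup_{i\in G_\ell(\calF^t)}G_{\ell,i}=A(\calF^t)$, the second property. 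Likewise the resource term must equal $\tfrac{\omega^*}{\delta+1}$, forcing $\omega_a=\omega^*$ and every resource in the support of $q$ to be binding, for every $(a,q)\in\calF^t$, which gives $\tilde R^t=R(\calF^t)$ and $\sum_{r\in R(\calF^t)}q(r)=\omega^*$, the third property. I expect the main obstacle to be the bookkeeping around the \cref{cons:resources-total-no-dev} token: it must be routed exclusively through non-binding-agent hyperedges in case~(i), so that $\Delta\ge2$ purchases the strict slack needed to deduce $\tilde A^t=A(\calF^t)$ in the tight case, but through all hyperedges in case~(ii), where the precise strict form of \cref{ineq:deviations} together with the stated lower bound on $\Delta$ is exactly what keeps each outflow at most $1$; one must then also verify that these routing choices remain mutually consistent in the tight case, in particular that $\psi=1$ forces $\chi^t=0$ there, so no vacuous conflict arises.
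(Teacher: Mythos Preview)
Your token-charging argument is correct and covers all the cases, including the delicate interaction between $\psi$, $\chi^t$, and the two hypotheses on $\Delta$. It is, however, a genuinely different packaging from the paper's proof. The paper proceeds by bounding $|\tilde A^t|$, $|\tilde G^t_\ell|$, and $|\tilde R^t|$ separately via floor-function inequalities of the form $|\tilde A^t|\le\lfloor(|\calF^t|-\text{slack})/2\rfloor$, etc., and then sums; for the case $\chi^t=1$ under hypothesis~(ii) it invokes an auxiliary arithmetic lemma (\Cref{lem:Delta-case2}) to push the extra $+1$ through the floor sum. Your charging argument is the dual view: instead of asking how many hyperedges each constraint consumes, you ask how much each hyperedge pays out. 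This lets you avoid the auxiliary lemma entirely---the lower bound on $\Delta$ in hypothesis~(ii) becomes exactly the statement that the per-edge outflow is at most $1$---and it makes the tightness analysis more uniform, since ``every edge sends exactly one'' immediately forces every sub-term to its maximum once you have established $\psi=1\Rightarrow\chi^t=0$ in the tight case. The paper's route, on the other hand, is more explicit about the integer structure (the floors) and keeps visible the specific slack terms $|A(\calF^t)\setminus\tilde A^t|$, $|R(\calF^t)\setminus\tilde R^t|$, etc., which it then reads off directly when analyzing equality. Both approaches are standard in iterative-rounding analyses and amount to the same underlying double-count; yours is closer in spirit to the charging arguments in the Lau--Ravi--Singh monograph, while the paper's is a more hands-on inequality chain.
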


Before proving this result, we state a simple lemma that covers the case where $\Delta$ satisfies the second property in the statement of \Cref{thm:main}.

\begin{restatable}{lemma}{lemDeltaCaseTwo}\label{lem:Delta-case2}
    Let $d\in \NN$ and $\Theta,\gamma\in \RR^d_{++}$ be such that $\sum_{\ell\in [d]} \frac{\Theta_\ell}{\gamma_\ell} < 1$. Then, for every $\theta,\varepsilon \in \RR^d_+$ with $\theta_\ell\leq \Theta_\ell$ for every $\ell\in [d]$ and every $z\in \NN$ we have that
    \[
         \sum_{\ell\in [d]}\bigg\lfloor\frac{\theta_\ell z - \varepsilon_\ell}{\gamma_\ell} \bigg\rfloor \leq z-1.
    \]
    Furthermore, if the inequality is tight, we have either 
    \begin{enumerate}[label=\normalfont(\roman*)]
        \item $\varepsilon_\ell=0$ and $\theta_\ell=\Theta_\ell$ for every $\ell\in [d]$; or
        \item $z< {1}/{(1-\sum_{\ell\in [d]}\frac{\Theta_\ell}{\gamma_\ell})}$.
    \end{enumerate}
\end{restatable}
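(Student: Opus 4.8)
\textbf{Proof plan for \Cref{lem:Delta-case2}.}
The plan is to prove the inequality by a direct summation argument, using the elementary bound $\lfloor s \rfloor \le s$ termwise and then simplifying. First I would write, for each $\ell\in[d]$,
\[
    \bigg\lfloor \frac{\theta_\ell z - \varepsilon_\ell}{\gamma_\ell} \bigg\rfloor \;\le\; \frac{\theta_\ell z - \varepsilon_\ell}{\gamma_\ell} \;\le\; \frac{\theta_\ell}{\gamma_\ell}\, z \;\le\; \frac{\Theta_\ell}{\gamma_\ell}\, z,
\]
where the middle inequality uses $\varepsilon_\ell\ge 0$ and $\gamma_\ell>0$, and the last uses $\theta_\ell\le\Theta_\ell$ and $z\ge 1>0$. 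Summing over $\ell$ and setting $\sigma := \sum_{\ell\in[d]}\Theta_\ell/\gamma_\ell < 1$ gives $\sum_\ell \lfloor(\theta_\ell z-\varepsilon_\ell)/\gamma_\ell\rfloor \le \sigma z$. Since the left-hand side is an integer, it is at most $\lfloor \sigma z\rfloor$. Now I would argue that $\lfloor \sigma z \rfloor \le z-1$: indeed $\sigma z < z$ since $\sigma<1$ and $z>0$, so $\lfloor\sigma z\rfloor \le \lceil z\rceil - 1 = z-1$ (using $z\in\NN$); more carefully, $\lfloor \sigma z\rfloor$ is an integer strictly less than $z$, hence $\le z-1$. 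This establishes the main inequality.

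For the equality-case analysis, suppose $\sum_\ell \lfloor(\theta_\ell z-\varepsilon_\ell)/\gamma_\ell\rfloor = z-1$. Then every inequality in the chain above, after summation, must be ``nearly'' tight. Concretely, $z-1 \le \sum_\ell (\theta_\ell z-\varepsilon_\ell)/\gamma_\ell \le \sigma z$, so $z - 1 \le \sigma z$, i.e.\ $z(1-\sigma) \le 1$, which (since $1-\sigma>0$) gives $z \le 1/(1-\sigma)$. This is almost alternative (ii), except I need to separate the strict from the non-strict case. If $z < 1/(1-\sigma)$ we are in case (ii) and done. Otherwise $z = 1/(1-\sigma)$ exactly, and then the whole chain $z-1 \le \sum_\ell(\theta_\ell z - \varepsilon_\ell)/\gamma_\ell \le \sum_\ell(\theta_\ell/\gamma_\ell)z \le \sigma z = z/(1\cdot)\cdot\sigma$ collapses: $\sigma z = z-1$ forces $\sum_\ell(\theta_\ell z - \varepsilon_\ell)/\gamma_\ell = \sigma z$, and since each summand satisfies $(\theta_\ell z - \varepsilon_\ell)/\gamma_\ell \le (\Theta_\ell/\gamma_\ell)z$ with the sum of upper bounds equal to the sum, each term must meet its bound: $(\theta_\ell z - \varepsilon_\ell)/\gamma_\ell = (\Theta_\ell/\gamma_\ell)z$ for all $\ell$. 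Because $z>0$ and $\gamma_\ell>0$, this rearranges to $(\Theta_\ell-\theta_\ell)z + \varepsilon_\ell = 0$ with both terms nonnegative, hence $\theta_\ell = \Theta_\ell$ and $\varepsilon_\ell = 0$ for every $\ell$, which is alternative (i).

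I should also double-check the boundary subtlety in ``$\lfloor\sigma z\rfloor \le z-1$'': since $\sigma<1$ strictly and $z\ge 1$, we have $\sigma z < z$, and as $\sigma z$ need not be an integer its floor is an integer $< z$, hence $\le z-1$; this does not require $\sigma z$ itself to be bounded away from $z$, only the strict inequality, so the argument is clean. The main (and only real) obstacle is the bookkeeping in the tightness analysis — making sure the case split on whether $z < 1/(1-\sigma)$ or $z = 1/(1-\sigma)$ is exhaustive and that in the equality subcase one genuinely gets termwise tightness rather than merely aggregate tightness; this follows from the standard fact that if $\sum a_\ell = \sum b_\ell$ and $a_\ell \le b_\ell$ for all $\ell$ then $a_\ell = b_\ell$ for all $\ell$. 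Everything else is a routine application of $\lfloor\cdot\rfloor \le (\cdot)$ and the hypotheses $\varepsilon_\ell\ge 0$, $\theta_\ell\le\Theta_\ell$, $\gamma_\ell>0$, $z\in\NN$.
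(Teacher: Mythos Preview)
Your proposal is correct and follows essentially the same approach as the paper's proof: both bound each term by $\Theta_\ell z/\gamma_\ell$, sum, and use integrality to conclude the main inequality. For the tightness part, the paper argues the contrapositive (assuming not~(i), it derives the strict inequality $\sigma z > z-1$ and hence (ii)), whereas you case-split on whether $z<1/(1-\sigma)$ or $z=1/(1-\sigma)$ and force termwise equality in the latter; these are logically equivalent rearrangements of the same computation.
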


\begin{proof}
Let $d,\Theta,\gamma$ be as in the statement, and let first $\theta,\varepsilon\in \RR^d_+$ and $z\in \NN$ be arbitrary numbers such that $\theta_\ell\leq \Theta_\ell$ for every $\ell\in [d]$.
Then,
\[
    \sum_{\ell\in [d]}\bigg\lfloor\frac{\theta_\ell z - \varepsilon_\ell}{\gamma_\ell} \bigg\rfloor \leq \sum_{\ell\in [d]}\frac{\Theta_\ell z}{\gamma_\ell} < z.
\]
Since the expression on the left-hand side is an integer, the upper bound of $z-1$ follows.

Let now $\theta,\varepsilon\in \RR^d_+$ and $z\in \NN$ be such that $\theta_\ell\leq \Theta_\ell$ and
\[
    \sum_{\ell\in [d]}\bigg\lfloor\frac{\theta_\ell z - \varepsilon_\ell}{\gamma_\ell} \bigg\rfloor = z-1.
\]
Suppose that there exists $\ell'\in [d]$ such that either $\varepsilon_{\ell'}>0$ or $\theta_{\ell'}<\Theta_{\ell'}$.
We define $r_\ell = \frac{\Theta_\ell z }{\gamma_\ell} - \big\lfloor \frac{\theta_\ell z -\varepsilon_\ell}{\gamma_\ell} \big\rfloor$ for each $\ell\in [d]$. 
Clearly, $r_\ell \geq 0$ for all $\ell \in [d]$ and $r_{\ell'} > 0$.
Taking the sum of $r_\ell$ over all $\ell \in [d]$, we then get
\[
    0 < \sum_{\ell \in [d]} r_\ell = \sum_{\ell \in [d]} \frac{\Theta_\ell z }{\gamma_\ell} - \sum_{\ell \in [d]} \bigg\lfloor \frac{\theta_\ell z-\varepsilon_\ell}{\gamma_\ell} \bigg\rfloor = \sum_{\ell \in [d]}\frac{\Theta_\ell z }{\gamma_\ell} - (z-1),
\]
where the last equality comes from the previous assumption.
Rearranging the extreme terms, we obtain $z < {1}/{(1-\sum_{\ell\in [d]}\frac{\Theta_\ell}{\gamma_\ell})}$.
\end{proof}

We now proceed with the proof of \Cref{lem:algorithm-terminates}.

\begin{proof}[Proof of \Cref{lem:algorithm-terminates}]
Let $\calI=(A,A',R,G,\omega,c)$, $u$, $\alpha$, $\delta$, $\Delta$, and $\psi$ be as in the statement.
Let also $t\in [T-1]_0$ be any fixed step of \Cref{alg:main}. Note that the lemma only applies to some $t$ if $T\geq 1$, i.e., if the algorithm solves the linear program at least once.

From the fact that \Cref{alg:main} fixes $x^{t'+1}(e)=x^{t'}(e)$ for every $e$ such that $x^{t'}(e)\in \{0,1\}$ and every $t'\in [t-1]_0$, we know that $\calA(x^{t'+1})\subseteq \calA(x^{t'})$ for every $t'\in [t-1]_0$.
Since $\tilde{A}^t\subseteq \calA(x^t)$ by definition, we conclude that $\tilde{A}^t\subseteq \calA(x^0)=\calA(x)$.
In particular, we will make use of the fact that $\tilde{A}^t=\emptyset$ whenever $\psi=0$.
Furthermore, we know that ${\calF^t}(a) \geq 2$ for every $a \in \tilde{A}^t$. Summing this inequality over $a\in \tilde{A}^t$ and rearranging, we obtain
\begin{align}
    |\tilde{A}^t| & \leq \bigg\lfloor \frac{\sum_{a\in \tilde{A}^t} {\calF^t}(a)}{2} \bigg\rfloor \mathds{1}_{\psi=1} \nonumber\\
    & = \bigg\lfloor \frac{\sum_{a\in A(\calF^t)} {\calF^t}(a) - \sum_{a\in A(\calF^t) \setminus \tilde{A}^t} {\calF^t}(a)}{2} \bigg\rfloor \mathds{1}_{\psi=1} \nonumber\\
    & \leq \bigg\lfloor \frac{|\calF^t| - |A(\calF^t)\setminus \tilde{A}^t|}{2}\bigg\rfloor \mathds{1}_{\psi=1},\label[ineq]{ineq:bd-active-agents}
\end{align}
where the last inequality follows from the definitions of ${\calF^t}(a)$ and $A(\calF^t)$.

Similarly, for each $\ell\in [d]$ we know from the definition of $\tilde{G}^t_\ell$ that ${\calF^t}(\ell,i) \geq \alpha_\ell+1$ for every $i\in \tilde{G}^t_\ell$. Fixing $\ell\in [d]$, summing the previous inequality over $i\in \tilde{G}^t_\ell$, and rearranging, we obtain
\begin{align}
    |\tilde{G}^t_\ell| & \leq \bigg\lfloor \frac{\sum_{i\in \tilde{G}^t_\ell} {\calF^t}(\ell,i)}{\alpha_\ell+1} \bigg\rfloor\nonumber\\
    & = \bigg\lfloor \frac{\big|\bigcup_{i\in \tilde{G}^t_\ell} {\calF^t}(\ell,i)\big|}{\alpha_\ell+1} \bigg\rfloor\nonumber\\
    & = \bigg\lfloor \frac{|\calF^t| - \big|\calF^t \setminus \bigcup_{i\in G_\ell(\calF^t)} {\calF^t}(\ell,i)\big| - \big|\bigcup_{i\in G_\ell(\calF^t)} {\calF^t}(\ell,i) \setminus \bigcup_{i\in \tilde{G}^t_\ell} {\calF^t}(\ell,i)\big|}{\alpha_\ell+1} \bigg\rfloor\label[ineq]{ineq:bd-active-groups}
\end{align}
where the first equality uses that $|\{i\in \tilde{G}^t_\ell: e\in {\calF^t}(\ell,i)\}|\leq 1$ for every $e\in \calF^t$ and the last one follows from the fact that
\[
    \bigcup_{i\in \tilde{G}^t_\ell} {\calF^t}(\ell,i) \subseteq \bigcup_{i\in G_\ell(\calF^t)} {\calF^t}(\ell,i) \subseteq \calF^t.
\]

Finally, from the definition of $\tilde{R}^t$ we know that $\sum_{(a,q)\in {\calF^t}(r)}q(r) \geq \delta+1$ for every $r \in \tilde{R}^t$. Summing over $r\in \tilde{R}^t$ and rearranging, we obtain
\begin{align}
    |\tilde{R}^t| & \leq \bigg\lfloor \frac{\sum_{r\in \tilde{R}^t} \sum_{(a,q)\in {\calF^t}(r)}q(r)}{\delta+1} \bigg\rfloor \nonumber\\
    & = \bigg\lfloor \frac{\sum_{(a,q)\in \calF^t}\sum_{r\in R(\calF^t)} q(r) - \sum_{r\in R(\calF^t)\setminus \tilde{R}^t}\sum_{(a,q)\in {\calF^t}(r)} q(r)}{\delta+1} \bigg\rfloor \nonumber\\
    & \leq \bigg\lfloor \frac{\sum_{(a,q)\in \calF^t} \sum_{r\in R(\calF^t)} q(r) -|R(\calF^t)\setminus \tilde{R}^t|}{\delta+1} \bigg\rfloor,\label[ineq]{ineq:bd-active-resources}
\end{align}
where the last inequality follows from the definitions of ${\calF^t}(r)$ and $R(\calF^t)$.

We now use \cref{ineq:bd-active-agents,ineq:bd-active-groups,ineq:bd-active-resources} to conclude the result.
We first consider the case where $\chi^t=0$.
In this case, we claim that
\begin{align*}
    \calC^t & \leq \bigg\lfloor \frac{|\calF^t| - |A(\calF^t)\setminus \tilde{A}^t|}{2}\bigg\rfloor \mathds{1}_{\psi=1} + \bigg\lfloor \frac{\sum_{(a,q)\in \calF^t} \sum_{r\in R(\calF^t)} q(r) -|R(\calF^t)\setminus \tilde{R}^t|}{\delta+1} \bigg\rfloor \\
    & \phantom{{}\leq{}} + \sum_{\ell\in [d]} \bigg\lfloor \frac{|\calF^t| - \big|\calF^t \setminus \bigcup_{i\in G_\ell(\calF^t)} {\calF^t}(\ell,i)\big| - \big|\bigcup_{i\in G_\ell(\calF^t)} {\calF^t}(\ell,i) \setminus \bigcup_{i\in \tilde{G}^t_\ell} {\calF^t}(\ell,i)\big|}{\alpha_\ell+1} \bigg\rfloor \\
    & \leq \bigg\lfloor \frac{|\calF^t|}{2}\bigg\rfloor\mathds{1}_{\psi=1} + \bigg\lfloor \frac{\omega^*|\calF^t|}{\delta+1} \bigg\rfloor + \sum_{\ell\in [d]} \bigg\lfloor \frac{|\calF^t|}{\alpha_\ell+1}\bigg\rfloor \\ 
    & \leq \frac{|\calF^t|}{2} \mathds{1}_{\psi=1} + \frac{\omega^*|\calF^t|}{\delta+1} + \sum_{\ell\in [d]} \frac{|\calF^t|}{\alpha_\ell+1} = \bigg( \frac{\mathds{1}_{\psi=1}}{2} + \frac{\omega^*}{\delta+1} + \sum_{\ell \in [d]}\frac{1}{\alpha_\ell+1} \bigg) |\calF^t| \leq |\calF^t|.
\end{align*}
Indeed, the first inequality follows from \cref{ineq:bd-active-agents,ineq:bd-active-groups,ineq:bd-active-resources}, the second one uses the fact that $\sum_{r\in R(\calF^t)} q(r) \leq \omega^*$ for every $q\in \calT_a$ and every $a\in A$, and the last one comes from the fact that $\alpha$ and $\delta$ satisfy the conditions in the statement of \Cref{thm:main}.
The third inequality and the equality follow from simple calculations.
This allows us to conclude the inequality in the statement.
Moreover, in order to have equality throughout, we need the second inequality to be tight, which yields
\begin{enumerate}[label=(\roman*)]
    \item $|A(\calF^t)\setminus \tilde{A}^t| = 0$ if $\psi=1$;
    \item $\big|\calF^t \setminus \bigcup_{i\in G_\ell(\calF^t)} {\calF^t}(\ell,i)\big| = \big|\bigcup_{i\in G_\ell(\calF^t)} {\calF^t}(\ell,i) \setminus \bigcup_{i\in \tilde{G}^t_\ell} {\calF^t}(\ell,i)\big| = 0$ for every $\ell\in [d]$;
    \item $|R(\calF^t)\setminus \tilde{R}^t| = 0$ and $\sum_{r\in R(\calF^t)} q(r) = \omega^*$ for every $(a,q)\in \calF^t$.
\end{enumerate}
It is not hard to see that these properties are equivalent to \cref{lem:algorithm-terminates-i,lem:algorithm-terminates-ii,lem:algorithm-terminates-iii} in the statement.

We now consider the case with $\chi^t=1$.
From the statement of \Cref{thm:main}, we have two subcases: either \cref{ineq:deviations} holds strictly and $\Delta\geq \frac{1}{1-\big(\frac{\mathds{1}_{\psi=1}}{2} + \sum_{\ell \in [d]} \frac{1}{\alpha_\ell+1} + \frac{\omega^*}{\delta+1}\big)} -1$, or $\Delta\geq 2$ and $\psi=1$.

In the first case, we claim that
\begin{align*}
    \calC^t & \leq \bigg\lfloor \frac{|\calF^t| - |A(\calF^t)\setminus \tilde{A}^t|}{2}\bigg\rfloor \mathds{1}_{\psi=1} + \Bigg\lfloor \frac{\frac{\sum_{(a,q)\in \calF^t} \sum_{r\in R(\calF^t)} q(r)}{|\calF^t|}|\calF^t|-|R(\calF^t)\setminus \tilde{R}^t|}{\delta+1} \Bigg\rfloor \\
    & \phantom{{}\leq{}} + \sum_{\ell\in [d]} \bigg\lfloor \frac{|\calF^t| - \big|\calF^t \setminus \bigcup_{i\in G_\ell(\calF^t)} {\calF^t}(\ell,i)\big| - \big|\bigcup_{i\in G_\ell(\calF^t)} {\calF^t}(\ell,i) \setminus \bigcup_{i\in \tilde{G}^t_\ell} {\calF^t}(\ell,i)\big|}{\alpha_\ell+1} \bigg\rfloor + 1\\
    & \leq |\calF^t|.
\end{align*}
Indeed, the first inequality follows from \cref{ineq:bd-active-agents,ineq:bd-active-groups,ineq:bd-active-resources} and the second one from \Cref{lem:Delta-case2}.
The lemma applies directly because 
\begin{align}
    \frac{\mathds{1}_{\psi=1}}{2} + \sum_{\ell\in [d]}\frac{1}{\alpha_\ell+1} + \frac{\omega^*}{\delta+1}  & < 1,\nonumber \\
    \frac{\sum_{(a,q)\in \calF^t} \sum_{r\in R(\calF^t)} q(r)}{|\calF^t|} &\leq \omega^*, \text{ and} \label[ineq]{ineq:avg-multiplicity}\\
    |\calF^t| \geq |A(\calF^t)\setminus \tilde{A}^t| \geq \Delta+1 & \geq \frac{1}{1-\big(\frac{\mathds{1}_{\psi=1}}{2} + \sum_{\ell \in [d]} \frac{1}{\alpha_\ell+1} + \frac{\omega^*}{\delta+1} \big)},\nonumber
\end{align}
where the first inequality follows from the assumption that \cref{ineq:deviations} holds strictly, the second one from the fact that $\sum_{r\in R(\calF^t)} q(r) \leq \omega^*$ for every $q\in \calT_a$ and $a\in A$, and the last one from the definition of $\chi^t$ in \Cref{alg:main} and the lower bound that we assumed on $\Delta$.
Furthermore, if we have $\calC^t = |\calF^t|$, \Cref{lem:Delta-case2} implies that
\begin{enumerate}[label=(\roman*)]
    \item $|A(\calF^t)\setminus \tilde{A}^t| = 0$ if $\psi=1$;
    \item $\big|\calF^t \setminus \bigcup_{i\in G_\ell(\calF^t)} {\calF^t}(\ell,i)\big| = \big|\bigcup_{i\in G_\ell(\calF^t)} {\calF^t}(\ell,i) \setminus \bigcup_{i\in \tilde{G}^t_\ell} {\calF^t}(\ell,i)\big| = 0$ for every $\ell\in [d]$;
    \item $|R(\calF^t)\setminus \tilde{R}^t| = 0$ and \cref{ineq:avg-multiplicity} is tight, thus $\sum_{r\in R(\calF^t)} q(r)=\omega^*$ for every $(a,q)\in \calF^t$.
\end{enumerate}
As before, these properties are equivalent to \cref{lem:algorithm-terminates-i,lem:algorithm-terminates-ii,lem:algorithm-terminates-iii} in the statement.

We finally consider the case with $\chi^t=1$, $\Delta\geq 2$ and $\psi=1$.
From the definition of $\chi^t$ in \Cref{alg:main}, we know that
\begin{equation}
    |A(\calF^t)\setminus \tilde{A}^t| \geq \Delta+1 \geq 3,\label[ineq]{ineq:bound-Delta-chi1-case1}
\end{equation}
and from \cref{ineq:deviations} in \Cref{thm:main} we know that
\begin{equation}
    \sum_{\ell \in [d]}\frac{1}{\alpha_\ell+1} + \frac{\omega^*}{\delta+1} \leq \frac{1}{2}.\label[ineq]{ineq:sum_devs-chi1}
\end{equation}
Therefore, we obtain
\begin{align*}
    \calC^t & \leq \bigg\lfloor \frac{|\calF^t| - |A(\calF^t)\setminus \tilde{A}^t|}{2}\bigg\rfloor + \sum_{\ell\in [d]} \bigg\lfloor \frac{|\calF^t|}{\alpha_\ell+1}\bigg\rfloor + \bigg\lfloor \frac{\omega^*|\calF^t|}{\delta+1} \bigg\rfloor + 1\\
    & \leq \bigg\lfloor \frac{|\calF^t| - 3}{2}\bigg\rfloor + \sum_{\ell\in [d]} \bigg\lfloor \frac{|\calF^t|}{\alpha_\ell+1}\bigg\rfloor + \bigg\lfloor \frac{\omega^*|\calF^t|}{\delta+1} \bigg\rfloor + 1\\
    & \leq \frac{|\calF^t|}{2} - \frac{3}{2} + \sum_{\ell\in [d]} \frac{|\calF^t|}{\alpha_\ell+1} + \frac{\omega^*|\calF^t|}{\delta+1} + 1\\
    & = \bigg( \frac{1}{2} + \sum_{\ell \in [d]}\frac{1}{\alpha_\ell+1} + \frac{\omega^*}{\delta+1} \bigg) |\calF^t| - \frac{1}{2} \leq |\calF^t| - \frac{1}{2},
\end{align*}
where the first inequality follows from \cref{ineq:bd-active-agents,ineq:bd-active-groups,ineq:bd-active-resources} and the facts that $\psi=1$ and $\sum_{r\in R(\calF^t)} q(r) \leq \omega^*$ for every $q\in \calT_a$ and $a\in A$, the second one from \cref{ineq:bound-Delta-chi1-case1}, and the last one from \cref{ineq:sum_devs-chi1}.
The third inequality and the equality follow from simple calculations.
Since $\calC^t$ and $|\calF^t|$ are integers, we conclude that $\calC^t\leq |\calF^t|-1$.
\end{proof}

Equipped with \Cref{lem:algorithm-terminates}, we now show \Cref{thm:main}. 
Given an instance and a fractional allocation $x$, we prove that the outcome $y$ of \Cref{alg:main} satisfies all conditions stated in the theorem. 
That the algorithm terminates follows from \Cref{lem:algorithm-terminates}, as we show that the properties stated therein when the inequality is tight contradict the linear independence of the set of equality constraints. 
The running time is obtained by bounding the size of the linear program solved in each step and observing that its size also constitutes an upper bound for the number of iterations of the algorithm.
That $y$ satisfies the claimed notion of approximation is the most demanding part of the proof, requiring an understanding of the number and structure of the fractional variables upon termination of the iterative rounding procedure.

\begin{proof}[Proof of \Cref{thm:main}]
We consider an instance $\calI=(A,A',R,G,\omega,c)$, a fractional allocation $x$, utility functions $u_a\colon \calT_a \to \RR_+$ for each $a\in A$, and values $\psi\in\{0,1\}$, $\alpha\in \NN^d_0$, $\delta\in \NN_0$, and $\Delta \in \NN_0$ satisfying the conditions in the statement.
We let $y=x^{T+1}$ be the outcome of \Cref{alg:main} with this input, and we claim the result for $y$.

We first argue that \Cref{alg:main} indeed terminates and produces an outcome $y$.
To see this, let $t\in [T-1]_0$ be an arbitrary step and observe that $\mathrm{LP}(\calI,u,\tilde{x}^t,\tilde{A}^t,\tilde{G}^t,\tilde{R}^t,\chi^t)$ has $|\calF^t|$ variables and $\calC^t$ equality constraints.
From \Cref{lem:algorithm-terminates}, we thus know that the linear program either has more variables than equality constraints, or it has the same number of variables as equality constraints and \cref{lem:algorithm-terminates-i,lem:algorithm-terminates-ii,lem:algorithm-terminates-iii} in the statement of that lemma hold. 
We claim that, in the latter case, the set of equality constraints is not linearly independent.
\begin{restatable}{claim}{claimConstraintsld}\label{claim:constraints-ld}
    If $\calC^t=|\calF^t|$, the equality constraints of $\mathrm{LP}(\calI,u,\tilde{x}^t,\tilde{A}^t,\tilde{G}^t,\tilde{R}^t,\chi^t)$ are not linearly independent.
\end{restatable}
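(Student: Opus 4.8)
The plan is to produce, in every case where $\calC^t=|\calF^t|$, an explicit nontrivial linear combination of the equality constraints of $\mathrm{LP}(\calI,u,\tilde x^t,\tilde A^t,\tilde G^t,\tilde R^t,\chi^t)$ whose coefficient vector, viewed as a linear functional in the variables $\{y(e)\}_{e\in\calF^t}$, is identically zero; since a nontrivial vanishing combination of a subset of the rows extends by zero coefficients to one of the whole set, this shows the rows are linearly dependent. First I would record that $\calF^t\neq\emptyset$: otherwise $\tilde A^t=\tilde R^t=\emptyset$, every $\tilde G^t_\ell=\emptyset$ and $\chi^t=0$, so \Cref{alg:main} would have broken before reaching iteration $t$. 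Since $\calC^t=|\calF^t|$, I may then invoke \cref{lem:algorithm-terminates-i,lem:algorithm-terminates-ii,lem:algorithm-terminates-iii} of \Cref{lem:algorithm-terminates}. I would split on the value of $\psi$, using that $\psi=0$ forces $d\geq 2$ (by the choice of $\psi$ in \Cref{thm:main}).

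If $\psi=1$: by \cref{lem:algorithm-terminates-i} we have $\tilde A^t=A(\calF^t)$, and by \cref{lem:algorithm-terminates-iii} we have $\tilde R^t=R(\calF^t)$ together with $\sum_{r\in R(\calF^t)}q(r)=\omega^*$ for every $(a,q)\in\calF^t$; both $\tilde A^t$ and $\tilde R^t$ are nonempty because $\calF^t\neq\emptyset$. The combination I would take assigns coefficient $1$ to each resource constraint~\eqref{cons:resources-no-dev}, coefficient $-\omega^*$ to each binding-agent constraint~\eqref{cons:bind-agents-no-dev}, and $0$ to every remaining row. Summing the left-hand sides of the resource constraints, the coefficient of $y(a,q)$ is $\sum_{r\in R(\calF^t)}q(r)=\omega^*$, so that partial sum equals $\omega^*\sum_{e\in\calF^t}y(e)$; summing the left-hand sides of the binding-agent constraints, each $y(e)$ is counted exactly once (its agent lies in $A(\calF^t)=\tilde A^t$), so that partial sum equals $\sum_{e\in\calF^t}y(e)$. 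Hence the combination vanishes, and it is nontrivial since $\tilde R^t\neq\emptyset$.

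If $\psi=0$ (hence $d\geq 2$): by \cref{lem:algorithm-terminates-ii}, for each $\ell\in\{1,2\}$ we have $\tilde G^t_\ell=G_\ell(\calF^t)$ and $\bigcup_{i\in G_\ell(\calF^t)}G_{\ell,i}=A(\calF^t)$, and since the groups in a fixed dimension are pairwise disjoint, every $e=(a,q)\in\calF^t$ lies in $\calF^t(\ell,i)$ for exactly one $i\in\tilde G^t_\ell$; also $\tilde G^t_1=G_1(\calF^t)\neq\emptyset$ because $\calF^t\neq\emptyset$. The combination I would take assigns coefficient $+1$ to each dimension-$1$ group constraint~\eqref{cons:groups-no-dev}, coefficient $-1$ to each dimension-$2$ group constraint, and $0$ elsewhere. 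For each $\ell\in\{1,2\}$, summing the corresponding left-hand sides makes the coefficient of $y(e)$ equal to $u(e)$, so both partial sums equal $\sum_{e\in\calF^t}u(e)\,y(e)$ and their difference vanishes; the combination is nontrivial since $\tilde G^t_1\neq\emptyset$.

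The bookkeeping of coefficients is routine; the point that needs care is making the two invocations of \Cref{lem:algorithm-terminates} do exactly the right work. In the $\psi=1$ case one must use the \emph{equality} $\sum_{r\in R(\calF^t)}q(r)=\omega^*$ (not just the trivial $\leq$), which additionally forces $\omega_a=\omega^*$ and $q(r)=0$ for $r\notin R(\calF^t)$, so that restricting the resource sum to $R(\calF^t)$ loses nothing. In the $\psi=0$ case one must use both halves of \cref{lem:algorithm-terminates-ii} — that the active groups in each dimension cover all active agents and that they are disjoint — to conclude that each hyperedge $e$ meets exactly one active group per dimension, which is what makes both dimension-sums collapse to the same functional $\sum_{e}u(e)y(e)$. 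I expect this to be the only, and a mild, obstacle; everything else is direct.
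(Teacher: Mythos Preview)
Your proposal is correct and follows essentially the same approach as the paper: you exhibit a nontrivial vanishing linear combination of the equality rows by pairing the resource constraints with the binding-agent constraints when $\psi=1$ (using the equality $\sum_{r\in R(\calF^t)}q(r)=\omega^*$ from \Cref{lem:algorithm-terminates}\ref{lem:algorithm-terminates-iii}), and by pairing the group constraints of two distinct dimensions when $\psi=0$ (using \Cref{lem:algorithm-terminates}\ref{lem:algorithm-terminates-ii}). The paper phrases this as showing that two aggregated sums of constraints coincide rather than naming the coefficients explicitly, and it splits on ``$\psi=1$'' versus ``$d\geq 2$'' rather than on $\psi$ alone, but these are cosmetic differences.
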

\begin{proof}
Adding constraints \eqref{cons:resources-no-dev} over all resources $r\in \tilde{R}^t$ yields
\allowdisplaybreaks
\begin{align}
    & \sum_{r\in \tilde{R}^t} \sum_{(a,q)\in {\calF^t}(r)} q(r) \cdot y(a,q) = \sum_{r\in \tilde{R}^t}\sum_{(a,q)\in {\calF^t}(r)} q(r) \cdot \tilde{x}^t(a,q)\nonumber\\
    \Longleftrightarrow & \sum_{r\in R(\calF^t)} \sum_{(a,q)\in {\calF^t}(r)} q(r) \cdot y(a,q) = \sum_{r\in R(\calF^t)}\sum_{(a,q)\in {\calF^t}(r)} q(r) \cdot \tilde{x}^t(a,q)\nonumber\\
    \Longleftrightarrow & \sum_{(a,q)\in \calF^t}\sum_{r\in R(\calF^t)} q(r) \cdot y(a,q) = \sum_{(a,q)\in \calF^t}\sum_{r\in R(\calF^t)} q(r) \cdot \tilde{x}^t(a,q) \nonumber\\
    \Longleftrightarrow & \sum_{(a,q)\in \calF^t}\omega^* \cdot y(a,q) = \sum_{(a,q)\in \calF^t}\omega^* \cdot \tilde{x}^t(a,q)\nonumber\\
    \Longleftrightarrow & \sum_{(a,q)\in \calF^t} y(a,q) = \sum_{(a,q)\in \calF^t} \tilde{x}^t(a,q),\label{cons:resources-dep}
\end{align}
where we used \cref{lem:algorithm-terminates-iii} of \Cref{lem:algorithm-terminates} to replace $\tilde{R}^t$ by $R(\calF^t)$ in the first equivalence and $\sum_{r\in R(\calF^t)} q(r)$ by $\omega^*$ in the third equivalence.

We now make use of the fact that, due to the definition of $\psi$, we either have $\psi=1$ or $d\geq 2$ (or both). If $\psi=1$, adding constraints \eqref{cons:bind-agents-no-dev} over all agents $a\in \tilde{A}^t$ yields
\[
    \sum_{a\in \tilde{A}^t} \sum_{e\in {\calF^t}(a)} y(e) = |\tilde{A}^t| \Longleftrightarrow \sum_{a\in A(\calF^t)} \sum_{e\in {\calF^t}(a)} y(e) = \sum_{e\in \calF^t} \tilde{x}^t(e) \Longleftrightarrow \sum_{e\in \calF^t} y(e) = \sum_{e\in \calF^t} \tilde{x}^t(e),
\]
where we used \cref{lem:algorithm-terminates-i} of \Cref{lem:algorithm-terminates} to replace $\tilde{A}^t$ by $A(\calF^t)$ and the fact that $\sum_{e\in \calF^t(a)} \tilde{x}^t(e) = 1$ for every $a\in \tilde{A}^t$.
Since this equality and equality \eqref{cons:resources-dep} are the same, we conclude that the set of constraints is not linearly independent, as claimed.
If $d\geq 2$, adding constraints \eqref{cons:groups-no-dev} over all $i\in \tilde{G}^t_\ell$ for any fixed $\ell\in [d]$ yields
\begin{align*}
    & \sum_{i\in \tilde{G}^t_\ell} \sum_{e\in {\calF^t}(\ell,i)} u(e) \cdot y(e) = \sum_{i\in \tilde{G}^t_\ell} \sum_{e\in {\calF^t}(\ell,i)} u(e) \cdot \tilde{x}^t(e) \\
    \Longleftrightarrow & \sum_{i\in G_\ell(\calF^t)} \sum_{e\in {\calF^t}(\ell,i)} u(e) \cdot y(e) = \sum_{i\in G_\ell(\calF^t)} \sum_{e\in {\calF^t}(\ell,i)} u(e) \cdot \tilde{x}^t(e) \\
    \Longleftrightarrow & \sum_{e\in \calF^t} u(e)\cdot y(e) = \sum_{e\in \calF^t} u(e)\cdot \tilde{x}^t(e),
\end{align*}
where we used \cref{lem:algorithm-terminates-ii} of \Cref{lem:algorithm-terminates} to replace $\tilde{G}^t_\ell$ by $G_\ell(\calF^t)$ in the first equivalence and to conclude that $\cup_{i\in G_\ell(\calF^t)} {\calF^t}(\ell,i) = \cup_{a\in A} {\calF^t}(a) = \calF^t$ due to $\bigcup_{i\in G_\ell(\calF^t)}G_{\ell,i}=A(\calF^t)$ in the last equivalence.
Since the last equality is independent of $\ell$, whenever $d\geq 2$ we have that the constraints for $\ell,\ell'\in [d]$ with $\ell\neq \ell'$ are linearly dependent.
\end{proof}

Due to \Cref{claim:constraints-ld}, whenever $\mathrm{LP}(\calI,u,\tilde{x}^t,\tilde{A}^t,\tilde{G}^t,\tilde{R}^t,\chi^t)$ has the same number of variables as equality constraints, we can delete an equality constraint and obtain an equivalent linear program with more variables than equality constraints.
Thus, the extreme point $x^{t+1}$ either has some integral entry or satisfies an inequality constraint with equality.
In the former case, this reduces the number of variables in the iteration $t+1$; in the latter case, it reduces the number of inequality constraints.
Since an integral entry of $x^t$ never becomes fractional at $x^{t'}$ with $t'>t$ and an equality constraint at $t$ never becomes an inequality constraint at $t'>t$, we conclude that the algorithm terminates. 
Moreover, in each iteration the algorithm performs operations that take time linear in $|A|,|R|,\sum_{\ell\in [d]}k_\ell$, and $|\calF^t|$, while the linear program has $|\calF^t|$ variables and $\calO\big(|A|+|R|+\sum_{\ell\in [d]}k_\ell + |\calF^t|\big)$ constraints.
Since $|\calF^t| \leq |A| \cdot |\{R' \subseteq R: |R'|\leq \omega^*\}| = \calO(|A| \cdot |R|^{\omega^*})$,
we conclude that the algorithm runs in time polynomial in $|A|,k_1,\ldots,k_d$, and $|R|^{\omega^*}$.

In the remainder of the proof, we show that $y$ satisfies the properties claimed in the statement; i.e., that it is a rounding of $x$ and that it is an $(\alpha,\delta,\Delta)$-approximation of $x$ with respect to $u$.
That it is a rounding of $x$ is straightforward, since whenever $x^t(e)\in \{0,1\}$ for some $e\in \calE$ and some $t\in [T]_0$, the algorithm fixes $x^{t+1}(e)=x^t(e)$, thus $y(e)=x^t$ holds as well.
To prove that $y$ is an $(\alpha,\delta,\Delta)$-approximation of $x$ with respect to $u$ is more demanding, as we need to verify that $y$ satisfies constraints \eqref{cons:agents-eq}-\eqref{cons:agents-ub} and \eqref{ineq:groups-app}-\eqref{ineq:resources-total-app}.

To see that $y$ satisfies the set of constraints \eqref{cons:agents-eq}, let $a\in A'$ and note that, since $x$ is a fractional resource allocation, it satisfies the set of constraints \eqref{cons:agents-eq}.
Thus, for every $t\in [T-1]_0$ such that $a\in A(\calF^t)$ we have that $a\in \tilde{A}^t$, so \cref{cons:bind-agents-no-dev} ensures $\textstyle \sum_{q\in \calT_a} x^{t+1}(a,q) = \sum_{e\in \calF^t(a)} x^{t+1}(e)=1.$
For every $t\in [T-1]_0$ such that $a\notin A(\calF^t)$, we have that $x^{t+1}(a,q)=x^t(a,q)$ for every $q\in \calT_a$, so the same equality holds.
When the algorithm terminates, we have $\tilde{A}^t=\emptyset$ and thus $a\notin A(\calF^t)$, which implies the existence of a unique $q\in \calT_a$ such that $x^T(a,q)=1$.
The algorithm thus fixes $x^{T+1}(a,q)=x^T(a,q)$ for every $q\in \calT_a$ and we conclude.

To see that $y$ satisfies the set of constraints \eqref{cons:agents-ub}, we fix $a\in A\setminus A'$.
For every $t\in [T-1]_0$ such that $a\in A(\calF^t)$, either \cref{cons:bind-agents-no-dev} or \cref{cons:agents-no-dev} guarantees that $\textstyle  \sum_{q\in \calT_a} x^{t+1}(a,q) = \sum_{e\in \calF^t(a)} x^{t+1}(e)\leq 1.$
For every $t\in [T-1]_0$ such that $a\notin A(\calF^t)$, we have that $x^{t+1}(a,q)=x^t(a,q)$ for every $q\in \calT_a$, so the same equality holds.
In the final step, we either have $a\in A(\calF^T)$ and at most one $e\in \calF^T(a)$ is rounded up to ensure that $\sum_{e\in \calF^T(a)} x^{T+1}(e)\leq 1$, or $a\notin A(\calF^T)$ and then $x^{T+1}(a,q)=x^T(a,q)$ for every $q\in \calT_a$.

To prove that $y$ satisfies the set of inequalities \eqref{ineq:groups-app}, we fix $\ell\in [d]$ and $i\in [k_\ell]$ arbitrarily and define
$t(\ell,i) = \max\{t\in \{-1,0,1,\ldots,T-1\}: |{\calF^t}(\ell,i)| \geq \alpha_\ell+1\}$
as the latest step at which $G_{\ell,i}$ has $\alpha_\ell+1$ or more associated fractional entries in $\calF^t$; note that we fix $t(\ell,i)=-1$ if this never occurs.
Then, for every $t\in [t(\ell,i)]_0$, it holds that
\begin{align*}
    U_{\ell,i}(x^{t+1}) & =  \sum_{(a,q)\in \calE\setminus \calF^t: a\in G_{\ell,i}} u_a(q) \cdot x^{t+1}(a,q) + \sum_{e\in {\calF^t}(\ell,i)} u(e)\cdot x^{t+1}(e) \\
    & = \sum_{(a,q)\in \calE\setminus \calF^t: a\in G_{\ell,i}} u_a(q)\cdot x^{t}(a,q) + \sum_{e\in {\calF^t}(\ell,i)} u(e)\cdot x^{t}(e) = U_{\ell,i}(x^{t}),
\end{align*}
where the second equality follows from the fact that \Cref{alg:main} fixes $x^{t+1}(e)=x^t(e)$ for every $e\in \calE\setminus \calF^t$ and from \cref{cons:groups-no-dev}.
Thus, $U_{\ell,i}(x^{t(\ell,i)+1}) = U_{\ell,i}(x^0) = U_{\ell,i}(x)$ and we obtain
\begin{align*}
    |U_{\ell,i}(y)-U_{\ell,i}(x)| & = |U_{\ell,i}(x^{T+1})-U_{\ell,i}(x^{t(\ell,i)+1})|\\
    & = \bigg| \sum_{e\in \calF^{t(\ell,i)+1}(\ell,i)} u(e)\cdot \big(x^{T+1}(e)-x^{t(\ell,i)+1}(e)\big)\bigg| \\
    & \leq  \sum_{e\in \calF^{t(\ell,i)+1}(\ell,i)} u(e)\cdot \big|x^{T+1}(e)-x^{t(\ell,i)+1}(e)\big|  < \big|{\calF^{t(\ell,i)+1}}(\ell,i)\big|\cdot U^*_{\ell,i},
\end{align*}
where we used the triangle inequality, that $x^{T+1}(e)\in \{\lfloor x^{t(\ell,i)+1}(e)\rfloor, \lceil x^{t(\ell,i)+1}(e)\rceil\}$ for every $e\in {\calF^{t(\ell,i)+1}}(\ell,i)$, and that $x^{T+1}(e) = x^{t(\ell,i)+1}(e)$ for every other $e$.
From the definition of $t(\ell,i)$ we have $|{\calF^{t(\ell,i)+1}}(\ell,i)| \leq \alpha_\ell$, so we conclude that $|U_{\ell,i}(y)-U_{\ell,i}(x)| < \alpha_\ell\cdot U^*_{\ell,i}$.

We proceed in a similar way to prove that $y$ satisfies the set of inequalities \eqref{ineq:resources-app}. 
We fix $r\in R$ and define
$t(r) = \max\big\{t\in \{-1,0,1,\ldots,T-1\}: \sum_{(a,q)\in {\calF^t}(r)} q(r) \geq \delta+1\big\}$
as the latest step at which $r$ has $\delta+2$ or more associated fractional entries in $\calF^t$; note that we fix $t(r)=-1$ if this never occurs.
Then, for every $t\in [t(r)]_0$,
\allowdisplaybreaks
\begin{align*}
    \sum_{(a,q)\in \calE} q(r)\cdot x^{t+1}(a,q) & = \sum_{(a,q)\in \calE\setminus {\calF^t}(r)} q(r)\cdot x^{t+1}(a,q) + \sum_{(a,q)\in {\calF^t}(r)} q(r)\cdot x^{t+1}(a,q)\\
    & = \sum_{(a,q)\in \calE\setminus {\calF^t}(r)} q(r)\cdot x^{t}(a,q) + \sum_{(a,q)\in {\calF^t}(r)} q(r)\cdot x^{t}(a,q)\\
    & = \sum_{(a,q)\in \calE} q(r)\cdot x^{t}(a,q),
\end{align*}
where the second equality follows from the fact that \Cref{alg:main} fixes $x^{t+1}(e)=x^t(e)$ for every $e\in \calE\setminus \calF^t$ and from \cref{cons:resources-no-dev}.
Thus, $\sum_{(a,q)\in \calE} q(r)\cdot x^{t(r)+1}(a,q) = \sum_{(a,q)\in \calE} q(r)\cdot x^0(a,q) = \sum_{(a,q)\in \calE} q(r)\cdot x(a,q)$, and we obtain
\begin{align*}
    \bigg|\sum_{(a,q)\in \calE} q(r)\cdot (y(a,q) - x(a,q))\bigg| & = \bigg|\sum_{(a,q)\in {\calF^{t(r)+1}}(r)} q(r)\cdot \big(x^{T+1}(a,q) - x^{t(r)+1}(a,q)\big)\bigg|\\
    & \leq \sum_{(a,q)\in {\calF^{t(r)+1}}(r)} q(r)\cdot \big|x^{T+1}(a,q) - x^{t(r)+1}(a,q)\big|\\
    & < \sum_{(a,q)\in {\calF^{t(r)+1}}(r)} q(r),
\end{align*}
where we used the triangle inequality, that $x^{T+1}(e)\in \{\lfloor x^{t(r)+1}(e)\rfloor, \lceil x^{t(r)+1}(e)\rceil\}$ for every $e\in {\calF^{t(r)+1}}(r)$, and that $x^{T+1}(e)=x^{t(r)+1}(e)$ for every other $e$.
From the definition of $t(r)$ we know that $\sum_{(a,q)\in {\calF^{t(r)+1}}(r)} q(r) \leq \delta$, so we conclude that $|\sum_{(a,q)\in \calE} q(r)\cdot (y(a,q) - x(a,q))| < \delta.$

Finally, to show that $y$ satisfies inequality \eqref{ineq:resources-total-app}, we define
$t^* = \min\{ t\in [T]_0: A(\calF^t)\setminus \tilde{A}^t \leq \Delta\}$
as the first step at which there are $\Delta$ or less agents in $A(\calF^t)\setminus \tilde{A}$.
Note that $t^*\leq T$, since this inequality is guaranteed for $T$.
For every $t<t^*$, we know that
\begin{align*}
    \sum_{(a,q)\in \calE} \omega_a \cdot x^{t+1}(a,q) & = \sum_{(a,q)\in \calE\setminus \calF^t} \omega_a \cdot x^{t+1}(a,q) + \sum_{(a,q)\in \calF^t} \omega_a \cdot x^{t+1}(a,q) \\
    & = \sum_{(a,q)\in \calE\setminus \calF^t} \omega_a \cdot x^{t}(a,q) + \sum_{(a,q)\in \calF^t} \omega_a \cdot x^{t}(a,q) = \sum_{(a,q)\in \calE} \omega_a \cdot x^{t}(a,q),
\end{align*}
where the second equality follows from the fact that \Cref{alg:main} fixes $x^{t+1}(e)=x^t(e)$ for every $e\in \calE\setminus \calF^t$ and from \cref{cons:resources-total-no-dev}, since $\chi^t=1$.
Therefore,
\begin{equation}
    \sum_{(a,q)\in \calE} \omega_a  \cdot x^{t^*}(a,q) = \sum_{(a,q)\in \calE} \omega_a \cdot x^{0}(a,q) = \sum_{(a,q)\in \calE} \omega_a \cdot x(a,q).\label{eq:no-dev-before-t*}
\end{equation}

We now show two claims that will imply that (i) agents in $A(\calF^{t^*})\setminus \tilde{A}^{t^*}$ are the only agents whose associated allocation can deviate; and (ii) for each of these agents, the deviation in terms of resources is at most $\omega^*$.
\begin{restatable}{claim}{claimNonFracAgents}\label{claim:non-frac-agents}
    For every $t'\in [T]_0$ and $a\in A \setminus (A(\calF^{t'})\setminus \tilde{A}^{t'})$, it holds $\sum_{q\in \calT_a} \big(y(a,q)-x^{t'}(a,q)\big) = 0.$
\end{restatable}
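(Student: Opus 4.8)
The plan is to fix $t'\in[T]_0$ and an agent $a\in A\setminus(A(\calF^{t'})\setminus\tilde A^{t'})$, and to trace the evolution of the quantity $\sum_{q\in\calT_a}x^{t}(a,q)$ across the iterations $t=t',t'+1,\ldots,T$ and the final rounding step, showing it never changes. First I would split the hypothesis into the two cases that it encodes: either $a\notin A(\calF^{t'})$, i.e.\ $x^{t'}(a,q)\in\{0,1\}$ for all $q\in\calT_a$; or $a\in\tilde A^{t'}$. In the first case, since \Cref{alg:main} fixes $x^{s+1}(e)=x^{s}(e)$ for every $e\in\calE\setminus\calF^{s}$ and every $s$, and since an integral entry can never become fractional again, we get $x^{t}(a,q)=x^{t'}(a,q)$ for every $t\geq t'$ and every $q\in\calT_a$; in particular $y(a,q)=x^{T+1}(a,q)=x^{t'}(a,q)$, so the difference vanishes term by term. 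In the second case, $a\in\tilde A^{t'}$ means $\sum_{e\in\calF^{t'}(a)}\tilde x^{t'}(e)=1$, and the key point is that this ``tightness'' is preserved: in iteration $t'$, constraint~\eqref{cons:bind-agents-no-dev} keeps $\sum_{e\in\calF^{t'}(a)}y^{*}(e)=1$, hence $\sum_{q\in\calT_a}x^{t'+1}(a,q)=1$; moreover any $q$ with $x^{t'}(a,q)\in\{0,1\}$ retains its value, so $a\in A(\calF^{t'+1})$ still satisfies $\sum_{e\in\calF^{t'+1}(a)}x^{t'+1}(e)=1$, i.e.\ $a\in\tilde A^{t'+1}$ (or $a\notin A(\calF^{t'+1})$, in which case we fall into the first case from iteration $t'+1$ onward). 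Iterating this, $\sum_{q\in\calT_a}x^{t}(a,q)=1$ for all $t\in[t',T]$, and since the final rounding step of \Cref{alg:main} only rounds entries of $\calF^{T}$ subject to $\sum_{e\in\calF^{T}(a)}x^{T+1}(e)\leq 1$, while the integral entries outside $\calF^T$ are untouched and already sum to $\le 1$\dots

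Here is the subtlety I expect to be the main obstacle: in the second case the \emph{sum} $\sum_{q\in\calT_a}(y(a,q)-x^{t'}(a,q))$ being $0$ is what we want, but we must also make sure the \emph{final rounding step} does not break this. At termination $\tilde A^{T}=\emptyset$, and by the break condition combined with the structure forced by \Cref{lem:algorithm-terminates} (or directly by the termination argument in the proof of \Cref{thm:main}), every agent with incident fractional hyperedges at step $T$ has exactly one such hyperedge with value in $(0,1)$ left, or its fractional mass already sums to $1$ in integers plus one fractional entry; I would argue that for our agent $a$ the invariant $\sum_{q\in\calT_a}x^{T}(a,q)=1$ together with $x^{T}$ having at most\dots\ fractional entries on $\calF^{T}(a)$ forces the rounding $x^{T+1}(e)\in\{\lfloor x^{T}(e)\rfloor,\lceil x^{T}(e)\rceil\}$ with $\sum_{e\in\calF^{T}(a)}x^{T+1}(e)\le 1$ to in fact give equality to $1$, hence $\sum_{q\in\calT_a}x^{T+1}(a,q)=\sum_{q\in\calT_a}x^{T}(a,q)$. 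Thus in both cases $\sum_{q\in\calT_a}y(a,q)=\sum_{q\in\calT_a}x^{t'}(a,q)$, which is exactly the claim. The whole argument is a bookkeeping induction on the iteration index; the only place requiring genuine care is the last rounding step, and there I would lean on the fact that $a$ is binding ``from $t'$ onward'' so that rounding down \emph{all} of its fractional entries is impossible without violating an earlier-established integral-plus-fractional decomposition, forcing the rounded sum to stay at $1$.
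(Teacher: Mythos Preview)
Your approach is the same as the paper's: split into the two cases $a\notin A(\calF^{t'})$ and $a\in\tilde A^{t'}$, and in the second case propagate the invariant ``$a\in\tilde A^{t}$ or $a\notin A(\calF^{t})$'' forward. The paper phrases this via $t(a)=\max\{t\in[T-1]_0:a\in\tilde A^{t}\}$ and then asserts $a\notin A(\calF^{t})$ for all $t>t(a)$; your inductive dichotomy is an equivalent formulation.

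Where you stumble is the final rounding step, and the fix is already in your hands. Your dichotomy says that at every step $t\ge t'$ either $a\in\tilde A^{t}$ or $a\notin A(\calF^{t})$. Since the break condition of \Cref{alg:main} forces $\tilde A^{T}=\emptyset$, the first alternative is impossible at $t=T$, hence $a\notin A(\calF^{T})$. Therefore every entry $x^{T}(a,q)$ is already integral, the final rounding does not touch agent $a$, and $\sum_{q}y(a,q)=\sum_{q}x^{T}(a,q)=1$ follows immediately. This is exactly how the paper closes the argument.

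By contrast, the route you sketch for the final step---appealing to \Cref{lem:algorithm-terminates} to bound the number of fractional entries in $\calF^{T}(a)$ and then arguing that the rounding ``must'' give equality to $1$---does not work as stated. The final rounding in \Cref{alg:main} only enforces $\sum_{e\in\calF^{T}(a)}x^{T+1}(e)\le 1$; if $a$ genuinely had, say, two fractional entries each equal to $1/2$ at step $T$, rounding both down to $0$ would be allowed by the algorithm and would destroy the invariant. So the argument you gesture at would fail precisely in the scenario you are trying to cover. The point is that this scenario never arises for your agent $a$, and you have already proved that via the dichotomy; you just need to use it.
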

\begin{proof}
    We fix $a\in A \setminus (A(\calF^{t'})\setminus \tilde{A}^{t'})$ and distinguish two cases.
    If $a\in A\setminus A(\calF^{t'})$, then for every $q\in \calT_a$ and every $t\geq t'$ we know that $x^{t+1}(a,q)=x^t(a,q)$.
    Then, $y(a,q)=x^{t'}(a,q)$ for every $q\in \calT_a$ and the equality in the statement follows.
    
    On the other hand, if $a\in \tilde{A}^{t'}$, let $t(a)=\max\{t\in [T-1]_0: a\in \tilde{A}^t\}$ be the latest step in which $a$ belongs to $\tilde{A}^t$.
    Note that $t(a)\geq t'$ because of our assumption.
    Furthermore, for every $t\in \{t',\ldots,t(a)\}$ we have that $a\in \tilde{A}^{t}$.
    Otherwise, we would have $|\calF^{t'}(a)| \geq 2$, $|\calF^t(a)|\leq 1$ for some $t\in \{t'+1,\ldots,t(a)-1\}$, and $|\calF^{t(a)}(a)|\geq 2$, a contradiction to the fact that $\calF^{t+1}(a)\subseteq \calF^t(a)$ for every $t\in [T-1]_0$.
    Therefore, for every $t\in \{t',\ldots,t(a)\}$ we obtain
    \begin{align*}
        \sum_{q\in \calT_a} x^{t+1}(a,q) & = \sum_{e\in (\{a\}\times \calT_a) \setminus {\calF^t}(a)} x^{t+1}(e) + \sum_{e\in {\calF^t}(a)} x^{t+1}(e)\\
        & = \sum_{e\in (\{a\}\times \calT_a) \setminus {\calF^t}(a)} x^{t}(e) + \sum_{e\in {\calF^t}(a)} x^{t}(e) = \sum_{q\in \calT_a} x^{t}(a,q),
    \end{align*}
    where the second equality follows from the fact that \Cref{alg:main} fixes $x^{t+1}(e)=x^t(e)$ for every $e\in \calE\setminus \calF^t$ and from \cref{cons:bind-agents-no-dev}.
    For $t\in \{t(a)+1,\ldots,T\}$, we know that $a\notin A(\calF^t)$ and thus $x^{t+1}(a,q)=x^t(a,q)$ for every $q\in \calT_a$.
    Combining these two facts, we conclude that
    \[
        \sum_{q\in \calT_a} y(a,q) = \sum_{q\in \calT_a} x^{T+1}(a,q) = \sum_{q\in \calT_a} x^{t(a)+1}(a,q) = \sum_{q\in \calT_a} x^{t'}(a,q).\qedhere
    \] 
\end{proof}

\begin{restatable}{claim}{claimFracAgents}\label{claim:frac-agents}
    For every $a\in A(\calF^{t^*})\setminus \tilde{A}^{t^*}$, it holds $|\sum_{q\in \calT_a} \big(y(a,q)-x^{t^*}(a,q)\big) | < 1.$
\end{restatable}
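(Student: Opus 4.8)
The plan is to reduce the bound to two observations: the quantity $\sum_{q\in\calT_a}y(a,q)$ is an integer in $\{0,1\}$, while $S:=\sum_{q\in\calT_a}x^{t^*}(a,q)$ lies in the open interval $(0,1)$. The claimed strict inequality then follows at once, since $|0-S|=S<1$ and $|1-S|=1-S<1$, so in either case $|\sum_{q\in\calT_a}(y(a,q)-x^{t^*}(a,q))|=|\sum_{q\in\calT_a}y(a,q)-S|<1$.

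For the first observation, I would note that the output $y=x^{T+1}$ is $\{0,1\}$-valued (entries outside $\calF^T$ are the already-integral entries of $x^T$, and entries in $\calF^T$ are set to $\lfloor\cdot\rfloor$ or $\lceil\cdot\rceil$ in the final line of \Cref{alg:main}) and that it satisfies $\sum_{q\in\calT_a}y(a,q)\le 1$ by \eqref{cons:agents-eq}-\eqref{cons:agents-ub}, which were already shown to hold for $y$; hence this sum, being a non-negative integer at most one, is in $\{0,1\}$.

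The main work is the second observation. Positivity of $S$ is immediate from $a\in A(\calF^{t^*})$, i.e., from the existence of some $q$ with $x^{t^*}(a,q)\in(0,1)$. For the strict upper bound I would first record the invariant, maintained by an easy induction along \Cref{alg:main} since the linear program keeps each agent's total mass at most one and integral entries are never modified, that $\sum_{q\in\calT_a}x^{t}(a,q)\le 1$ for every $t$; a consequence I will use is that an agent possessing a fractional entry at step $t$ has no entry equal to $1$ there, so that its total mass equals the mass on its fractional pairs, $S=\sum_{e\in\calF^{t^*}(a)}x^{t^*}(e)\le 1$. Now I invoke $a\notin\tilde{A}^{t^*}$: by the definition of $\tilde{A}^{t^*}$, either $a\notin\calA(x^{t^*})$, in which case $a$ has exactly one positive entry, necessarily the fractional one, so $S<1$; or $\sum_{e\in\calF^{t^*}(a)}x^{t^*}(e)\ne 1$, which together with $S\le 1$ again gives $S<1$. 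In either case $S\in(0,1)$. (The same case analysis also shows that every agent to which the claim applies lies in $A\setminus A'$, since a binding agent with a fractional entry would have to belong to $\tilde{A}^{t^*}$; this is worth stating separately, as it is reused when bounding the total deviation in \eqref{ineq:resources-total-app}.)

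I do not anticipate a genuine obstacle here; the only slightly delicate point is the bookkeeping behind ``a fractional entry forbids an integral $1$-entry'', which rests on the fractional-resource-allocation invariant of the iterates $x^{t}$ — an invariant that is, in essence, already being used in the surrounding argument establishing \eqref{cons:agents-eq}-\eqref{cons:agents-ub} for $y$.
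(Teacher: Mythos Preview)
Your argument is correct and is actually more direct than the paper's. Both proofs rely on the fact that $S=\sum_{q\in\calT_a}x^{t^*}(a,q)\in(0,1)$, which you justify carefully via the invariant $\sum_{q}x^{t}(a,q)\le 1$ and the definition of $\tilde{A}^{t^*}$; the paper asserts the same bound somewhat tersely ``due to the definition of $t^*$''. The difference lies in how the $y$-side is handled. You observe that $\sum_{q}y(a,q)\in\{0,1\}$---a consequence of the already-verified \cref{cons:agents-eq,cons:agents-ub} for the integral output---and conclude immediately. The paper instead introduces $t(a)=\max\{t:a\in A(\calF^{t})\setminus\tilde{A}^{t}\}$, splits into the cases $t(a)<T$ and $t(a)=T$, and in the former case invokes \Cref{claim:non-frac-agents} to transfer the bound from $x^{t(a)+1}$ to $y$. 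Your shortcut avoids this case distinction and the appeal to \Cref{claim:non-frac-agents} entirely; the paper's route, while heavier, stays closer to the step-by-step tracking of the iterates used elsewhere in the proof. Your parenthetical remark that such agents must lie in $A\setminus A'$ is also correct and a nice observation, though it is not needed for the claim itself.
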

\begin{proof}
    We fix $a\in A(\calF^{t^*})\setminus \tilde{A}^{t^*}$ and define $t(a)=\max\{t\in [T]_0: a\in A(\calF^{t})\setminus \tilde{A}^{t}\}$. Note that $t(a)\geq t^*$.
    We first consider the case with $t(a)<T$. Since $0<\sum_{q\in \calT_a} x^{t^*}(a,q)<1$ due to the definition of $t^*$ and $\sum_{q\in \calT_a} x^{t(a)+1}(a,q)\leq 1$ due to \cref{cons:agents-no-dev}, we have that
    \[
        \bigg|\sum_{q\in \calT_a} \big(x^{t(a)+1}(a,q)-x^{t^*}(a,q)\big) \bigg| < 1.
    \]
    Since $a\notin A(\calF^{t(a)+1})\setminus \tilde{A}^{t(a)+1}$, we know from \Cref{claim:non-frac-agents} that $\sum_{q\in \calT_a} (y(a,q)-x^{t(a)+1}(a,q))=0$. Combining these two facts, we obtain
    \[
        \bigg|\sum_{q\in \calT_a}\big(y(a,q)-x^{t^*}(a,q)\big) \bigg| < 1,
    \]
    as claimed.

    If $t(a)=T$, we know that $\sum_{q\in \calT_a} x^{t(a)+1}(a,q) \leq 1$ from the definition of $x^{T+1}$ in \Cref{alg:main}.
    Since we know that $0<\sum_{q\in \calT_a} x^{t^*}(a,q)< 1$ from the definition of $t^*$, we conclude once again that
    \begin{align*}
        \bigg|\sum_{q\in \calT_a} \big(y(a,q)-x^{t^*}(a,q)\big) \bigg| & = \bigg|\sum_{q\in \calT_a} \big(x^{t(a)+1}(a,q)-x^{t^*}(a,q)\big) \bigg|<1.\qedhere
    \end{align*}
\end{proof}

We can now directly conclude that $y$ satisfies inequality \eqref{ineq:resources-total-app} using the previous claims, since
\begin{align*}
    \bigg|\sum_{a\in A} \sum_{q\in \calT_a} \omega_a \cdot (y(a,q)-x(a,q)) \bigg| & = \bigg|\sum_{a\in A} \omega_a  \sum_{q\in \calT_a} \big(y(a,q)-x^{t^*}(a,q)\big) \bigg| \\
    & \leq \sum_{a\in A(\calF^{t^*})\setminus \tilde{A}^{t^*}} \omega_a \cdot \bigg| \sum_{q\in \calT_a} \big(y(a,q)-x^{t^*}(a,q)\big) \bigg|\\
    & < \omega^* \cdot |A(\calF^{t^*})\setminus \tilde{A}^{t^*}| \leq \omega^* \cdot \Delta,
\end{align*}
where the equality follows from \eqref{eq:no-dev-before-t*}, the first inequality from \Cref{claim:non-frac-agents} and the triangle inequality, the second one from \Cref{claim:frac-agents} and $\omega_a\leq \omega^*$, and the last one from the definition of $t^*$.
\end{proof}

\section{Applications}\label{sec:applications}

In this section, we show how to use our rounding \Cref{thm:main} to get approximation guarantees for several resource allocation problems that fall in our MCRA setting.
In \Cref{subsec:group-fairness}, we provide novel approximation guarantees for MCRA instances under group fairness constraints by rounding an optimal solution of a convex program used to model fairness across groups.
Then, in \Cref{subsec:couples}, we show how our framework can also be combined with stability requirements to obtain approximation guarantees for near-feasible stable allocation under group fairness constraints.
Finally, in \Cref{subsec:apportionment}, we provide enhanced guarantees for near-feasible allocations in multidimensional political apportionment.

\subsection{Group-fairness in Resource Allocation}\label{subsec:group-fairness}

We consider a fairly general resource allocation setting where every agent $a$ has a certain demand of $\omega_a$ resources and a certain utility for each bundle, and the goal is to assign exactly one bundle to each agent. 
Formally, an instance of the {\it assignment-MCRA} problem is a tuple $\calI=(A,R,E,G,\omega,c,u)$, where $A,R,G,\omega$, and $c$ are structured in the same way as in the MCRA, we have a utility function $u_a\colon \calT_a \to \RR_+$ for each agent $a\in A$,
and for every pair $(a,r)\in E\subseteq A\times R$ we say that $r$ is {\it acceptable} for $a$.
We continue to denote $\calE = \{(a,q): a\in A, q\in \calT_a\}$ and $\omega^*=\max_{a\in A}\omega_a$. 

For a given instance and an agent $a\in A$, we let $\calT_{a,E}$ be the set of $\omega_a$-bundles $q\in \calT_a$ such that $(a,r)\in E$ for every $r\in R$ with $q(r)\geq 1$, i.e., bundles made of acceptable resources for $a$. 
To distinguish the notation from that of the previous section, we write $\calM = \{(a,q): a\in A, q\in \calT_{a,E}\}$ for the set of feasible agent-bundle pairs in this context
A mapping $x$ on $\calM$ is a (fractional) resource allocation for $\calI$ if its natural extension $x'$ on $\calE$, where $x'(a,q)=x(a,q)$ for $(a,q)\in \calM$ and $x'(a,q)=0$ for $(a,q)\in \calE \setminus \calM$, is a (fractional) resource allocation for the instance $(A,A,R,G,\omega,c)$ of MCRA, i.e., $A'=A$ so every agent is allocated exactly one bundle. 
We call an instance of assignment-MCRA \textit{fractionally feasible} if it admits at least one fractional resource allocation.

\paragraph{\bf Group fairness.} Similarly to \citet{SMNS24approximation}, we model the fairness requirements across groups by following an optimization-driven approach; namely, our goal is to find a resource allocation that maximizes the sum of a certain objective function of each group's utility.
Formally, given a non-decreasing concave function $f\colon \RR_+ \to \RR_+$ and an instance $\calI$ of assignment-MCRA, consider the following maximization problem:
\begin{equation}
   \max\Big\{ \textstyle\sum_{\ell\in [d]}\sum_{i\in [k_\ell]} f(U_{\ell,i}(x)):x\text{ is a fractional resource allocation}\Big\}.\tag*{\normalfont{\mbox{[Fair]}}}\label{eq:obj-groups}
\end{equation}
Recall that $U_{\ell,i}(x)$ corresponds to the utility of group $G_{\ell,i}$, and therefore, in \ref{eq:obj-groups}, the goal is to find a fractional resource allocation that maximizes the total utility across the groups.
The program \ref{eq:obj-groups} can capture a broad family of natural fairness notions by setting the appropriate function $f$, like the classic utilitarian objective with $f(z)=z$ and the celebrated {\it proportionality} objective by using $f(z)=\ln(z)$; see, e.g.,~\citet{young2020equity}. 
In general, we say that an optimal solution for \ref{eq:obj-groups} is {\it fair with respect to} $f$ and denote it by $x^f$.
We remark that this optimal fractional allocation can be computed using state-of-the-art routines for convex optimization due to the concavity of the objective function; see, e.g.,~\citet{bubeck2015convex}.

\paragraph{\bf Near-feasible fair allocations.} Given an instance of assignment-MCRA and a non-decreasing concave function $f\colon \RR_+ \to \RR_+$, we say that a mapping $y\colon \calM \to \{0,1\}$ is an $(\alpha,\delta,\Delta^+)$-{\it approximately fair allocation with respect to $f$} if the following holds:
\begin{align}
    \textstyle\sum_{q\in \calT_{a,E}} y(a,q) & = 1 \qquad\qquad \text{ for every } a\in A,\label[eq]{cons:agents-eq-gf}\\
    | U_{\ell,i}(y) - U_{\ell,i}(x^f)| & <  \alpha_\ell\cdot U^*_{\ell,i} \quad\; \text{ for every } \ell\in [d],\ i\in [k_\ell],\label[ineq]{ineq:groups-gf}\\
    \textstyle\sum_{a\in A} \sum_{q\in \calT_{a,E}} q(r)\cdot y(a,q) - c(r) & \leq \delta \qquad\qquad \text{for every } r\in R,\label[ineq]{ineq:resources-gf}\\
   \textstyle \sum_{r\in R} \max\{0,\sum_{a\in A} \sum_{q\in \calT_{a,E}} q(r) \cdot y(a,q) - c(r) \} & \leq \Delta^+,\label[ineq]{ineq:resources-abs-gf}
\end{align}
where $\alpha \in \NN_0^d$, and $\delta,\Delta^+\in \NN_0$. 
In such near-feasible allocation, the utility of each group $G_{\ell,i}$ deviates strictly less than $\alpha_\ell\cdot  U^{*}_{\ell,i}$ from the utilities in the fair with respect to $f$ solution $x^f$, the capacity of each resource is exceeded by at most $\delta$, and the total excess with respect to the resource capacities is at most $\Delta^+$.
We remark that since in the assignment-MCRA every agent is binding, the total deviation from $x^f$ on the number of allocated resources is always equal to zero; instead, we aggregate the excess usage of each resource as a sensible parameter in this setting. 
This notion of approximately fair allocations is closely related to those by \citet{procacciaetal} and \citet{SMNS24approximation}, but more general: 
The former work does not take the individual deviations from resource capacities into account, while the latter imposes the group utilities in the fair fractional allocation as lower bounds only.
Using our rounding \Cref{thm:main} along with a bound on the number of fractional variables of the initial fair allocation, we get the following guarantees for near-feasible allocations. 
\begin{restatable}{theorem}{corGroupFairness}\label{cor:group-fairness}
    Let $\calI$ be a fractionally feasible instance of assignment-MCRA and $f\colon \RR_+ \to \RR_+$ a non-decreasing concave function. Let $\alpha\in \NN^d_0$ and $\delta\in \NN_0$ be such that
    \[
        \sum_{\ell \in [d]}\frac{1}{\alpha_\ell+1} + \frac{\omega^*}{\delta+2} \leq \frac{1}{2},
    \]
    and let $\Delta^+= \min\{(\omega^*-1)|A|+\omega^*|R|+(\omega^*+1)\sum_{\ell\in [d]}k_{\ell}, \delta |R|\}$. Then, there exists an $(\alpha,\delta,\Delta^+)$-approximately fair allocation for $\calI$ with respect to $f$. Furthermore, this allocation can be found in time polynomial in $|A|,|R|^{\omega^*}$, and $\sum_{\ell\in [d]}k_{\ell}$.
\end{restatable}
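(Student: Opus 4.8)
### Proof proposal for \Cref{cor:group-fairness}

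\textbf{Overall strategy.} The plan is to reduce the statement to an application of \Cref{thm:main}. We start from the fair fractional allocation $x^f$, an optimal solution of \ref{eq:obj-groups}, which exists because $\calI$ is fractionally feasible and the feasible region of \ref{eq:obj-groups} is a nonempty polytope on which the continuous concave objective attains its maximum. We may assume $x^f$ is a \emph{vertex} of the assignment polytope (replacing it by a vertex of the face of optimal solutions does not change any $U_{\ell,i}$, since the objective is constant there and depends only on the group utilities). We then feed $x^f$ into \Cref{alg:main} on the MCRA instance $(A,A,R,G,\omega,c')$ with inflated capacities $c'(r) = c(r) + \delta$, so that the additive deviation $\delta$ allowed by \Cref{thm:main} translates into the one-sided excess bound \cref{ineq:resources-gf}. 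Since $A' = A$, every agent is binding, so $\calA(x)$-issues are immaterial and we are in the regime where we want $\psi = 1$ forced (indeed $d$ may be $1$): we apply \Cref{thm:main} with $\psi = 1$ and $\Delta = 2$, which is exactly case (i) of the theorem's hypotheses. The condition \cref{ineq:deviations} we must verify is then $\tfrac12 + \sum_{\ell}\tfrac1{\alpha_\ell+1} + \tfrac{\omega^*}{\delta+1} \le 1$; this is \emph{weaker} than the hypothesis $\sum_\ell \tfrac1{\alpha_\ell+1} + \tfrac{\omega^*}{\delta+2} \le \tfrac12$ assumed here (using $\delta+1 \ge \delta+2 - 1$... actually $\tfrac{\omega^*}{\delta+1} \ge \tfrac{\omega^*}{\delta+2}$, so one needs to be careful — see below), so the shift from $\delta+2$ to $\delta+1$ in the denominator is the point that must be reconciled.

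\textbf{Handling the $\delta$ versus $\delta+1$ discrepancy.} The subtlety is that the corollary states its hypothesis with $\delta+2$ in the denominator but asks only for per-resource excess $\le \delta$. The resolution is that \Cref{thm:main} applied with deviation parameter $\delta' := \delta+1$ gives $|\sum_{a,q} q(r)(y(a,q)-x^f(a,q))| < \delta' = \delta+1$; since the left-hand side is an integer (all $q(r)$, $y$, and — for $x^f$ a vertex with integral coordinates on resource constraints, or after noting the quantity $\sum_{a,q}q(r)y(a,q)$ is integral because $y$ is) it is at most $\delta$, hence $\sum_{a,q}q(r)y(a,q) \le c(r) + \delta$ as $x^f$ is feasible. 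Thus I would invoke \Cref{thm:main} with parameter $\delta+1$ in place of $\delta$, and then \cref{ineq:deviations} reads $\tfrac12 + \sum_\ell \tfrac1{\alpha_\ell+1} + \tfrac{\omega^*}{\delta+2} \le 1$, i.e. $\sum_\ell \tfrac1{\alpha_\ell+1} + \tfrac{\omega^*}{\delta+2} \le \tfrac12$, which is precisely the corollary's hypothesis. With $\psi=1$ and $\Delta=2$ this is case (i), so the theorem applies and yields a rounding $y$ of $x^f$ satisfying \cref{ineq:groups-app} (which is \cref{ineq:groups-gf}) and the one-sided version of \cref{ineq:resources-app} (which is \cref{ineq:resources-gf}), and it respects \cref{cons:agents-eq} for all of $A$, giving \cref{cons:agents-eq-gf}. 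Because $y$ is a rounding of $x^f$, entries with $x^f(a,q)=0$ stay $0$, so $y$ is supported on $\calM$ and restricts to a genuine map $\calM\to\{0,1\}$.

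\textbf{The total-excess bound $\Delta^+$.} It remains to establish \cref{ineq:resources-abs-gf} with the stated $\Delta^+ = \min\{(\omega^*-1)|A| + \omega^*|R| + (\omega^*+1)\sum_\ell k_\ell,\ \delta|R|\}$. The second term is immediate: by \cref{ineq:resources-gf} each of the $|R|$ resources contributes at most $\delta$ to $\sum_r \max\{0,\cdot\}$. For the first term, the idea is to bound the \emph{number of fractional variables of $x^f$} — this is where the "bound on the number of fractional variables of the initial fair allocation" mentioned just before the statement enters. Since $x^f$ is a vertex of the polytope defined by \cref{cons:agents-eq} (for all $a\in A$), \cref{cons:resources-cap} (for all $r\in R$), and box constraints, a standard counting argument on the support of a vertex shows that the number of strictly-fractional entries is at most (number of tight non-box constraints) which is bounded using $|A|$, $|R|$, and — once we track how the iterative rounding of \Cref{alg:main} only ever creates new tight constraints among the group constraints — a term linear in $\sum_\ell k_\ell$. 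More precisely, I would argue that the only resources whose capacity is exceeded by $y$ are those incident to a fractional variable at some stage, and that the total $\ell_1$-excess $\sum_r \max\{0, \sum_{a,q}q(r)y(a,q) - c(r)\} \le \sum_r |\sum_{a,q}q(r)(y(a,q)-x^f(a,q))|$ is controlled by $\omega^*$ times the number of agents that get "re-rounded", which in turn is bounded via the support size. Tracking the constants carefully to land exactly on $(\omega^*-1)|A| + \omega^*|R| + (\omega^*+1)\sum_\ell k_\ell$ is the bookkeeping-heavy part; the clean conceptual input is just "vertex $\Rightarrow$ few fractional variables $\Rightarrow$ few and small deviations."

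\textbf{Main obstacle.} The genuinely delicate step is the first-term bound in $\Delta^+$: it is not a direct corollary of \Cref{thm:main} (whose $\Delta$-guarantee is about deviation in \emph{total} resources used, $\sum_a \omega_a$, not about the sum of per-resource \emph{excesses}), so it requires a separate structural argument combining (a) a vertex-support count for $x^f$ in the assignment polytope, (b) the observation that \Cref{alg:main} with $\Delta=2,\psi=1$ keeps all per-resource deviations concentrated on a small set of hyperedges, and (c) careful accounting to match the stated coefficients $\omega^*-1$, $\omega^*$, $\omega^*+1$. The running time claim is routine: it is inherited from the polynomial running time in \Cref{thm:main} together with the fact that $x^f$ is computable in polynomial time by convex optimization, as remarked after \ref{eq:obj-groups}.
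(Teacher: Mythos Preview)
Your high-level plan is right and matches the paper: apply \Cref{thm:main} with parameter $\delta+1$ (so the hypothesis becomes exactly $\tfrac12+\sum_\ell\tfrac1{\alpha_\ell+1}+\tfrac{\omega^*}{\delta+2}\le 1$), use integrality of $\sum_{a,q}q(r)y(a,q)$ to turn the strict $<\delta+1$ into $\le\delta$, and then bound $\Delta^+$ separately via a vertex-support count. The capacity-inflation detour is unnecessary but harmless, and your choice $\Delta=2$ is fine (in assignment-MCRA every agent is binding, so the total-deviation constraint is vacuous anyway).

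The genuine gap is in how you set up the vertex for the $\Delta^+$ bound. Your claim that ``we may assume $x^f$ is a vertex of the assignment polytope'' by ``replacing it by a vertex of the face of optimal solutions'' is wrong on two counts. First, for a general concave (not linear) $f$, the optimal set of \ref{eq:obj-groups} is \emph{not} a face of the polytope; it can be a single interior point of a face (think $f=\ln$). Second, even on the optimal set the objective $\sum_{\ell,i}f(U_{\ell,i})$ being constant does not force each $U_{\ell,i}$ to be constant, so moving to a different optimal point can change the group utilities you are trying to approximate. The paper's fix is the key step you are missing: after solving \ref{eq:obj-groups} for some optimal $x^*$, take $x^f$ to be an extreme point of the polytope obtained by \emph{adding} the linear constraints $U_{\ell,i}(x)\ge U_{\ell,i}(x^*)$ to the assignment polytope. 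Monotonicity of $f$ guarantees this $x^f$ is still optimal for \ref{eq:obj-groups}, and now the vertex-counting equality $|\calF|=|A(\calF)|+|\tilde{R}|+\sum_\ell|\tilde{G}_\ell|$ legitimately carries a group term, which is precisely what produces the $(\omega^*+1)\sum_\ell k_\ell$ coefficient in $\Delta^+$.

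Once $x^f$ is set up this way, the $\Delta^+$ argument is purely \emph{static}: since $y$ is a rounding of $x^f$, only entries in $\calF$ change, so for each resource $r$ the excess $\sum_{a,q}q(r)y(a,q)-c(r)$ is bounded by $\sum_{(a,q)\in\calF(r)}q(r)$ (minus one if $r$'s capacity constraint was tight at $x^f$), and summing these bounds against the vertex equation gives the stated expression. Your sketch invoking ``the algorithm keeps deviations concentrated on a small set of hyperedges'' is not how the argument goes and is not needed; the rounding property of $y$ alone suffices.
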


\begin{proof}
Let $\calI=(A,R,E,G,\omega,c,u)$, $f$, $\alpha$, $\delta$, and $\Delta^+$ be as in the statement.
We let $x^f$ be an optimal solution for \ref{eq:obj-groups} constructed as follows.
We first solve the program \ref{eq:obj-groups} and let $x^*\colon \calM\to [0,1]$ be an arbitrary optimal solution. 
We now let $x^f$ be any extreme point of the polytope containing all fractional resource allocations that guarantee each group as much utility as $x^*$; i.e., the following polytope with variables $x\colon \calM\to [0,1]$:
\begin{align}
        \sum_{q\in \calT_{a,E}} x(a,q) & = 1 \qquad\quad \text{ for every } a\in A,\label[cons]{cons:agents-fairness}\\
	\sum_{a\in A} \sum_{q\in \calT_{a,E}} q(r)\cdot x(a,q) & \leq c(r) \qquad \text{for every } r\in R,\label[cons]{cons:resources-cap-fairness}\\
        U_{\ell,i}(x) & \geq U_{\ell,i}(x^*) \qquad \text{for every } \ell\in [d],\ i\in [k_\ell].\label[cons]{cons:utilities-fairness}
\end{align}
Clearly, the extension of $x^f$ to the domain $\calE$, which has value $x^f(a,q)=0$ whenever $(a,q)\notin \calM$, is a fractional resource allocation for the instance $(A,A,R,G,\omega,c)$ of MCRA.
In addition, since $\sum_{q\in \calT_{a,E}}x^f(a,q)=1$ for every $a$, we know that either $x^f$ has only integer components or $\calA(x^f)\neq \emptyset$.
In the former case we conclude the result immediately for $x^f$, so in what follows we assume that $\calA(x^f)\neq \emptyset$.
We take $\psi=1$ and, since
\[
    \frac{\mathds{1}_{\psi=1}}{2} + \sum_{\ell \in [d]}\frac{1}{\alpha_\ell+1} + \frac{\omega^*}{\delta+2} \leq 1,
\]
we can apply \Cref{thm:main} for this instance and allocation, $\alpha$, $\delta+1$, $\Delta=0$, and our utilities $u_a$ for each agent $a\in A$.
This theorem implies the existence of a rounding $y\colon \calM\to \{0,1\}$ such that
\begin{align}
    \sum_{q\in \calT_{a,E}} y(a,q) & = 1 \qquad\quad\ \text{ for every } a\in A,\label[eq]{eq:y-agents-eq-gf}\\
    | U_{\ell,i}(y) - U_{\ell,i}(x^f)| & <  \alpha_\ell \cdot U^{*}_{\ell,i} \quad \text{for every } \ell\in [d],\ i\in [k_\ell],\label[ineq]{ineq:y-groups-app-gf}\\
    \bigg|\sum_{a\in A} \sum_{q\in \calT_a} q(r)\cdot (y(a,q) - x^f(a,q))\bigg| & < \delta+1 \quad\quad \text{for every } r\in R.\label[ineq]{ineq:y-resources-app-gf}
\end{align}
We claim the result for this mapping $y$.
We need to show that $y$ can be found in time polynomial in $|A|,|R|^{\omega^*}$, and $\sum_{\ell\in [d]}k_{\ell}$, and that it satisfies \cref{cons:agents-eq-gf,ineq:groups-gf,ineq:resources-gf,ineq:resources-abs-gf}.

That $y$ mapping can be found in time polynomial in $|A|,|R|^{\omega^*}$, and $\sum_{\ell\in [d]}k_{\ell}$ follows from \Cref{thm:main} and the fact that, before applying this theorem, we solve a convex program with linear constraints and a linear program, both with a number of variables and constraints that are polynomial in these inputs; see, e.g.,~\citet{korte2011combinatorial,bubeck2015convex}. 
That $y$ satisfies \cref{cons:agents-eq-gf,ineq:groups-gf} follows immediately from \cref{eq:y-agents-eq-gf,ineq:y-groups-app-gf}.
That $y$ satisfies \cref{ineq:resources-gf} follows from \cref{ineq:y-resources-app-gf}, the fact that $\sum_{a\in A} \sum_{q\in \calT_a} q(r)\cdot x^f(a,q) \leq c(r)$ for every $r\in R$ since $x^f$ is a fractional resource allocation for $\calI$, and that $\sum_{a\in A} \sum_{q\in \calT_a} q(r)\cdot y(a,q)$ is an integer value for every $r\in R$.

In what follows, we show that \cref{ineq:resources-abs-gf} holds when $\Delta^+$ is defined as in the statement.
On the one hand, it is clear that
\[
    \sum_{r\in R} \max\bigg\{0,\sum_{a\in A} \sum_{q\in \calT_a} q(r) \cdot x^f(a,q) - c(r) \bigg\} \leq \sum_{r\in R}\delta = \delta|R|,
\]
where the inequality follows from the fact that $x^f$ and $\delta$ satisfy \cref{ineq:resources-gf}.
Thus, if $\Delta^+=\delta|R|$, \cref{ineq:resources-abs-gf} is satisfied.
To check that this is also the case when $\Delta^+=(\omega^*-1)|A|+\omega^*|R|+(\omega^*+1)\sum_{\ell\in [d]}k_{\ell}$, we follow a similar approach as \citet{SMNS24approximation} but for arbitrary $\omega^*$.

We need some additional notation.
Similarly to that used in \Cref{sec:main}, we let $\calF=\{e\in \calM: x^f(e)\in (0,1)\}$ denote the agent-bundle pairs corresponding to fractional entries of $x^f$,
\begin{align*}
    \calF(a) & = \{(a', q) \in \calF : a' = a\} \qquad\text{ for every } a\in A,\\
    \calF(\ell, i) & = \{(a, q) \in \calF : a \in G_{\ell,i}\}\qquad\text{for every } \ell\in[d], i\in [k_\ell],\\
    \calF(r) & = \{(a, q) \in \calF : q(r) \geq 1\}  \quad\ \text{for every } r\in R
\end{align*}
denote the subset of such pairs associated with a certain agent, group, and resource, respectively, and $A(\calF) = \{a \in A : |\calF(a)| \geq 1\}$ $G_{\ell}(\calF) = \{i \in [k_\ell] : |\calF(\ell, i)| \geq 1\}$ for each $\ell\in [d]$, and $R(\calF) = \{r \in R : |\calF(r)| \geq 1\}$ denote the agents, groups, and resources with at least one associated fractional entry of $x^f$.

    We let $\tilde{R}= \{r\in R: \sum_{a\in A} \sum_{q\in \calT_{a,E}} q(r)\cdot x^f(a,q) = c(r)\}$ be the subset of resources whose associated \cref{cons:resources-cap-fairness} is tight at $x^f$ and $\tilde{G}_\ell=\{i\in [k_\ell]: U_{\ell,i}(x^f) =U_{\ell,i}(x^*)\}$ the set of groups in the $\ell$th dimension whose associated \cref{cons:utilities-fairness} is tight at $x^f$.
    Since $x^f$ is an extreme point of the polytope given by \cref{cons:agents-fairness,cons:resources-cap-fairness,cons:utilities-fairness}, we know that
    \begin{equation}
        |A(\calF)|+|\tilde{R}| + \sum_{\ell\in [d]}|\tilde{G}_\ell| = |\calF|.\label[eq]{eq:vars-cons-ep-fairness}
    \end{equation}
    On the other hand, since $\sum_{r\in R} q(r)\leq \omega^*$ for all $q\in \calT_a$ and $a\in A$, we know that
    \[
        \sum_{a\in A(\calF)} |\calF(a)| + \sum_{r\in R(\calF)}\sum_{(a,q)\in \calF(r)} q(r) \leq (\omega^*+1)|\calF| = (\omega^*+1)\bigg( |A(\calF)|+|\tilde{R}| + \sum_{\ell\in [d]}|\tilde{G}_\ell| \bigg),
    \]
    where the second equality follows from \cref{eq:vars-cons-ep-fairness}.
    This implies
    \begin{align}
        & \sum_{r\in \tilde{R}}\bigg(\sum_{(a,q)\in \calF(r)} q(r)-2\bigg) + \sum_{r\in R(\calF)\setminus \tilde{R}}\sum_{(a,q)\in \calF(r)} q(r)\nonumber \\
        &\leq  \sum_{a\in A(\calF)} (|\calF(a)|-2) + \sum_{r\in \tilde{R}}\bigg(\sum_{(a,q)\in \calF(r)} q(r)-2\bigg) + \sum_{r\in R(\calF)\setminus \tilde{R}}\sum_{(a,q)\in \calF(r)} q(r) \nonumber\\
        &\leq  (\omega^*-1)( |A(\calF)|+|\tilde{R}| ) + (\omega^*+1)\sum_{\ell\in [d]}|\tilde{G}_\ell| \nonumber\\
        &\leq  (\omega^*-1)(|A|+|R|)+(\omega^*+1)\sum_{\ell\in [d]}k_\ell,\label[ineq]{ineq:ub-degree-resources-gf}
    \end{align}
    where we used in the first inequality that $|\calF(a)|\geq 2$ for each $a\in A(\calF)$ since $\sum_{q\in \calT_{a,E}}x^f(a,q)=1$.
    We now make use of the following simple claim to conclude.
\begin{claim}\label{claim:dev-resources-gf}
    For every $r\in R(\calF)$, we have
    \[
        \sum_{a\in A} \sum_{q\in \calT_{a,E}} q(r) \cdot y(a,q) - c(r) \leq \begin{cases} \sum_{(a,q)\in \calF(r)} q(r)-1 & \text{if } r\in \tilde{R}\\
        \sum_{(a,q)\in \calF(r)} q(r) & \text{otherwise.}
        \end{cases}
    \]
\end{claim}
\begin{proof}
    Let $r\in \tilde{R}$.
    Since the sum $\sum_{(a,q)\in \calF(r)}q(r)x^f(a,q)$ is an integer by the definition of $\tilde{R}$ and is non-zero due to $r\in R(\calF)$, it must be at least $1$. 
    From the feasibility of $x^f$, we then have
    \[
       \sum_{(a,q)\in\calM\setminus \calF(r)}q(r)\cdot x^f(a,q) \leq c(r) - \sum_{(a,q)\in \calF(r)}q(r) \cdot x^f(a,q) \leq c(r) - 1.
    \]
    Combining this inequality with the fact that $y(a,q)=x^f(a,q)$ for every $(a,q)\in \calM\setminus \calF(r)$ and $y(a,q)\leq 1$ for every $(a,q)\in \calF(r)$, we obtain
    \begin{align*}
        \sum_{a\in A} \sum_{q\in \calT_{a,E}} q(r) \cdot y(a,q) - c(r) & \leq \sum_{(a,q)\in\calM\setminus \calF(r)}q(r)\cdot x^f(a,q) + \sum_{(a,q)\in \calF(r)}q(r) - c(r)\\
        & \leq \sum_{(a,q)\in \calF(r)}q(r) - 1,
    \end{align*}
    which concludes the proof for this case.

    Let now $r\in R(\calF)\setminus \tilde{R}$.
    In this case, the fact that $y(a,q)=x^f(a,q)$ for every $(a,q)\in \calM\setminus \calF(r)$ and $y(a,q)\leq 1$ for every $(a,q)\in \calF(r)$, along with the feasibility of $x^f$, imply
    \begin{align*}
        \sum_{a\in A} \sum_{q\in \calT_{a,E}} q(r) \cdot y(a,q) - c(r) & \leq \sum_{(a,q)\in\calM\setminus \calF(r)}q(r)\cdot x^f(a,q) + \sum_{(a,q)\in \calF(r)}q(r) - c(r)\\
        & \leq \sum_{(a,q)\in \calF(r)}q(r).\qedhere
    \end{align*}
\end{proof}    

We can now conclude the proof by observing that
\begin{align*}
    \sum_{r\in R} \max\bigg\{0,\sum_{a\in A} \sum_{q\in \calT_{a,E}} q(r)\cdot y(a,q) - c(r) \bigg\} & \leq \sum_{r\in \tilde{R}}\bigg(\sum_{(a,q)\in \calF(r)} q(r)-2+1\bigg) + \sum_{r\in R(\calF)\setminus \tilde{R}}\sum_{(a,q)\in \calF(r)} q(r) \\
    & \leq (\omega-1)|A|+\omega^* |R|+(\omega^*+1)\sum_{\ell\in [d]}k_\ell,
\end{align*}
where the first inequality follows from \Cref{claim:dev-resources-gf} and the second one from \cref{ineq:ub-degree-resources-gf}.
Hence, \cref{ineq:resources-abs-gf} holds when $\Delta^+=(\omega^*-1)|A|+\omega^*|R|+(\omega^*+1)\sum_{\ell\in [d]}k_{\ell}$.
\end{proof}

\paragraph{\bf Consequences for proportional fairness.} In what follows, we discuss the consequences of our approximation guarantees in \Cref{cor:group-fairness} for the relevant case of proportional fairness, i.e., when we take $f(z)=\ln(z)$ in \ref{eq:obj-groups}.
We recall that in this case, the optimality conditions guarantee that any optimal fractional resource allocation $x^f$ of \ref{eq:obj-groups} satisfies 
\begin{equation*}
\sum_{\ell\in [d]}\sum_{i\in [k_{\ell}]}\frac{U_{\ell,i}(x)}{U_{\ell,i}(x^f)}\le \sum_{\ell\in [d]}k_{\ell}\text{ for every fractional resource allocation }x,
\end{equation*}
which is the classic proportional fairness notion~\citep{young2020equity}. 
Recently, \citet{procacciaetal} studied the design of school allocation policies with provable proportionality guarantees in the presence of groups on the students' side.
Their setting is captured by our assignment-MCRA framework when the agents are the students, the resources are the schools, there is a single dimension ($d=1$) with $k_1$ many groups, and every student $a$ is assigned to precisely one single school ($\omega_a=1$) among their acceptable schools $\calT_{a,E}$.   
Formally, for $d=1$ and $k_1=k$, we say that $y\colon \calM \to \{0,1\}$ is an $(\alpha,\delta,\Delta^+)$-{\it approximately proportional allocation} if it satisfies \eqref{cons:agents-eq-gf}, \eqref{ineq:resources-gf}, \eqref{ineq:resources-abs-gf}, and $U_{i}(y)\ge U_i(x)/k-\alpha\cdot U^*_i$ for every $i\in [k]$ and every fractional resource allocation $x$, where we have omitted the dimension subindex. 

Thanks to \Cref{cor:group-fairness}, we can trade off the values of $\alpha$ and $\delta$ to accommodate the policy-maker priorities in terms of utility approximation and maximum resource capacity augmentation for each resource (i.e., school capacities).
For instance, when $\omega^*=1$, both maximum deviations can be set to small constants; some pairs in the Pareto frontier defined by \Cref{cor:group-fairness} are $(\alpha,\delta)\in \{(2,4),(3,2),(5,1)\}$.
In practice, slight constant deviations from the capacity of each school constitute a natural goal, and this differentiates our result from previous work in this setting, where the focus was restricted to the deviations from group utilities and the total excess of allocated resources~\citep{procacciaetal,SMNS24approximation}.

While \Cref{cor:group-fairness} has no direct implications for the case where deviations $\alpha=0$ from group utilities or $\delta=0$ from school capacities are sought, it is not hard to see that these deviations can be achieved by simply rounding all fractional entries of our initial fair fractional allocation $x^f$ up or down, respectively.
This yields a non-constant deviation with respect to the other objective, potentially up to the order of $\Delta^+$.
We show that this cannot be avoided:
If we require $\alpha=0$ deviations from proportionality, we need to accept non-constant deviations $\delta$ from the schools' capacities; if we require $\delta=0$ deviation from the schools' capacities, we need to accept non-constant deviations $\alpha$ from proportionality.

\begin{restatable}{proposition}{propTightProp}\label{prop:tight-prop}
    For every $\alpha\in \NN_0^d$ and $\delta, \Delta^+\in \NN_0$, there exist instances $\calI, \calI'$ of assignment-MCRA with $d=1$ such that $\calI$ does not admit a  $(0,\delta,\Delta^+)$-approximately proportional allocation and $\calI'$ does not admit an $(\alpha,0,\Delta^+)$-approximately proportional allocation.
\end{restatable}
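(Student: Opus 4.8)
The plan is to build two small tailored instances, one for each extreme regime; both will use unit demands ($\omega_a=1$), so bundles are single schools and $U^*_{i}=1$ throughout.

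\textbf{The instance $\calI$ (ruling out $(0,\delta,\Delta^+)$-approximate proportionality).} I would use a pigeonhole gadget: one dimension, $k=\delta+2$ singleton groups $G_i=\{a_i\}$, one popular school $s^*$ with $c(s^*)=1$, and for each $i$ a private backup $b_i$ with $c(b_i)=1$; agent $a_i$ accepts exactly $s^*$ (utility $1$) and $b_i$ (utility $0$). This is fractionally feasible (send each $a_i$ to $b_i$), and the integral allocation $x^{(i)}$ that sends $a_i$ to $s^*$ and every other agent to its backup witnesses, via the proportionality requirement with $\alpha=0$, that $U_i(y)\ge U_i(x^{(i)})/k=1/k>0$ for every $i$; since $U_i(y)\in\{0,1\}$, this forces $y$ to place all $k=\delta+2$ agents in $s^*$, so $c(s^*)$ is exceeded by $\delta+1>\delta$, violating the per-school bound~\cref{ineq:resources-gf}. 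Hence $\calI$ admits no $(0,\delta,\Delta^+)$-approximately proportional allocation, for any value of $\Delta^+$.

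\textbf{The instance $\calI'$ (ruling out $(\alpha,0,\Delta^+)$-approximate proportionality).} Here the key is a rigidity gadget with exactly two integral allocations realizing ``all-or-nothing'' group profiles. Set $N=2\alpha+1$ and take $2N$ agents $a_1,\dots,a_{2N}$ and $2N$ schools $s_1,\dots,s_{2N}$ of capacity $1$, with $a_i$ accepting exactly $s_i$ and $s_{i-1}$ (indices modulo $2N$). The acceptability bipartite graph is a single cycle of length $4N$, hence has precisely two perfect matchings, $M_A=\{a_i\mapsto s_i\}$ and $M_B=\{a_i\mapsto s_{i-1}\}$, which (the agents are binding and $|A|=|R|$) are the only integral resource allocations. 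Put $a_i$ in $G_1$ for odd $i$ and in $G_2$ for even $i$, and choose utilities so that a $G_1$-agent has utility $1$ for its $M_A$-school and $0$ for its $M_B$-school, while a $G_2$-agent has utility $1$ for its $M_B$-school and $0$ for its $M_A$-school; this is consistent and gives $U(M_A)=(N,0)$ and $U(M_B)=(0,N)$. Instantiating the proportionality requirement with the fractional allocation $x=M_A$ yields $U_1(y)\ge U_1(M_A)/2-\alpha U^*_1=N/2-\alpha=\tfrac12>0$, hence $U_1(y)\ge1$; symmetrically $U_2(y)\ge1$. Neither $M_A$ nor $M_B$ satisfies both, so $\calI'$ admits no $(\alpha,\delta,\Delta^+)$-approximately proportional allocation for any $\delta,\Delta^+$, in particular none for $\delta=0$.

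The only step needing real care is the rigidity claim for $\calI'$: that the acceptability graph is a single even cycle with exactly two perfect matchings, and that the partition into groups together with the $0/1$ utilities can be fixed consistently so those two matchings realize the profiles $(N,0)$ and $(0,N)$; the remaining checks (fractional feasibility, $U^*_i=1$, and the capacity arithmetic) are routine. This rigidity is also the conceptual point: the ``obvious'' construction with a single scarce school has fractional and integral reachable utility-profile regions close enough that exact proportionality can be met integrally with only $O(k)$ slack once $\alpha\ge1$, so one genuinely needs an instance—like the even cycle—whose fractional profile region is a long segment with only its two endpoints integral, and then lets $N$ grow with $\alpha$.
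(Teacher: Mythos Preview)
Your proposal is correct and follows essentially the same approach as the paper: for $\calI$ you use singleton groups competing for a single high-utility school (forcing all agents into it under exact proportionality), and for $\calI'$ you use an even-cycle acceptability graph with two alternating groups whose only two perfect matchings realize the opposite utility profiles $(N,0)$ and $(0,N)$. The paper's $\calI'$ (attributed there to \citet{procacciaetal}) is exactly this cycle gadget, phrased with homogeneous utilities ``$u_a(r_i)=1$ iff $i$ odd''---which, once you unwind your per-group utility assignment, is precisely what you wrote; your extra step of using integrality of $U_i(y)$ to sharpen $U_i(y)\geq 1/2$ to $U_i(y)\geq 1$ is a nice touch but not a different idea.
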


For $\calI$, we take an instance where all schools have capacity one and all students belong to a different group and have positive utility for a single school.
Proportionality then implies a large violation of the capacity of this school. 
For $\calI'$, we consider an example that \citet{procacciaetal} used to prove a lower bound, linear in the number of groups, on the smallest possible simultaneous deviation from proportionality and aggregate capacity that an allocation can achieve.
It is based on cycles of students from two alternating groups and schools of alternating quality, so that forbidding capacity violations leaves a group with zero utility and the other group with a large utility.
\begin{proof}[Proof of \Cref{prop:tight-prop}]
    Let $\alpha$, $\delta$, and $\Delta^+$ be as in the statement.
    For $\calI$, let $m,n\in \NN$ be such that $m\geq n$, consider agents $A = \{a_1,\dots, a_n\}$ with $a_j\in G_j$ for $j\in [n]$, and resources $R = \{r_1, \dots, r_m\}$ with capacity $c(r_i) = 1$ for $i \in [m]$.
    We consider $\omega_a=1$ for every $a\in A$ and, for simplicity, replace bundles with single resources when referring to utilities and assignments.
    The utilities are given by
     \[
        u_a(r_i) =
        \begin{cases}
        1 & \text{if } i=1,\\
        0 & \text{otherwise},
        \end{cases}
     \] 
    for each $a\in A$ and $i\in [m]$.
    In order to have a $(0,\delta,\Delta^+)$-approximately proportional allocation $y\colon A\times R\to \{0,1\}$, we need that, for every agent $a\in A$,
    \[
        \sum_{j\in [m]}u_a(r_j)\cdot y(a,r_j) \geq \frac{1}{n}\max\bigg\{ \sum_{j\in [m]}u_a(r_j)\cdot x(a,r_j): x \text{ is a resource allocation} \bigg\} = \frac{1}{n},
    \]
    where the second inequality comes from the fact that $x_a\colon A\times R \to \{0,1\}$ given by $x_a(a,r_1)=1$ and $x_b(b,r_{j(b)})=1$ for every $b\in A\setminus \{a\}$, where $j\colon A\setminus \{a\}\to [m]\setminus \{1\}$ is any injective function, constitutes a resource allocation with $\sum_{j\in [m]} u_a(r_j)\cdot x(a,r_j)=1$.
    We thus conclude that $y(a,r_1)=1$ for every $a\in A$ and thus 
    \[
        \sum_{a\in A} y(a,r_1) - c(r_1) = n-1. 
    \]
    The result then follows by taking $n$ such that $n-1> \delta$.
    
    For $\calI'$, we let $n\in \NN$ be an even number, partition the agents $A=[n]$ into two equally sized groups $G_{1} = \{a_1, a_2, \dots, a_{n/2}\}$ and $G_{2} = \{b_1, b_2, \dots, b_{n/2}\}$, and
    consider resources $R = \{r_1, r_2, \dots, r_n\}$ with capacities $c(r) = 1$ for all $r \in R$.
    We again consider $\omega_a=1$ for every $a\in A$ and replace bundles with single resources when referring to utilities and assignments.
    The utilities are given by
    \[
    u_a(r_i) = 
    \begin{cases} 
    1 & \text{if } i \text{ is odd }, \\
    0 & \text{otherwise},
    \end{cases}
    \]
    for every $a\in A$.
    The feasibility set $E \subseteq A \times R$ is constructed cyclically to enforce the alternation between agents and resources: 
    \[
        E = \bigcup_{i\in [n/2]}\{(a_i,r_{2i-1}),(a_i,r_{2i})\} \cup \bigcup_{i\in [n/2-1]}\{(b_i,r_{2i}),(b_i,r_{2i+1})\},
    \]
    where, in slight abuse of notation, we denote $r_{n+1}=r_1$. 
    In order to have an $(\alpha,0,\Delta^+)$-approximately proportional allocation $y\colon A\times R\to \{0,1\}$, we need that $y\in \{y_1,y_2\}$, where
    \begin{align*}
        y_1(a_i,r_j)& =\begin{cases} 1& \text{ if } j=2i-1,\\
        0& \text{otherwise,}\end{cases} \qquad\qquad y_1(b_i,r_j)=\begin{cases} 1& \text{ if } j=2i,\\
        0& \text{otherwise,}\end{cases}\\
        y_2(a_i,r_j)& =\begin{cases} 1& \text{ if } j=2i,\\
        0& \text{otherwise,}\end{cases} \qquad\qquad y_2(b_i,r_j)=\begin{cases} 1& \text{ if } j=2i+1,\\
        0& \text{otherwise.}\end{cases}
    \end{align*}
    It is easy to see that
    \begin{align*}
        \sum_{j\in [n]}u_a(r_j)\cdot y_1(a,r_j) & = 0 \qquad\text{for every } a\in G_2,\\
        \sum_{j\in [n]}u_a(r_j)\cdot y_2(a,r_j) & = 0 \qquad\text{for every } a\in G_1.
    \end{align*}
    However, the maximum utilities under some resource allocation is ${n}/{2}$.
    Indeed, this is precisely the utility attained at $y_1$ for $G_1$ and at $y_2$ for $G_2$.
    Since the maximum utility for some bundle is $U^*_1=U^*_2=1$ for both groups, we conclude that a deviation of ${n}/{2}$ is required, and the result follows by taking $n$ such that $n>2\alpha$.
\end{proof}

\Cref{prop:tight-prop} implies that, in a sense, our result providing constant (but non-zero) deviations from both group utilities and school capacities is the best we can aim for.
The search for the best-possible constants is, however, a natural direction for future work.

We remark that, in addition to providing more flexibility for the maximum deviations, our framework directly handles agent-dependent utility functions, bundles consisting of more than a single resource, and multiple dimensions for the groups, which arise naturally, e.g., when seeking fairness across overlapping groups. 
Our result opens the door to designing proportional allocation policies under several socio-demographic dimensions, enhancing the policy's fairness guarantees.

\paragraph{\bf A new notion of envy-freeness.}\label{subsubsec:envy-freeness}
In the same context of school redistricting $(d=1,\omega^*=1)$, \citet{procacciaetal} also studied the existence of approximately {\it envy-free} allocations.
In their work, an allocation is $\alpha$-envy-free if for every pair of groups $i_1,i_2\in [k]$ there is no alternative allocation where (i) agents in $i_1$ are allocated a subset of those resources allocated to agents in $i_2$ in the original allocation, and (ii) the utility of group $i_1$ increases by more than $\alpha\cdot U^*_{i_1}$ with respect to the original allocation. 
Note that this notion only makes sense if the demands of all agents are the same and the utilities of agents in the same group are the same, which are modeling assumptions in \citet{procacciaetal}.\footnote{In fact, they assume that the utilities of all agents are the same.} 
However, under this envy-freeness notion, they proved a strong impossibility as there is a family of instances for which the deviation grows linearly in the number of agents.

To get around this impossibility, we introduce a relaxed version of envy-freeness, where the ratio between the total utility of group $G_{\ell,i}$ for its allocation and its total utility for the allocation of the group $G_{\ell,j}$ should not be smaller than the ratio between the sizes of these groups.
Note that when groups have unit size, this is equivalent to the classic notion of envy-freeness in fair division of indivisible goods; e.g.,~\citet{moulin2004fair}. 
We still consider common demands, i.e., $\omega_a=\omega^*$ for every $a\in A$. We thus denote the (common) set of possible bundles by $\calT$ for simplicity.
We say that an instance of assignment-MCRA is {\it group-homogeneous} if $\omega_a=\omega^*$ for every $a\in A$ and, for every dimension $\ell\in [d]$, every $i\in [k_\ell]$, and every $a,b\in G_{\ell,i}$, we have $u_a(q)=u_b(q)$ for every $q\in \calT$. We denote this common (group) utility function by $u_{\ell,i}$. 
For a group-homogeneous instance of assignment-MCRA, we say that $y\colon \calM \to \{0,1\}$ (resp. $[0,1]$) is an $(\alpha,\delta)$-{\it approximately envy-free allocation} (resp. fractional) if it satisfies \eqref{cons:agents-eq-gf}, \eqref{ineq:resources-gf}, and 
\begin{equation}
	\frac{|G_{\ell,i}|}{|G_{\ell,j}|} \sum_{b\in G_{\ell,j}} \sum_{q\in \calT} u_{\ell,i}(q)\cdot y(b,q) - U_{\ell,i}(y) < \alpha_\ell\cdot U^*_{\ell,i}\label[ineq]{ineq:groups-app-ef}
\end{equation}
for every $\ell\in [d]$ and $i,j\in [k_\ell]$, where $\alpha\in \NN^d$ and $\delta,\Delta\in \NN$. 
That is, the maximum deviation from our notion of envy freeness for each group $G_{\ell,i}$ is captured by $\alpha_\ell$ times the maximum utility of this group for a single resource, and the meaning of the maximum deviation $\delta$ from resource capacities is the same as in approximately fair allocations.
We omit the total excess $\Delta^+$ because only the trivial bound $\delta|R|$ on this deviation will remain in this case.
We obtain the following result.

\begin{restatable}{theorem}{propEnvyFreeness}\label{prop:envy-free-homog}
Let $\calI$ be a group-homogeneous and fractionally feasible instance of assignment-MCRA, and let $\alpha\in \NN^d_0$ and $\delta\in \NN_0$ be such that 
\[
    \sum_{\ell\in [d]}\frac{2(k_\ell-1)}{\alpha_\ell+1} + \frac{\omega^*}{\delta+1} \leq \frac{1}{2}.
\]
Then, there exists an $(\alpha,\delta)$-approximately envy-free allocation for $\calI$. 
\end{restatable}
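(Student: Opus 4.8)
The plan is to follow the strategy used for \Cref{cor:group-fairness}: pick a convenient fractional allocation and reduce to \Cref{thm:main} by lifting $\calI$ to an auxiliary MCRA instance whose dimensions are designed so that the deviation guarantees of the rounding theorem translate directly into \eqref{cons:agents-eq-gf}, \eqref{ineq:resources-gf}, and \eqref{ineq:groups-app-ef}. For the starting point I would take a feasible fractional allocation $x$ that is in addition \emph{fractionally envy-free}, i.e., one satisfying \eqref{cons:agents-fairness}, \eqref{cons:resources-cap-fairness}, and the inequalities $\frac{|G_{\ell,i}|}{|G_{\ell,j}|}\sum_{b\in G_{\ell,j}}\sum_{q}u_{\ell,i}(q)\,x(b,q)\le U_{\ell,i}(x)$ for all $\ell$ and $i,j\in[k_\ell]$, chosen as a vertex of the polytope they define. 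Exactly as in \Cref{cor:group-fairness}, since $x$ obeys \eqref{cons:agents-eq-gf} it is either already integral---in which case we are done---or $\calA(x)\neq\emptyset$, so $\psi=1$ will apply; and because every agent is binding, the left-hand side of \eqref{ineq:resources-total-app} is identically zero, so the value of $\Delta$ fed to \Cref{thm:main} is immaterial.

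The substance of the proof is the auxiliary instance. It keeps the agents, resources, demands, acceptability, and capacities of $\calI$; only the dimensions change. For a fixed $\ell$, every quantity appearing in \eqref{ineq:groups-app-ef} is either a group's own utility $U_{\ell,i}(y)$ or a cross term $W_{\ell,j,i}(y):=\sum_{b\in G_{\ell,j}}\sum_{q}u_{\ell,i}(q)\,y(b,q)$ with $i\neq j$. One keeps a copy of the original dimension $\ell$ (which pins down all $U_{\ell,i}(y)$ simultaneously) and, for each of the $k_\ell-1$ fixed-point-free cyclic shifts $\sigma$ of $[k_\ell]$, adds dimensions whose groups are again $G_{\ell,1},\dots,G_{\ell,k_\ell}$ but with $G_{\ell,j}$ carrying a copy of $u_{\ell,\sigma(j)}$, suitably rescaled by a ratio of group sizes so that the quantity \Cref{thm:main} keeps near-fixed is precisely a term of the envy inequality for the associated ordered pair; using two families of shift-dimensions, to absorb both orientations of the ratio $|G_{\ell,i}|/|G_{\ell,j}|$, produces $2(k_\ell-1)$ auxiliary dimensions per original $\ell$. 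Because each $\sigma$ has no fixed point, the groups inside every auxiliary dimension are pairwise disjoint, and over all shifts every ordered pair $(j,i)$ with $i\neq j$ is hit exactly once. With the parameter attached to each auxiliary dimension built from $\ell$ set accordingly, \cref{ineq:deviations} for $\psi=1$ reduces to the hypothesis $\sum_{\ell}\frac{2(k_\ell-1)}{\alpha_\ell+1}+\frac{\omega^*}{\delta+1}\le\frac12$.

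Applying \Cref{thm:main} to $x$ in the auxiliary instance yields a rounding $y$. That $y$ satisfies \eqref{cons:agents-eq-gf} is the binding-agent constraint; that it satisfies \eqref{ineq:resources-gf} follows from the per-resource deviation bound together with the feasibility of $x$ and the integrality of $\sum_{a}\sum_{q}q(r)\,y(a,q)$, verbatim as in \Cref{cor:group-fairness}. For \eqref{ineq:groups-app-ef}, fix $\ell$ and $i\neq j$; \Cref{thm:main} bounds the deviation of $U_{\ell,i}$ on the kept original dimension and the deviation of the relevant rescaled cross term on the shift-dimension handling the pair $(i,j)$, each as (parameter) times the corresponding auxiliary group's maximum utility---which, since every agent has the full bundle set $\calT$, equals $U^*_{\ell,i}$ up to the same rescaling---so that summing the two deviations and using the fractional envy-freeness $\frac{|G_{\ell,i}|}{|G_{\ell,j}|}W_{\ell,j,i}(x)-U_{\ell,i}(x)\le 0$ of $x$ gives $\frac{|G_{\ell,i}|}{|G_{\ell,j}|}W_{\ell,j,i}(y)-U_{\ell,i}(y)<\alpha_\ell U^*_{\ell,i}$. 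The running-time bound is inherited from \Cref{thm:main}, since the auxiliary instance has only polynomially many dimensions and groups and $x$ is computed by linear programming.

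The part I expect to be most delicate is calibrating the rescalings of the auxiliary utilities, and the parameters attached to the auxiliary dimensions, so that the deviation bounds of \Cref{thm:main}---which are naturally expressed through each auxiliary group's own maximum utility and hence carry the factor $|G_{\ell,i}|/|G_{\ell,j}|$---collapse to the clean bound $\alpha_\ell U^*_{\ell,i}$ of \eqref{ineq:groups-app-ef}; getting the dimension count down to exactly $2(k_\ell-1)$ while still covering all ordered pairs is the combinatorial core of this and is what forces the doubling. A second point that needs care is guaranteeing, under the stated hypotheses, that a fractionally envy-free allocation $x$ exists in the first place.
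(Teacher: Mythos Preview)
Your approach differs from the paper's in a fundamental way, and the difference matters.

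The paper does \emph{not} reduce to \Cref{thm:main}. Instead, it re-runs the iterative-rounding argument with a modified linear program: the group-utility equalities \eqref{cons:groups-no-dev} are replaced by the envy inequalities themselves (both orientations, for every $j\neq i$), imposed for each $i\in\tilde{G}^t_\ell$. This is why the constraint count picks up the factor $2(k_\ell-1)$ and why the sufficient condition becomes $\sum_\ell \frac{2(k_\ell-1)}{\alpha_\ell+1}+\frac{\omega^*}{\delta+1}\le\frac12$. The crucial point is that the LP keeps the \emph{entire} linear form $\frac{|G_{\ell,i}|}{|G_{\ell,j}|}W_{\ell,j,i}-U_{\ell,i}$ on one side of a single constraint, rather than controlling its two pieces separately.

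Your reduction, by contrast, controls $U_{\ell,i}$ and (a rescaled) $W_{\ell,j,i}$ through \emph{distinct} auxiliary dimensions and then sums the two deviations. Two concrete obstacles arise that you flag as ``delicate'' but do not resolve. First, with the parameter on each auxiliary dimension set to $\alpha_\ell$ (so that \cref{ineq:deviations} matches the hypothesis), each of the two terms moves by strictly less than $\alpha_\ell$ times its own maximum utility, and the sum only gives $<2\alpha_\ell\,U^*_{\ell,i}$, not $<\alpha_\ell\,U^*_{\ell,i}$; splitting $\alpha_\ell$ between the two dimensions would instead tighten the condition beyond the stated hypothesis. Second, the maximum utility in the shift dimension carrying $\frac{|G_{\ell,i}|}{|G_{\ell,j}|}u_{\ell,i}$ on $G_{\ell,j}$ is $\frac{|G_{\ell,i}|}{|G_{\ell,j}|}U^*_{\ell,i}$, so \Cref{thm:main}'s bound on that term scales with $|G_{\ell,i}|/|G_{\ell,j}|$, which can be arbitrarily large. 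Since utilities in the MCRA framework are nonnegative, you cannot encode the signed combination $\frac{|G_{\ell,i}|}{|G_{\ell,j}|}W_{\ell,j,i}-U_{\ell,i}$ as a single group utility either. These are the reasons the paper abandons the black-box route and re-opens the rounding analysis; keeping the envy inequality as one LP constraint is precisely what avoids the additive loss from summing and ties the eventual deviation to the number of remaining fractional variables in the two groups rather than to a rescaled maximum utility.

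On the second point you raise: the paper does supply a separate argument (a greedy construction) for the existence of a fractional envy-free allocation, so that ingredient is indeed needed and not automatic.
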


These approximately envy-free allocations can be found by applying an iterative rounding procedure analogous to the one described in \Cref{alg:main}, starting from an envy-free fractional allocation.
The difference with our original rounding algorithm is the number of constraints we must maintain in each step, which becomes quadratic in the number of groups in each dimension due to the envy-freeness constraints.
Because of the similarity with the proof of \Cref{thm:main}, we defer the details to \Cref{app:prop:envy-free-homog}, where we prove the existence of an envy-free fractional allocation and explain how to adapt our rounding algorithm to obtain this result.
We remark that the condition in \Cref{prop:envy-free-homog} holds for $\alpha_\ell\in \Theta(d k_{\ell})$ for each $\ell\in [d]$ and $\delta\in \Theta(1)$, thus breaking the linear dependence in $|A|$ from the more stringent version by \citet{procacciaetal}.

\subsection{Stable Matchings with Couples}\label{subsec:couples}
In this section, unlike the previous one, the non-existence of fair resource allocations does not come from group fairness constraints but from stability; while stable matchings are guaranteed to exist in the basic single-demand setting~\citep{gale1962college}, this is not the case for multi-demand agents.
To illustrate how our framework can also accommodate stability requirements, we show how our approach exploits the fact that, for agents with demand up to $\omega^*=2$, the existence of fractional allocations satisfying stability is guaranteed, and rounding these allocations preserves this property~\citep{nearfeasiblematchingcouples}. 
Using our rounding \Cref{thm:main}, we can directly recover a recent guarantee by \citet{nearfeasiblematchingcouples} for the existence of near-feasible stable allocations,
and we further extend this setting to handle group fairness requirements and allocation stability to guarantee the existence of near-feasible, stable, and fair allocations (\Cref{prop:couples}).

\paragraph{\bf Instances and stability.} An instance of \textit{couples-MCRA} is a tuple $(A,R,E,G,c,u,\succ)$, where $A$, $R$, $E$, $G$, $c$, and $u$ are structured in the same way as in assignment-MCRA; see \Cref{subsec:group-fairness} for the details.
We omit the parameter $\omega$ from the instance description as we fix $\omega^*=2$ and partition the agents accordingly into $A=A_1\dot{\cup} A_2$, where $\omega_a=1$ for every $a\in A_1$ (\textit{single agents}) and $\omega_a=2$ for every $a\in A_2$ (\textit{couples}).
Following the notation introduced in \Cref{subsec:group-fairness}, $\calT_{a,E}$ denotes the acceptable bundles for each agent $a\in A$, and $\calM=\{(a,q):a\in A, q\in \calT_{a,E}\}$ 
is the set of feasible agent-bundle pairs. 
Finally, $\succ$ represents a set of linear orders: $\succ_r$ is a linear order over $\{a\in A: (a,q)\in \calM\}$ representing the preferences of each resource $r\in R$ and $\succ_a$ is a linear order over $\calT_{a,E}$ representing the preferences of each agent $a\in A$.

A mapping $x$ on $\calM$ is a (fractional) resource allocation for $\calI$ if its natural extension $x'$ on $\calE = \{(a,q):a\in A, q\in \calT_a\}$, given by $x'(a,q)=x(a,q)$ for every $(a,q)\in \calM$ and $x'(a,q)=0$ for every $(a,q)\in \calE\setminus \calM$, is a (fractional) resource allocation for the instance $(A,\emptyset,R,G,\omega,c)$ of MCRA, where we have replaced $A'$ by $\emptyset$. 
Note that every instance of couples-MCRA admits a fractional allocation due to the absence of binding agents; in particular, the trivial allocation where no agent is assigned is always feasible.
A resource allocation $x$ on $\calM$ is {\it blocked} when some of the following three situations happen:
\begin{enumerate}
    \item There is a pair $(a,r)$ with $a\in A_1$ and $x(a,q)=1$ such that $a$ prefers $r$ to $q(1)$ and $r$ either has remaining capacity or prefers $a$ over some other agent allocated in $x$. 
    \item There is a pair $(a,r)$ with $a\in A_2$ and $x(a,q)=1$ such that $a$ prefers $r$ over $q$ and $r$ either has remaining capacity for the couple $a$ or prefers $a$ over some other agent(s) allocated in $x$.
    \item There is a couple $a\in A_2$ and two different resources $r,r'$ such that $a$ would prefer to be assigned to the resources $r$ and $r'$, one each, over their current allocation in $x$, and each of the resources has remaining capacity or prefers the corresponding element of the couple $a$ over some other agent allocated in $x$.
\end{enumerate}
A resource allocation $x$ is {\it stable for the capacities $c$} if it is not blocked.

\paragraph{\bf Fractional allocations and fairness.}
We now extend our notion of group fairness and approximately fair allocations from \Cref{subsec:group-fairness} to the case where we also require stability, provided the maximum demand is $\omega^*=2$.
Consider the following linear program introduced by \citet{nearfeasiblematchingcouples} with variables $x\colon \calM \to [0,1]$:
\begin{align}
	\textstyle\sum_{(a,q)\in \calM}q(r)\cdot x(a,q) &\leq c(r) \quad \text{for every } r \in R, &\tag*{\normalfont{\mbox{[LP-Stable]}}}\label[cons]{lp:stability-coup}\\
	\textstyle\sum_{q\in \calT_{a,E}} x(a,q) & \leq 1 \quad\quad \text{ for every } a\in A.\nonumber
\end{align}
\citet{nearfeasiblematchingcouples} showed that the polytope given by \ref{lp:stability-coup} has a dominating extreme point $x\colon \calM\to [0,1]$ and that any rounding of such point is a stable allocation.
Since our use of dominating extreme points is restricted to this property, we refer to \citet{nearfeasiblematchingcouples} for the precise definition in this setting.
In what follows, we exploit this fact to extend our result regarding approximate group fairness and find allocations that are both approximately fair and stable. 
Similarly to \Cref{subsec:group-fairness}, given a non-decreasing concave function $f\colon \RR_+ \to \RR_+$ and an instance $\calI$ of couples-MCRA, we consider the maximization problem
\begin{equation}
   \max\Big\{\textstyle\sum_{\ell\in [d]}\sum_{i\in [k_\ell]} f(U_{\ell,i}(x)):x\text{ is a dominating extreme point for }\ref{lp:stability-coup}\Big\},\tag*{\normalfont{\mbox{[Fair-Stable]}}}\label{eq:stable-obj-groups}
\end{equation}
where we have now restricted our feasible set to dominating extreme points of the polytope given by \ref{lp:stability-coup} so that rounding any such point will give a stable allocation.
We say that an optimal solution for \ref{eq:stable-obj-groups} is \textit{stable and fair with respect to $f$} and denote it by $x^f$. 

In terms of computational efficiency, it is unknown whether the problem of finding a dominating extreme point of a polytope defined by hypergraph constraints such as \ref{lp:stability-coup} can be solved in polynomial time.
This has only been answered in the positive for special cases that do not capture the polytope \ref{lp:stability-coup}, e.g.,~\citet{faenza2023scarf,chandrasekaran2024scarf}, and thus we do not know whether the optimization program \ref{eq:stable-obj-groups} can be solved efficiently.
We remark that this is the case even without the group fairness, i.e., in the setting of \citet{nearfeasiblematchingcouples}.

\paragraph{\bf Near-feasible stable and fair allocations.}
Given an instance of couples-MCRA and a non-decreasing concave function $f\colon \RR_+\to \RR_+$, we say that $y\colon \calM\to \{0,1\}$ is an \textit{$(\alpha,\delta,\Delta)$-approximately fair stable allocation with respect to $f$}, for $\alpha\in \NN^d_0$ and $\delta,\Delta\in \NN$, if the following holds:
\begin{align}
   \textstyle \sum_{q\in \calT_{a,E}} y(a,q) & \leq 1 \qquad\qquad \text{for every } a\in A,\label[ineq]{cons:agents-coup}\\
    | U_{\ell,i}(y) - U_{\ell,i}(x^f)| & <  \alpha_\ell\cdot U^*_{\ell,i} \quad\ \text{ for every } \ell\in [d],\ i\in [k_\ell],\label[ineq]{ineq:groups-coup}\\
    \textstyle\sum_{a\in A} \sum_{q\in \calT_{a,E}} q(r)\cdot y(a,q) - c(r) & \leq \delta \qquad\qquad \text{for every } r\in R,\label[ineq]{ineq:resources-coup}\\
    \textstyle\sum_{a\in A} \sum_{q\in \calT_{a,E}} \omega_a \cdot y(a,q) - \sum_{r\in R}c(r) & \leq 2\Delta,\label[ineq]{ineq:resources-total-coup}
\end{align}
and $y$ is stable for capacities $c'\colon R\to \RR_+$ given by $c'(r) = \sum_{a\in A} \sum_{q\in \calT_{a,E}} q(r)\cdot y(a,q)$, i.e., by the actual number of assigned agents.
We remark that the capacity violation conditions \eqref{ineq:resources-coup} and \eqref{ineq:resources-total-coup} can alternatively be expressed in absolute values with respect to the stable and fair allocation $x^f$, which may be more suitable for cases with lower bounds on the capacities; we stick to our setting with upper bounds for consistency. 
A direct application of \Cref{thm:main} yields the following result. 
\begin{restatable}{theorem}{propCouples}\label{prop:couples}
    Let $\calI$ be a fractionally feasible instance of couples-MCRA and $f\colon \RR_+ \to \RR_+$ a non-decreasing concave function. Let $\alpha\in \NN^d_0$ and $\delta\in \NN_0$ be such that
    \[
     \sum_{\ell \in [d]}\frac{1}{\alpha_\ell+1} + \frac{2}{\delta+2} \leq \frac{1}{2}.
    \]
    Then, there exists an $(\alpha,\delta,2)$-approximately fair stable allocation for $\calI$ with respect to $f$.
\end{restatable}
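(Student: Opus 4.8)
The plan is to derive \Cref{prop:couples} as an essentially direct application of \Cref{thm:main}, started from a stable and fair fractional allocation. First I would fix $x^f$ to be an optimal solution of \ref{eq:stable-obj-groups}. This is well-defined: by \citet{nearfeasiblematchingcouples} the polytope given by \ref{lp:stability-coup} admits a dominating extreme point, so the feasible set of \ref{eq:stable-obj-groups} is nonempty, and since there are finitely many extreme points the maximum is attained. Crucially, $x^f$ is itself a dominating extreme point of \ref{lp:stability-coup}, so by \citet{nearfeasiblematchingcouples} \emph{every} rounding of $x^f$ is stable for the capacities that rounding realizes. Extending $x^f$ to $\calE$ by setting $x^f(a,q)=0$ for $(a,q)\in\calE\setminus\calM$ produces a fractional resource allocation for the MCRA instance $(A,\emptyset,R,G,\omega,c)$: constraints \eqref{cons:agents-ub} and \eqref{cons:resources-cap} hold because $x^f$ lies in the polytope of \ref{lp:stability-coup}, while \eqref{cons:agents-eq} is vacuous since $A'=\emptyset$.

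Next I would apply \Cref{thm:main} to this instance and fractional allocation, with the utility functions $u$ of $\calI$, the vector $\alpha$, the scalar $\delta+1$ in the role of the theorem's $\delta$, the choice $\psi=1$ (always admissible, as it is forced when $\calA(x^f)\neq\emptyset$ or $d\leq 1$ and permitted otherwise), and $\Delta=2$. Then \cref{ineq:deviations} reads $\tfrac{1}{2}+\sum_{\ell\in[d]}\tfrac{1}{\alpha_\ell+1}+\tfrac{\omega^*}{(\delta+1)+1}=\tfrac{1}{2}+\sum_{\ell\in[d]}\tfrac{1}{\alpha_\ell+1}+\tfrac{2}{\delta+2}\leq 1$, which is exactly the hypothesis of \Cref{prop:couples}, and $\Delta=2$ with $\psi=1$ meets case (i) of \Cref{thm:main}. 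The theorem thus yields a rounding $y$ of $x^f$ — in particular $y(a,q)=0$ whenever $x^f(a,q)=0$, so $y$ is supported on $\calM$ and may be viewed as a mapping $\calM\to\{0,1\}$ — satisfying \eqref{cons:agents-ub}, \eqref{ineq:groups-app}, \eqref{ineq:resources-app}, and \eqref{ineq:resources-total-app} with the parameters $\delta+1$ and $\Delta=2$.

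The remaining step is to read off \eqref{cons:agents-coup}--\eqref{ineq:resources-total-coup} together with stability. Now \eqref{cons:agents-coup} is just \eqref{cons:agents-ub}, and \eqref{ineq:groups-coup} is \eqref{ineq:groups-app}. For \eqref{ineq:resources-coup}, \eqref{ineq:resources-app} gives $\sum_{a\in A}\sum_{q\in\calT_{a,E}}q(r)\,y(a,q)<\sum_{a\in A}\sum_{q\in\calT_{a,E}}q(r)\,x^f(a,q)+\delta+1\leq c(r)+\delta+1$ for each $r$, using feasibility of $x^f$ for \ref{lp:stability-coup}, and integrality of the left-hand side sharpens this to $\leq c(r)+\delta$. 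For \eqref{ineq:resources-total-coup}, \eqref{ineq:resources-total-app} with $\omega^*=2$ and $\Delta=2$ gives $\sum_{a\in A}\sum_{q\in\calT_{a,E}}\omega_a\,y(a,q)<\sum_{a\in A}\sum_{q\in\calT_{a,E}}\omega_a\,x^f(a,q)+4$; summing the inequalities of \ref{lp:stability-coup} over $r\in R$ and using $\sum_{r\in R}q(r)=\omega_a$ bounds the first right-hand sum by $\sum_{r\in R}c(r)$, and since the left-hand side and $\sum_{r\in R}c(r)$ are integers we obtain $\sum_{a\in A}\sum_{q\in\calT_{a,E}}\omega_a\,y(a,q)-\sum_{r\in R}c(r)\leq 3\leq 2\cdot 2$. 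Finally, since $y$ is a rounding of the dominating extreme point $x^f$, \citet{nearfeasiblematchingcouples} guarantees that $y$ is stable for the realized capacities $c'(r)=\sum_{a\in A}\sum_{q\in\calT_{a,E}}q(r)\,y(a,q)$, which is the last requirement; hence $y$ is an $(\alpha,\delta,2)$-approximately fair stable allocation with respect to $f$.

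I expect the main difficulty to be bookkeeping rather than conceptual: choosing the parameter substitution $\delta\mapsto\delta+1$, $\Delta\mapsto 2$, $\psi=1$ so that \cref{ineq:deviations} and case (i) of \Cref{thm:main} line up with the hypothesis, and carefully converting the two-sided strict deviation bounds produced by \Cref{thm:main} into the one-sided capacity-excess inequalities of \eqref{ineq:resources-coup}--\eqref{ineq:resources-total-coup} via integrality of the relevant sums and of the capacities. The only genuinely external ingredient is the structural result of \citet{nearfeasiblematchingcouples} — that \ref{lp:stability-coup} admits a dominating extreme point whose every rounding is stable — which simultaneously makes \ref{eq:stable-obj-groups} well-posed and supplies stability of the output; our added value over \citet{nearfeasiblematchingcouples} is running the rounding through \Cref{thm:main} so that the group utilities additionally stay close to those of $x^f$.
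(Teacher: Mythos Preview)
Your proposal is correct and follows essentially the same approach as the paper's own proof: both start from an optimal $x^f$ of \ref{eq:stable-obj-groups}, apply \Cref{thm:main} with $\psi=1$, parameters $\alpha$, $\delta+1$, and $\Delta=2$, and then read off \eqref{cons:agents-coup}--\eqref{ineq:resources-total-coup} together with stability via the rounding-preserves-stability result of \citet{nearfeasiblematchingcouples}. Your write-up is in fact slightly more explicit than the paper's in a few places (justifying that \ref{eq:stable-obj-groups} attains its maximum, noting that $\psi=1$ is always an admissible choice, and invoking integrality to sharpen the bound in \eqref{ineq:resources-total-coup}), but none of this constitutes a genuine difference in approach.
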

\begin{proof}
Let $\calI=(A,R,E,G,c,u,\succ)$, $f$, $\alpha$, and $\delta$ be as in the statement.
We let $x^f$ be any optimal solution for \ref{eq:stable-obj-groups}.
Clearly, the extension of $x^f$ to the domain $\calE$, which has value $x^f(a,q)=0$ whenever $(a,q)\notin \calM$, is a fractional resource allocation for the instance $(A,\emptyset,R,G,\omega,c)$ of MCRA.
We take $\psi=1$ and, since
\[
    \frac{\mathds{1}_{\psi=1}}{2} + \sum_{\ell \in [d]}\frac{1}{\alpha_\ell+1} + \frac{2}{\delta+2} \leq 1,
\]
we can apply \Cref{thm:main} for this instance and allocation, $\psi$, $\alpha$, $\delta+1$, $\Delta=2$, and our utilities $u_a$ for each agent $a\in A$, implying the existence of a rounding $y\colon \calM\to \{0,1\}$ such that
\begin{align}
    \sum_{q\in \calT_{a,E}} y(a,q) & \leq 1 \qquad\quad\qquad \text{for every } a\in A,\label[eq]{eq:y-agents-coup}\\
    | U_{\ell,i}(y) - U_{\ell,i}(x^f)| & <  \alpha_\ell \cdot U^{*}_{\ell,i} \qquad\ \text{ for every } \ell\in [d],\ i\in [k_\ell],\label[ineq]{ineq:y-groups-app-coup}\\
    \bigg|\sum_{a\in A} \sum_{q\in \calT_{a,E}} q(r)\cdot \big(y(a,q) - x^f(a,q)\big)\bigg| & < \delta+1 \quad\quad\quad\;\; \text{for every } r\in R,\label[ineq]{ineq:y-resources-app-coup}\\
    \bigg|\sum_{a\in A} \sum_{q\in \calT_{a,E}} \omega_a \cdot \big(y(a,q)-x^f(a,q)\big) \bigg| & < 2\cdot \Delta.\label[ineq]{ineq:y-resources-total-app-coup}
\end{align}
We claim the result for this mapping $y$.
We need to show that $y$ is stable for capacities $c'\colon R\to \RR_+$ given by $c'(r) = \sum_{a\in A} \sum_{q\in \calT_{a,E}} q(r)\cdot y(a,q)$, and that it satisfies \cref{cons:agents-coup,ineq:groups-coup,ineq:resources-coup,ineq:resources-total-coup}.
The former follows from \citet[Lemma 3]{nearfeasiblematchingcouples}, because $x^f$ is a dominating extreme point for \ref{lp:stability-coup} and $y$ is a rounding of $x^f$.

That $y$ satisfies \cref{cons:agents-coup,ineq:groups-coup} follows immediately from \cref{eq:y-agents-coup,ineq:y-groups-app-coup}.
That $y$ satisfies \eqref{ineq:resources-coup} follows from \eqref{ineq:y-resources-app-coup}, the fact that $\sum_{a\in A} \sum_{q\in \calT_{a,E}} q(r)\cdot x^f(a,q) \leq c(r)$ for every $r\in R$ since $x^f$ is feasible for \ref{lp:stability-coup} and thus a fractional resource allocation for $\calI$, and the fact that $\sum_{a\in A} \sum_{q\in \calT_{a,E}} q(r)\cdot y(a,q)$ is an integer value for every $r\in R$.
Finally, that $y$ satisfies \cref{ineq:resources-total-coup} follows from \cref{ineq:y-resources-total-app-coup} and the fact that $\sum_{a\in A} \sum_{q\in \calT_{a,E}} \omega_a \cdot x^f(a,q) \leq \sum_{r\in R}c(r)$ since $x^f$ is feasible for \ref{lp:stability-coup} and thus a fractional resource allocation for $\calI$.
\end{proof}

We remark that in the case without fairness constraints (i.e., $d=0$), we recover the bounds $\delta=\Delta=2$ from \citet{nearfeasiblematchingcouples}.
Our result, however, allows us to incorporate these constraints while keeping the deviations from resource capacities bounded by small constants.
For example, in the case of a one-dimensional partition of the agents, one can guarantee deviations from group fairness of at most $\alpha_1=5$ by increasing the deviations $\delta$ to $4$, while keeping the deviation from the aggregate capacity at $\Delta=2$.

\subsection{Multidimensional Apportionment}\label{subsec:apportionment}

In the {\it multidimensional apportionment problem}, introduced by \citet{balinskidemange1989a,balinskidemange1989b} for the case of two dimensions and extended by \citet{multidimensionalpoliticalappointment} to an arbitrary number of dimensions $d\ge 2$, the goal is to allocate the seats of a representative body proportionally across several dimensions. 
Classic apportionment methods, e.g., divisor methods or Hamilton's method, aim to assign seats to districts proportionally to their population or political parties proportionally to their electoral support~\citep{balinski2010fair,pukelsheim2017}.
However, one may consider several natural dimensions to decide the seat allocation, both by requiring proportionality across them with respect to their electoral support and by incorporating bounds on the number of seats that groups within them should receive.
In addition to political parties and geographical divisions, natural dimensions include demographics of the elected members such as gender or ethnicity; see, e.g., \citet{cembranocorreadiazverdugo2021,mathieu2022,bonet2024explainable}.

Using our rounding \Cref{thm:main}, we improve over the result by \citet{multidimensionalpoliticalappointment} to get enhanced near-feasibility guarantees for multidimensional apportionment.
Remarkably, we can further bound the total deviation from the house size, while the rounding algorithm by \citeauthor{multidimensionalpoliticalappointment} only controls the deviations on each dimension.

\paragraph{\bf Signpost sequences and rounding rules.} 
The core idea behind proportional apportionment, and in particular divisor methods, is to {\it scale and round}. 
Then, to formally introduce the multidimensional apportionment problem, we need to define the idea of a {\it rounding rule}, which in turn requires the definition of a \textit{signpost sequence}.
A signpost sequence is a function $s\colon \NN_0\to \RR_+$ such that $s(0)=0$, $s(t)\in [t-1,t]$ for every $t\in \NN$, and $s(t+1)>s(t)$ for every $t\in \NN$.
Given a signpost sequence $s$, the rounding rule $\llbracket \cdot \rrbracket_s$ is defined as follows: For every $q\in \RR_+$, we let $\llbracket q \rrbracket_s = \{t\}$ if $s(t)<q<s(t+1)$, and $\{t-1,t\}$ if $q=s(t)$.
In simple terms, any value $q\in [t-1,t]$ gets rounded up if $q>s(t)$, down if $q<s(t)$, and we allow both $t-1$ and $t$ as possible roundings is $q=s(t)$. 

\paragraph{\bf Instances and near-feasible apportionments.} An instance of the \textit{multidimensional apportionment} problem, or MA for short, is a tuple $\calI=(G,E,V,b,B,c)$, where $G$ is structured as in MCRA, $E\subseteq \prod_{\ell=1}^{d} G_{\ell,i}$, $V\in \NN^E$, $b_{\ell,i},B_{\ell,i}\in \NN_0$ for every $\ell\in [d]$ and $i\in [k_\ell]$, and $c\in \NN$.
Each $\ell \in [d]$ represents a dimension according to which the candidates are grouped for the election, e.g., their political parties, districts, gender, or ethnicity.
The set $E$ contains the tuples of groups receiving a strictly positive number of votes, which are specified in the tensor $V$.
The values $b_{\ell,i},B_{\ell,i}$ represent a lower and upper bound on the number of seats that should be assigned to group $G_{\ell,i}$, respectively, and $c$ is the size of the house.
We say that a dimension $\ell\in [d]$ is \textit{binding} if $b_{\ell,i}=B_{\ell,i}$ for every $i\in [k_\ell]$, i.e., if there are hard bounds on the number of seats that all groups in this dimension should receive.

In this section, a mapping $x:E\times [c]\to \{0,1\}$ (resp. $[0,1]$) is an {\it apportionment} (resp.\ fractional) if $b_{\ell,i} \leq \sum_{e\in E: e_\ell = i} \sum_{t\in [c]} x(e,t)  \leq B_{\ell,i}$ for every $\ell\in [d]$ and $i\in [k_\ell],$ and $\sum_{e\in E} \sum_{t\in [c]} x(e,t)  = c$, where the first condition ensures the group bounds and the second one guarantees to fulfill the house size.
As usual, we call an instance \textit{fractionally feasible} if it admits a fractional apportionment. 
For an instance of MA, and given $\alpha\in \NN_0^d$ and $\Delta\in \NN_0$, a mapping $y\colon E\times [c] \to \{0,1\}$ is an $(\alpha,\Delta)$-\textit{approximately proportional apportionment} if there exist values $\lambda_{\ell,i}>0$ for every $\ell\in [d]$ and every $i\in [k_{\ell}]$ such that
\begin{align}
	\textstyle\sum_{t\in [c]} x(e,t) & = \textstyle \llbracket V_e \prod_{\ell\in [d]} \lambda_{\ell,e_\ell} \rrbracket_s \quad \text{for every } e\in E,\label[eq]{eq:proportionality}\\
	\textstyle b_{\ell,i} -\alpha_\ell \leq \sum_{e\in E: e_\ell = i} \sum_{t\in [c]} x(e,t) & \leq B_{\ell,i}+\alpha_\ell \quad \text{for every } \ell\in [d],\ i\in [k_\ell],\label[ineq]{ineq:groups-app-app}\\
	\textstyle|\sum_{e\in E} \sum_{t\in [c]} x(e,t) -c | &  \leq \Delta.\label[ineq]{ineq:resources-total-app-app}
\end{align}
In addition to approximately respecting the group bounds and the house size, we have now imposed the natural notion of proportionality in this setting: 
Each group $G_{\ell,i}$ has an associated \textit{multiplier} $\lambda_{\ell,i}$ and each tuple is assigned a number of seats given by its number of votes scaled by all the multipliers associated with groups in the tuple.
The existence of multipliers such that all deviations are zero is guaranteed when $d\in \{1,2\}$ \citep{balinskidemange1989a,balinskidemange1989b,gaffke2008divisor}, but not when $d\geq 3$ \citep{multidimensionalpoliticalappointment}.
However, we exploit the fact that fractional apportionments are guaranteed to exist and that proportionality is kept upon rounding.

\paragraph{\bf Improved guarantee for near-feasible apportionments.}As a starting point for applying our rounding theorem, we use the following linear program, introduced by \citet{multidimensionalpoliticalappointment}:
\begin{align}
	\min \quad\textstyle\sum_{e\in E} \sum_{t\in [c]} x(e,t) \ln({s(t)}/{V_e})& \tag*{\normalfont{\mbox{[LP-MA]}}}\label{eq:lp-obj}\\
	\text{s.t.}\quad\quad \textstyle b_{\ell,i} \leq \sum_{e\in E: e_\ell = i} \sum_{t\in [c]} x(e,t) & \leq B_{\ell,i} \quad \text{for every } \ell\in [d],\ i\in [k_\ell],\nonumber \\
	\textstyle \sum_{e\in E} \sum_{t\in [c]} x(e,t) & = c, \nonumber\\
	x(e,t) & \in [0,1] \quad \text{for every } e\in E,\ t\in [c].\nonumber
\end{align}
Using a primal-dual analysis, it can be shown that \ref{eq:lp-obj} fully characterizes proportional apportionments, in the sense that a mapping $x$ is a proportional apportionment if and only if it is an optimal solution for it~\cite[Theorem 1]{multidimensionalpoliticalappointment}.
While this linear program does not have, in general, an optimal integral solution, we can still round an optimal solution of \ref{eq:lp-obj} using \Cref{thm:main} and maintain the proportionality condition \eqref{eq:proportionality} to obtain improved approximation guarantees for near-feasible apportionments.
The proof consists of a direct application of \Cref{thm:main} for an MCRA instance with all utilities equal to $1$. 
Note that this integer instantiation of the utilities implies the terms $1/(\alpha_\ell+1)$ from the main theorem are replaced by $1/(\alpha_\ell+2)$, which translates into smaller deviations.
\begin{restatable}{theorem}{propMultiApp}\label{prop:multi-apportionment}
	Let $\calI$ be an instance of MA.
	Let $\alpha\in \NN^d_0$ be such that~$\sum_{\ell\in [d]} 1/(\alpha_\ell+2) \leq 1$ and $\Delta\in \NN_0$ defined by
		$\Delta = \min \{\min\{\Delta_\ell:\ell\in [d]\}, \lceil 1/{(1-\sum_{\ell \in [d]} 1/(\alpha_\ell+2))}-2\rceil \}$, where $\Delta_\ell=\alpha_\ell k_\ell$ for every binding dimension $\ell\in [d]$ and $\Delta_\ell=(\alpha_\ell+1)k_\ell-1$ for every non-binding dimension.
	Then, there exists an $(\alpha,\Delta)$-approximate proportional apportionment for $\calI$.
	Furthermore, this solution can be found in time polynomial in $|E|$ and $c$.
\end{restatable}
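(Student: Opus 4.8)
The plan is to obtain the statement as a direct corollary of \Cref{thm:main}, applied to an MCRA instance derived from $\calI$ and to an optimal solution of~\ref{eq:lp-obj} as the starting fractional allocation. First I would solve~\ref{eq:lp-obj} and fix an optimal solution $x^*$; by \cite[Theorem~1]{multidimensionalpoliticalappointment}, $x^*$ is a proportional apportionment, so there are multipliers $\lambda_{\ell,i}>0$ with respect to which the analogue of~\cref{eq:proportionality} holds. The structural fact I would use is that every optimal solution of~\ref{eq:lp-obj} is a \emph{staircase within each tuple}: for each $e\in E$ at most one index $t\in[c]$ has $x^*(e,t)\in(0,1)$, since if $x^*(e,t_1)<1$ and $x^*(e,t_2)>0$ for some $t_1<t_2$, then shifting mass from $t_2$ to $t_1$ strictly decreases the objective of~\ref{eq:lp-obj} (as $s$ is strictly increasing) while leaving every group sum and the house size unchanged, contradicting optimality. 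Since \Cref{alg:main} never turns an integral entry fractional, this property survives every iteration, so the computed rounding $y$ agrees with $x^*$ except on at most one entry per tuple; hence $\sum_{t}y(e,t)$ lies between $\lfloor\sum_t x^*(e,t)\rfloor$ and $\lceil\sum_t x^*(e,t)\rceil$ for every $e$, which is exactly what guarantees that the multiplier relation is preserved and $y$ satisfies~\cref{eq:proportionality} with the same $\lambda$.

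Next I would encode $\calI$ as the MCRA instance with agent set $A=\{(e,t):e\in E,\ t\in[c]\}$, all of unit demand; no binding agents; groups $G_{\ell,i}=\{(e,t):e_\ell=i\}$, so that the $k_\ell$ coincide with those of $\calI$; a single resource $\rho$ of capacity $c$ whose unique unit-bundle is the only feasible bundle of every agent; and all utility functions identically equal to~$1$. Then $U_{\ell,i}(x)=\sum_{e:e_\ell=i}\sum_t x(e,t)$ and $U^*_{\ell,i}=1$, and $x^*$, viewed on this instance, is a fractional resource allocation because $\sum_{e,t}x^*(e,t)=c$. As every agent has a single feasible bundle, $\calA(x^*)=\emptyset$, so (using $d\ge2$) we may take $\psi=0$. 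I would then invoke \Cref{thm:main} for this instance with the inflated group parameter $\alpha_\ell+1$ in place of $\alpha_\ell$, with per-resource parameter $\Delta+1$, and with a total-deviation parameter chosen large enough to satisfy its side condition. Because all utilities are integer-valued, the guarantee $|U_{\ell,i}(y)-U_{\ell,i}(x^*)|<(\alpha_\ell+1)U^*_{\ell,i}$ forces $|U_{\ell,i}(y)-U_{\ell,i}(x^*)|\le\alpha_\ell$; together with $b_{\ell,i}\le U_{\ell,i}(x^*)\le B_{\ell,i}$ and the integrality of $U_{\ell,i}(y)$, this gives~\cref{ineq:groups-app-app}. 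This substitution $\alpha_\ell\mapsto\alpha_\ell+1$ is precisely what turns the term $1/(\alpha_\ell+1)$ of~\cref{ineq:deviations} into $1/(\alpha_\ell+2)$, which is what the hypothesis $\sum_\ell 1/(\alpha_\ell+2)\le1$ records.

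For the house size I would combine two estimates. On one hand, the per-resource guarantee~\cref{ineq:resources-app} applied to $\rho$ gives $|\sum_{e,t}(y(e,t)-x^*(e,t))|<\Delta+1$, hence at most $\Delta$ by integrality; checking for which $\Delta$ condition~\cref{ineq:deviations} holds with $\psi=0$, parameters $\alpha_\ell+1$, and per-resource parameter $\Delta+1$ --- that is, $\sum_\ell 1/(\alpha_\ell+2)+1/(\Delta+2)\le 1$ --- shows the smallest admissible value is $\lceil 1/(1-\sum_\ell 1/(\alpha_\ell+2))-2\rceil$. On the other hand, for any dimension $\ell$ one has $\sum_{e,t}y(e,t)=\sum_i U_{\ell,i}(y)$ and $\sum_i U_{\ell,i}(x^*)=c$, so $|\sum_{e,t}y(e,t)-c|\le\sum_i|U_{\ell,i}(y)-U_{\ell,i}(x^*)|$; this is at most $\alpha_\ell k_\ell$ when $\ell$ is binding (there $U_{\ell,i}(x^*)=b_{\ell,i}\in\NN_0$, so each summand is an integer strictly below $\alpha_\ell+1$) and at most $(\alpha_\ell+1)k_\ell-1$ when $\ell$ is non-binding (each summand is strictly below $\alpha_\ell+1$ and the sum is an integer), yielding the $\min_\ell\Delta_\ell$ term. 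Taking the better of the two bounds gives the claimed $\Delta$ and~\cref{ineq:resources-total-app-app}, while~\cref{eq:proportionality} was secured in the first step. The running time is that of solving~\ref{eq:lp-obj} plus that of \Cref{thm:main}, which for this instance is polynomial in $|A|=|E|c$, the $k_\ell\le|E|$, and $|R|^{\omega^*}=1$, hence polynomial in $|E|$ and $c$.

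The step I expect to be the main obstacle is the last piece of parameter bookkeeping: translating~\cref{ineq:deviations} together with the two admissibility subcases for the deviation parameter of \Cref{thm:main} into exactly the stated expression for $\Delta$, carefully tracking the ceiling, the unit of slack gained from the integer-valued utilities and deviations, and the degenerate situations --- notably $\sum_\ell 1/(\alpha_\ell+2)=1$ and the case where $1/(1-\sum_\ell 1/(\alpha_\ell+2))$ is an integer --- in which the first estimate is unavailable and one must fall back on $\min_\ell\Delta_\ell$ (or, for $d=2$, on the exact apportionments of \citet{balinskidemange1989a,balinskidemange1989b,gaffke2008divisor}). The remaining verifications --- that the staircase structure persists along \Cref{alg:main} so that proportionality survives, and that~\cref{ineq:groups-app-app} follows from integrality --- are routine once the encoding is fixed.
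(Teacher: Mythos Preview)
Your proposal follows essentially the same route as the paper: encode the MA instance as an MCRA instance with agents $A=E\times[c]$, unit demands, a single resource, and constant utilities, apply \Cref{thm:main} to an optimal solution of \ref{eq:lp-obj} with parameters shifted by one, and exploit integrality to sharpen the strict inequalities into the stated bounds on groups and house size. The only difference is that the paper invokes \cite[Lemma~5]{multidimensionalpoliticalappointment} directly for the preservation of~\cref{eq:proportionality} under rounding, whereas you re-derive it via the staircase structure of optimal solutions; the two arguments are equivalent, and your identification of the edge cases in the parameter bookkeeping is, if anything, more explicit than the paper's own treatment.
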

\begin{proof}
Let $\calI=(G,E,V,b,B,c)$ and $\alpha$ be as in the statement.
We let $A=E\times [c]$ denote a set of agents, where each agent $(e,t)\in A$ will represent the $t$th potential seat allocated to the tuple $e$.
We let $R=\{r\}$ be a unique resource and we consider $\omega_a=1$ for every $a\in A$.
For simplicity, we represent bundles in $\calT=\calT_a$ for any $a\in A$ by a binary value $q\in \{0,1\}$, where $q=1$ represents that $r$ is allocated and $q=0$ that is not.\footnote{That is, in our original notation $q=1$ would represent the bundle $q'\colon R\to \NN$ given by $q'(r)=1$.}
We let $x^*$ be an optimal solution for \ref{eq:lp-obj}.
Clearly, the natural mapping $\tilde{x}^*$ capturing $x^*$ on the domain $\calE$, given by $\tilde{x}^*((e,t),1)=x^*(e,t)$, is a fractional resource allocation for the instance $(A,\emptyset,R,G,\omega,c)$ of MCRA, where we have no binding agents ($A'=\emptyset$) and $\omega_a=1$ for every $a\in A$.
Since there is only one non-empty bundle, we have $\calA(x^*)=\emptyset$, so we can take $\psi=0$ and, since
$\sum_{\ell \in [d]}{1}/{(\alpha_\ell+2)} \leq 1,$
we can apply \Cref{thm:main} for this instance and allocation, utilities $u_a(q)=1$ for every $a\in A$ and $q\in \calT$, $\alpha+1$, and $\lceil 1/{(1-\sum_{\ell \in [d]} 1/(\alpha_\ell+2))}-1\rceil$.
This theorem implies the existence of a rounding $\tilde{y} \colon \calE\to \{0,1\}$ such that, when translated into $y$ on the domain $E\times [c]$ by setting $y(e,t)=y^*((e,t),1)$, it satisfies
\begin{align}
    \bigg| \sum_{e\in E: e_\ell=i}\sum_{t\in [c]}(y(e,t) - x^*(e,t))\bigg| & <  \alpha_\ell +1 \qquad\qquad\qquad\qquad \text{for every } \ell\in [d],\ i\in [k_\ell],\label[ineq]{ineq:y-groups-app-app}\\
    \bigg|\sum_{e\in E}\sum_{t\in [c]} (y(e,t) - x^*(e,t))\bigg| & < \bigg\lceil \frac{1}{1-\sum_{\ell \in [d]} \frac{1}{\alpha_\ell+2}} -1\bigg\rceil.\label[ineq]{ineq:y-resources-app-app}
\end{align}
We claim the result for this mapping $y$.
That this solution can be found in time polynomial in $|E|$ and $c$ follows directly from \Cref{thm:main} and the fact that $x^*$ is found by solving a linear program with $\calO(|E|\cdot c)$ variables and constraints.
It remains to show that $y$ satisfies \cref{eq:proportionality,ineq:groups-app-app,ineq:resources-total-app-app}.
The former follows from \citet[Lemma 5]{multidimensionalpoliticalappointment}, because $x^*$ is an optimal solution for \ref{eq:lp-obj} and $y$ is a rounding of $x^*$.
That $y$ satisfies \cref{ineq:groups-app-app} follows immediately from \cref{ineq:y-groups-app-app}, the fact that $b_{\ell,i}\leq \sum_{e\in E: e_\ell=i}\sum_{t\in [c]} x^*(e,t) \leq B_{\ell,i}$ for every $\ell\in[d]$ and $i\in [k_\ell]$ since $x^*$ is feasible for \ref{eq:lp-obj}, and the fact that $\sum_{e\in E: e_\ell=i}\sum_{t\in [c]} y(e,t)$ is an integer value for every $\ell\in[d]$ and $i\in [k_\ell]$.

Finally, that $y$ satisfies \cref{ineq:resources-total-app-app} when $\Delta=\lceil 1/{(1-\sum_{\ell \in [d]} 1/(\alpha_\ell+2))}-2\rceil$ follows from \cref{ineq:y-resources-app-app}, the fact that $\sum_{e\in E}\sum_{t\in [c]} x^*(e,t) = c$ since $x^*$ is feasible for \ref{eq:lp-obj}, and the fact that $\sum_{e\in E}\sum_{t\in [c]} y(e,t)$ is an integer value.
We next consider the case where $\Delta=\alpha_\ell k_\ell$ for some binding dimension $\ell\in [d]$, i.e., such that $b_\ell=B_\ell$. We fix such $\ell$ and observe that the previous equality implies $\sum_{e\in E: e_\ell=i}\sum_{t\in [c]} x^*(e,t)=b_\ell\in \NN$.
Thus,
\begin{align*}
    \bigg|\sum_{e\in E} \sum_{t\in [c]} y(e,t) -c \bigg| & = \bigg|\sum_{e\in E} \sum_{t\in [c]} (y(e,t) -x^*(e,t)) \bigg|\\
    & = \bigg| \sum_{i\in [k_\ell]} \sum_{e\in E: E_\ell=i} \sum_{t\in [c]} (y(e,t)-x^*(e,t)) \bigg| \\
    & \leq  \sum_{i\in [k_\ell]} \bigg| \sum_{e\in E: E_\ell=i} \sum_{t\in [c]} (y(e,t)-x^*(e,t)) \bigg|  \leq \sum_{i\in [k_\ell]} \alpha_\ell  = \alpha_\ell k_\ell,
\end{align*}
where we used the fact that $\sum_{e\in E}\sum_{t\in [c]} x^*(e,t) = c$ since $x^*$ is feasible for \ref{eq:lp-obj}, the triangle inequality, and \cref{ineq:y-groups-app-app} combined with the fact that both $\sum_{e\in E: e_\ell=i}\sum_{t\in [c]} x^*(e,t)$ and $\sum_{e\in E: e_\ell=i}\sum_{t\in [c]} y(e,t)$ are integer values.
\Cref{ineq:resources-total-app-app} follows directly.
We finally consider the case where $\Delta=(\alpha_\ell+1) k_\ell-1$ for some non-binding dimension $\ell\in [d]$, i.e., such that $b_\ell<B_\ell$. We fix such $\ell$ and observe that
\allowdisplaybreaks
\begin{align*}
    \bigg|\sum_{e\in E} \sum_{t\in [c]} y(e,t) -c \bigg| & = \bigg|\sum_{e\in E} \sum_{t\in [c]} (y(e,t) -x^*(e,t)) \bigg|\\
    & = \bigg| \sum_{i\in [k_\ell]} \sum_{e\in E: E_\ell=i} \sum_{t\in [c]} (y(e,t)-x^*(e,t)) \bigg| \\
    & \leq  \sum_{i\in [k_\ell]} \bigg| \sum_{e\in E: E_\ell=i} \sum_{t\in [c]} (y(e,t)-x^*(e,t)) \bigg|  < \sum_{i\in [k_\ell]} (\alpha_\ell+1) = (\alpha_\ell+1)k_\ell,
\end{align*}
where we used the fact that $\sum_{e\in E}\sum_{t\in [c]} x^*(e,t) = c$ since $x^*$ is feasible for \ref{eq:lp-obj}, the triangle inequality, and \cref{ineq:y-groups-app-app}.
Since the first expression is an integer value, we conclude that it is bounded from above by $(\alpha_\ell+1)k_\ell-1$ and \cref{ineq:resources-total-app-app} holds in this case as well.
\end{proof}

We remark that our guarantees in \Cref{prop:multi-apportionment} improve over the result by \citet{multidimensionalpoliticalappointment} in that we can bound the total deviation $\Delta$ from the house size, while the rounding algorithm by \citeauthor{multidimensionalpoliticalappointment} only allows trading off the deviations on each dimension.
On the one hand, we get small deviations from the house size if the sum $\sum_{\ell \in [d]} 1/(\alpha_\ell+2)$ is not too close to $1$; for example, for~$d=3$, taking $\alpha\in \{(0,6,6),(1,2,4),(2,2,2)\}$ allows $\Delta= \lceil 1/{(1-\sum_{\ell \in [d]} 1/(\alpha_\ell+2))}-2\rceil=2$.

On the other hand, we also get slight deviations from the house size when a dimension $\ell$ has few groups; for example, the case with $k_\ell=2$ (which may arise, for example, if the corresponding dimension is gender) yields deviations from the house size of at most $2\alpha_\ell$ if the dimension is binding and at most $2(\alpha_\ell+1)-1$ otherwise.
Note that, in particular, the existence of a binding dimension $\ell$ with deviations $\alpha_\ell=0$ implies exactly fulfilling the house size.

We finally remark that, in principle, our rounding algorithm could be further applied to more general apportionment settings in which several representative bodies are to be elected, but this would require novel structural results capturing proportionality in this general case.

\section{Discussion and Final Remarks}\label{sec:discussion}

Our work develops a general iterative rounding framework for resource allocation in two‐sided markets that produces near‐feasible allocations while controlling violations in both resource capacities and fairness targets. 
Fed with an appropriate fractional allocation and tuning a small set of deviation parameters, our approach gives a unified and flexible tool for handling fairness in several resource allocation problems, including school allocations, stable matchings, and political apportionment. 
In doing so, it not only recovers guarantees from these specialized settings but also provides new ones for previously omitted objectives and extends prior guarantees to cases involving multi‐demand agents and overlapping, multidimensional group structures. 
The resulting near-feasible solutions provide robust performance guarantees, even when exact feasibility is theoretically impossible or computationally prohibitive. 
Furthermore, the flexibility in choosing the deviation parameters enables policy designers to tailor the trade-offs between efficiency, fairness, and resource augmentation to the needs of a specific application.

Regarding practical implementations of our result, \Cref{thm:main} allows us to bound worst-case deviations and guarantee the existence of integral solutions that meet certain relaxed requirements.
In practice, one could still use the existential side of our result, albeit with an alternative algorithmic approach, aiming to find an integral solution directly with the target maximum deviations.
A possible shortcut for this is to directly solve the integer program obtained by relaxing the original constraints, allowing the target deviations guaranteed by our sufficient condition. 
Another possibility, based on a simple search, is to start by allowing no deviations and then sequentially increase the allowed deviations; our theorem guarantees that a feasible solution will always be obtained before reaching the maximum deviations we establish.
However, these approaches do not guarantee better deviations in the worst case, and they may be computationally inefficient due to the necessary calls of a black-box integer programming solver, as opposed to the linear programming black-box used in our iterative rounding algorithm.

In \Cref{subsec:group-fairness}, we propose a general multidimensional assignment model to handle general group fairness requirements by a convex optimization-driven approach. 
Using our rounding \Cref{thm:main}, we get a range of flexible guarantees on the existence of near-feasible and fair solutions. 
In particular, we introduce a relaxed notion of group envy-freeness that can escape existing impossibilities to accommodate efficient and near-feasible allocations; we believe this new notion might be of independent interest and deserves further study.

The implications in \Cref{subsec:couples} of our main rounding \Cref{thm:main} contribute to a recent line on the computation of near-feasible allocation under stability and complex constraints. 
\citet{OPResearch-stablewithprop} consider a stable matching problem where the hospitals classify their acceptable set of doctors according to types.
While they model proportionality employing lower or upper bounds on the proportion of doctors of each type to be assigned for each hospital, our approach in \Cref{subsec:couples} incorporates general fairness considerations as the objective function in a convex optimization program.
\Cref{prop:couples} ensures controlled deviations that can be tailored to real-world constraints, giving flexibility to policy designers in implementing stability and optimizing trade-offs between efficiency and fairness.
We also believe our framework can be further exploited under different stability concepts, e.g., the group-stability notion considered by \citet{MGMTSC-groupstabilitycomplexconstr} to match families and localities with contracts and budget constraints.

Beyond the application settings showcased in this work, our framework could be further explored in other domains, such as makespan scheduling \citep{newaproxtechnique,feldman2024proportionally} and fair algorithms for clustering \citep{bera2019fair,makarychev2021approximation}.

\newpage
\appendix

\section{Approximately Envy-free Allocations}
\label{app:prop:envy-free-homog}
\propEnvyFreeness*

This theorem follows from two main ingredients: The existence of a fractional allocation satisfying our notion of envy-freeness, and an iterative rounding procedure that starts from this allocation and renders an integral allocation with the claimed deviations. 
This second ingredient follows from an analogous algorithm and proof to the one used in the proof of \Cref{thm:main}, with the only difference that, in the linear program the algorithm solves in each iteration $t$, \cref{cons:resources-total-no-dev} is never imposed (which can be achieved by simply setting $\chi^t=0$ for every $t$) and the group constraints \eqref{cons:groups-no-dev} are replaced by
\begin{align*}
   \sum_{a\in G_{\ell,i}}\sum_{q\in \calT}u_{\ell,i}(q)\cdot (x(a,q)+y(a,q)) & \geq \frac{|G_{\ell,i}|}{|G_{\ell,j}|} \sum_{b\in G_{\ell,j}}\sum_{q\in \calT} u_{\ell,i}(q)\cdot (x(b,q)+y(b,q)),\\
   \sum_{b\in G_{\ell,j}}\sum_{q\in \calT}u_{\ell,j}(q)\cdot (x(b,q)+y(b,q)) & \geq \frac{|G_{\ell,j}|}{|G_{\ell,i}|} \sum_{a\in G_{\ell,i}}\sum_{q\in \calT} u_{\ell,j}(q)\cdot (x(a,q)+y(a,q)),
\end{align*}
for every $i\in \tilde{G}^t_\ell(\calF)$ and every $j\in [k_\ell]\setminus \{i\}$, so that no envy related to group $G_{\ell,i}$ is generated as long as this group has $\alpha_\ell+1$ or more associated fractional variables.
We recall that $x$ are the integer values fixed in previous iterations and $y$ are the variables of the linear program.
This change leads to an increase in the number of these constraints: We still impose one constraint for each agent in $\tilde{A}$ and each resource in $\tilde{R}^t$, but the number of constraints for each group in $\tilde{G}^t_\ell$ now increases from $1$ to $2(k_\ell-1)$. 
The condition to have more variables than equality constraints in iteration $t$ thus becomes
\[
    \sum_{\ell\in [d]}2(k_{\ell}-1) \bigg\lfloor \frac{z}{\alpha_\ell+1} \bigg\rfloor + \bigg\lfloor \frac{\omega^* z}{\delta+1} \bigg\rfloor \leq \bigg\lceil \frac{z}{2}\bigg\rceil \qquad \text{for every }z\in \NN,
\]
which is ensured if we have the condition $\sum_{\ell\in [d]}\frac{2(k_\ell-1)}{\alpha_\ell+1} + \frac{\omega^*}{\delta+1} \leq \frac{1}{2}$
from the statement.

In the remainder of this appendix, we thus show the missing ingredient: The existence of an envy-free fractional allocation to start the iterative rounding with.
This follows from a simple greedy construction.
\begin{claim}
    For every group-homogeneous and fractionally feasible instance $\calI$ of assignment-MCRA, there exists a fractional allocation $x\colon \calM\to [0,1]$ such that, for every $\ell\in [d]$ and $i,j\in [k_\ell]$,
    \[
        U_{\ell,i}(x) \geq \frac{|G_{\ell,i}|}{|G_{\ell,j}|} \sum_{b\in G_{\ell,j}} \sum_{q\in \calT} u_{\ell,i}(q)\cdot x(b,q).
    \]
\end{claim}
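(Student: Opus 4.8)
The plan is to produce the desired allocation by a greedy balancing of the group utilities, starting from an arbitrary fractional resource allocation, and then to check the envy-free inequality by a local exchange argument. Since $\calI$ is fractionally feasible, the set $P$ of fractional resource allocations of $\calI$ (maps $x\colon\calM\to[0,1]$ whose extension to $\calE$ satisfies \eqref{cons:agents-eq}--\eqref{cons:resources-cap} with $A'=A$) is nonempty, convex and compact. I would take $x$ to be a leximin‐optimal point of $P$ with respect to the per‐capita group utilities $\bigl(U_{\ell,i}(x)/|G_{\ell,i}|\bigr)_{\ell\in[d],\,i\in[k_\ell]}$ — this is exactly what a simple greedy water‐filling computes: while some inequality constraint is slack, raise the assignment of the currently worst‐off group towards its best still‐available bundles. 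Equivalently, $x$ can be chosen to maximize $\sum_{\ell,i}|G_{\ell,i}|\cdot h\bigl(U_{\ell,i}(x)/|G_{\ell,i}|\bigr)$ over $P$ for a fixed strictly concave increasing $h$; the choice $h=\ln$ turns the objective into a size‐weighted Nash welfare, whose stationarity conditions are tailored to the size‐ratio relaxation in the statement. Groups $G_{\ell,i}$ with $U_{\ell,i}(x')=0$ for every $x'\in P$ necessarily have $u_{\ell,i}\equiv 0$ on the bundles their agents are ever assigned, so the claimed inequality holds trivially for them; hence I may assume all group utilities under $x$ are positive.

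\textbf{Verifying the inequality.} Fix $\ell\in[d]$ and $i\neq j$ in $[k_\ell]$ and suppose, for contradiction, that $U_{\ell,i}(x)/|G_{\ell,i}| < \bigl(\sum_{b\in G_{\ell,j}}\sum_{q\in\calT}u_{\ell,i}(q)\,x(b,q)\bigr)/|G_{\ell,j}|$. I would build $x'\in P$ with strictly larger objective value by shifting an infinitesimal amount of consumption from the agents of $G_{\ell,j}$ to the agents of $G_{\ell,i}$: for small $\varepsilon>0$, let the agents of $G_{\ell,i}$ take on an $\varepsilon$‐fraction of the bundle profile currently held by the agents of $G_{\ell,j}$ while the latter symmetrically relinquish the same amount, so that the number of units consumed of every resource is unchanged and every agent still has total allocation $1$. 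Using group‐homogeneity of the utilities (so the value each group attaches to a profile depends only on the profile) together with the structure of the feasible region that lets agents of $G_{\ell,i}$ mirror the assignments of agents of $G_{\ell,j}$, one checks $x'\in P$. A first‐order computation then shows the directional derivative of $\sum_{\ell',i'}|G_{\ell',i'}|\,h\bigl(U_{\ell',i'}(\cdot)/|G_{\ell',i'}|\bigr)$ is strictly positive at $x$: along this direction $G_{\ell,i}$ gains at a rate controlled by $\bigl(\sum_{b\in G_{\ell,j}}\sum_q u_{\ell,i}(q)x(b,q)\bigr)/|G_{\ell,j}|$ while $G_{\ell,j}$ loses at a rate at most its own per‐capita utility, and the factor $|G_{\ell,i}|/|G_{\ell,j}|$ enters the derivative precisely so that the assumed strict inequality makes the net change positive (here one uses $h'>0$). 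This contradicts optimality of $x$, and the claim follows.

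\textbf{Main obstacle.} The delicate point is the perturbation step. First, one must show the shifted allocation stays in $P$ — this is where the interplay of the capacity constraints, the per‐agent demand constraints, and, in the multidimensional case $d\ge 2$, the fact that each agent lies in one group per dimension matters, since moving $G_{\ell,j}$'s consumption onto $G_{\ell,i}$ also perturbs the per‐capita utilities of the groups those agents belong to in the other dimensions and hence one must argue the improvement argument can be run compatibly for all pairs and all dimensions at once. Second, the directional‐derivative estimate requires careful bookkeeping of how the ratio $|G_{\ell,i}|/|G_{\ell,j}|$ appears; a clean alternative that avoids the derivative is a symmetrization argument — average $x$ with the allocation obtained by exchanging the roles of $G_{\ell,i}$ and $G_{\ell,j}$, which immediately equalizes or reverses the offending inequality — and showing this can be done consistently over all pairs and dimensions is essentially what the greedy construction accomplishes.
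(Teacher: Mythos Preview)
Your approach is genuinely different from the paper's. The paper constructs the fractional allocation via a \emph{simultaneous eating} (probabilistic serial) procedure: all agents consume in parallel at unit rate, each always eating from its favorite still-available bundle, until every agent is saturated at time $t=1$. The envy-free inequality then holds pointwise in time---for any $a\in G_{\ell,i}$ and $b\in G_{\ell,j}$, the greedy choice gives $u_{\ell,i}(q^t_a)\ge u_{\ell,i}(q^t_b)$ for every $t$---and integrating over $t\in[0,1]$ and averaging over agents yields the claim directly. There is no optimization objective and no exchange argument.

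Your proposal has a genuine gap at the exchange step, and the obstacles you flag are not merely technical. Suppose $x$ maximizes the size-weighted Nash welfare and the inequality fails for some $(\ell,i,j)$. Any resource-neutral swap of profiles between $G_{\ell,i}$ and $G_{\ell,j}$ changes not only $U_{\ell,i}$ and $U_{\ell,j}$ but also $U_{\ell',i'}$ for every group in every other dimension $\ell'\ne\ell$ containing an agent of $G_{\ell,i}\cup G_{\ell,j}$, and you have no control over the sign of those changes. Even within dimension $\ell$ the derivative is not forced positive: the $(\ell,i)$-term contributes a positive amount as you say, but the $(\ell,j)$-term contributes $|G_{\ell,j}|\,h'(\cdot)\cdot\bigl(|G_{\ell,j}|W-|G_{\ell,i}|U_{\ell,j}\bigr)$ with $W=\sum_{a\in G_{\ell,i}}\sum_q u_{\ell,j}(q)x(a,q)$, and nothing in the violated inequality bounds this. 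So optimality for Nash welfare (or leximin) does not rule out envy; this is the usual gap between proportionality-type guarantees delivered by Nash welfare and the stronger envy-free property. Your zero-utility reduction is also incorrect: $U_{\ell,i}(x')=0$ for all $x'\in P$ only says $u_{\ell,i}$ vanishes on bundles assignable to agents of $G_{\ell,i}$, not on bundles assigned to agents of $G_{\ell,j}$, so the right-hand side can still be positive. The simultaneous-eating construction sidesteps all of this because the comparison $u_{\ell,i}(q^t_a)\ge u_{\ell,i}(q^t_b)$ is agent-by-agent and instant-by-instant, with no need to aggregate across groups or dimensions.
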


\begin{proof}
Let $\calI=(A,R,E,G,\omega,c,u)$ be a group-homogeneous and fractionally feasible instance of assignment-MCRA, and recall that we refer to the common utility function of each group $G_{\ell,i}$ as $u_{\ell,i}$. 
We can construct a fractional allocation satisfying the envy-freeness condition in the statement via a natural greedy procedure, formally described in \Cref{alg:greedy-ef}.
In simple terms, the greedy algorithm starts from the empty allocation and assigns to each agent $a$ at each step $t$, a fraction $\tau$ of this agent's favorite available bundle, where \textit{available} here means that no resource in the bundle has been fully assigned yet.
We update $t$ to $t+\tau$ and continue as long as there exists an eligible agent-bundle pair such that the agent has not received a full allocation and the bundle remains available.
We claim that the continuous limit of this algorithm, i.e., the algorithm run with $\tau\to 0$, produces the desired allocation.
\begin{algorithm}[t]
\caption{Greedy algorithm for fractional envy-free allocations in assignment-MCRA}
\label{alg:greedy-ef}   
\SetAlgoNoLine
\KwIn{Group-homog. and fractionally feasible instance $\calI=(A,R,E,G,\omega,c,u)$ of assignment-MCRA}
\KwOut{fractional resource allocation $x$}
$x(a,q)\gets 0$ for every $(a,q)\in \calM$\;
$A^0\gets A$, $\calT^0\gets \calT$, $t\gets0$\tcp*{unsaturated agents and bundles}
\While{$\calM\cap (A^t\times \calT^t)\neq \emptyset$}{
    \For{$a\in A^t$}{
        $q^t_a \gets \arg\max\{u_a(q): q\in \calT_{a,E}\cap \calT^t\}$ \tcp*{favorite available bundle for $a$}
        $x(a,q^t_a) \gets x(a,q^t_a)+\tau$\;
        \If{$\sum_{q\in \calT_{a,E}}x(a,q)\geq 1$}{
            $A^{t+\tau}\gets A^t\setminus \{a\}$\tcp*{agent $a$ is saturated}
            $T(a)\gets t$
        }
    }
    \For{$q\in \calT^t$}{
        \If{$\sum_{(a,q')\in \calM}q'(r) \geq c(r)$ for some $r\in R$ with $q(r)\geq 1$}{
            $\calT^{t+\tau} \gets \calT^t\setminus \{q\}$\tcp*{bundle $q$ is saturated}
            $T(q)\gets t$
        }
    }
    $t\gets t+\tau$
}
$T(a)\gets t$ for all $a\in A$ for which $T(a)$ is undefined\;
$T(q)\gets t$ for all $q\in \calT$ for which $T(q)$ is undefined\;
{\bf return} $x$
\end{algorithm}

It is not hard to see that the algorithm terminates: For large enough $t$, agents or bundles become saturated.
This is because, as long as an agent is not saturated, its sum of assigned bundles grows. 
Thus, at $t=1$ all agents that remain active must become saturated and the algorithm terminates.

Call $x$ the allocation output by \Cref{alg:greedy-ef} with $\tau\to 0$.
That $x$ is a fractional allocation is not hard to see: 
Because of the updating conditions of the sets of unsaturated agents and bundles, we immediately have $\sum_{q\in \calT_{a,E}}x(a,q)\leq 1$ for every $a\in A$ and $\sum_{(a,q)\in \calM} q(r) x(a,q)\leq c(r)$ for every $r\in R$. 
To see the envy-freeness condition, suppose for the sake of contradiction that there exists $\ell\in [d]$ and $i,j\in [k_\ell]$ such that
\[
    \sum_{a\in G_{\ell,i}} \sum_{q\in \calT} u_{\ell,i}(q)\cdot x(a,q) < \frac{|G_{\ell,i}|}{|G_{\ell,j}|} \sum_{b\in G_{\ell,j}} \sum_{q\in \calT} u_{\ell,i}(q)\cdot x(b,q).
\]
Rearranging and applying the definition of $x$, we obtain
\[
    \frac{1}{|G_{\ell,i}|} \sum_{a\in G_{\ell,i}}\sum_{q\in \calT_{a,E}} u_{\ell,i}(q) \int_{0}^{T(a)}\mathds{1}_{q^t_a=q} \mathrm dt < \frac{1}{|G_{\ell,j}|} \sum_{b\in G_{\ell,j}}\sum_{q\in \calT_{b,E}} u_{\ell,i}(q) \int_{0}^{T(b)}\mathds{1}_{q^t_b=q} \mathrm dt.
\]
This implies that, for some $a\in G_{\ell,i}$, $b\in G_{\ell,j}$ and $t\in [0,T(a)]$, we have $u_{\ell,i}(q^t_a) < u_{\ell,j}(q^t_b)$, which contradicts the definition of $q^t_a$ as the available bundle that maximizes $a$'s utility.
\end{proof}

\bibliographystyle{abbrvnat}
\bibliography{bibliography}

\end{document}